\newtheorem{theorem}{Theorem}
\newtheorem{definition}{Definition}
\newtheorem{lemma}{Lemma}
\newtheorem{remark}{Remark}
\newtheorem{assumption}{Assumption}
\newlength\figureheight
\newlength\figurewidth
\newcommand{\B}{{\mathbb{B}}}
\newcommand{\R}{{\mathbb{R}}}
\newcommand{\I}{{\mathcal{I}}}
\newcommand{\M}{{\mathcal{M}}}
\DeclareMathOperator*{\argmin}{arg\; min}     
\DeclareMathOperator{\supp}{supp}
\DeclareMathOperator{\rad}{rad}
\DeclareMathOperator{\dist}{dist}
\DeclareMathOperator{\avg}{avg}
\DeclareMathOperator{\diag}{diag}
\DeclareMathOperator{\var}{var}
\DeclareMathOperator{\res}{res}
\DeclareMathOperator{\chv}{chv}
\DeclareMathOperator{\minimize}{minimize}
\DeclareMathOperator{\subjecto}{subject\,to}
\DeclareMathOperator{\trm}{trm}
\DeclareMathOperator{\mm}{mm}
\DeclareMathOperator{\mv}{mv}
\DeclareMathOperator{\id}{id}
\newcommand\addtag{\refstepcounter{equation}\tag{\theequation}}
\newcommand{\Rmnum}[1]{\expandafter\@slowromancap\romannumeral #1@}
\DeclareFontFamily{OT1}{pzc}{}
\DeclareFontShape{OT1}{pzc}{m}{it}{<-> s * [1.000] pzcmi7t}{}
\DeclareMathAlphabet{\mathpzc}{OT1}{pzc}{m}{it}
\title{Secure State Estimation with Byzantine Sensors: A Probabilistic Approach}
\author{Xiaoqiang Ren$^1$, Yilin Mo$^{2*}$, Jie Chen$^{3}$, and Karl H. Johansson$^{1}$
\thanks{$1$:
School of Electrical Engineering and Computer Science, KTH Royal Institute of Technology, 114 28 Stockholm, Sweden, (Emails: xiaren, kallej@kth.se)}%
\thanks{$2$: Department of Automation, Tsinghua University, Beijing, China,
	(Email: ylmo@tsinghua.edu.cn)}
\thanks{$3$: Department of Electronic Engineering, City University of Hong Kong, Hong Kong, China, (Email: jichen@cityu.edu.hk)}
  \thanks{$*$: Corresponding Author.}
}
\begin{document}
\maketitle
\begin{abstract}
  This paper studies static state estimation in multi-sensor settings, with a caveat that an unknown subset of the sensors are compromised by an adversary, whose measurements can be manipulated arbitrarily. The attacker is able to compromise $q$ out of $m$ sensors. 
  A new performance metric, which quantifies the asymptotic decay rate for the probability of having an estimation error larger than $\delta$, is proposed. We develop an optimal estimator for the new performance metric with a fixed $\delta$, which is the Chebyshev center of a union of ellipsoids. 
  We further provide an estimator that is optimal for every $\delta$, for the special case where the sensors are homogeneous. 
  Numerical examples are given to elaborate the results.

\end{abstract}
\begin{IEEEkeywords}
Security, Secure estimation,  Byzantine attacks, Large deviation
\end{IEEEkeywords}

\section{Introduction}
In cyber-physical systems, numerous sensors with limited capacity are spatially deployed and connected via ubiquitous wired and wireless communication networks. This makes it nearly impossible to guarantee the security of every single sensor or communication channel. Therefore, security problems of cyber-physical systems have attracted much attention recently, e.g.,~\cite{mo2012cyber,teixeira2015secure}.

Robust estimation has been studied over decades to deal with uncertainties of input data~\cite{hampel1974influence, kassam1985robust, huber2011robust}. The robustness is usually measured  by  influence functions or breakdown point, and several celebrated estimators have been developed, such as M-, L-, and R-estimators. The limitation of this robustness theory is the assumption that the bad data are independent~\cite{huber2011robust}, which, however, is not the case in general for cyber attacks. The fact that compromised sensors may cooperate and the estimation is done sequentially makes the ``bad'' data correlated both spatially and temporarily. 

Recently, dynamic state estimation with some Byzantine sensors has been discussed. Most approaches in the existing literature can be classified into two categories: stacked measurements~\cite{fawzi2014secure, pajic2017attack, mishra2017secure} and Kalman filter decomposition~\cite{mo2016secure, liu2017secure}. Fawzi \textit{et al.}~\cite{fawzi2014secure} used the stacked measurements from time $k$ to $k+T-1$ to estimate the state at time $k$ and provided $l_0$ and $l_1$-based state estimation procedures. Since deterministic systems are concerned, the $l_0$-based  procedure can exactly recover the state.  Pajic \textit{et al.}~\cite{pajic2017attack} extended the deterministic systems in~\cite{fawzi2014secure} to ones with bounded measurement noises and obtained upper bounds of estimation error for both $l_0$ and $l_1$-based estimators. Mishar \textit{et al.}~\cite{mishra2017secure}  studied stochastic systems with unbounded noises and proposed a notion of $\epsilon$-effective attack. The state estimation there is in essence an attack detection problem;  a Chi-squared test is applied to the residues and the standard Kalman filter output based on the measurements from the largest set of sensors that are deemed $\epsilon$-effective attack-free is used as the state estimate. Notice that to detect the $\epsilon$-effective attack-free sensors correctly with high probability, the window size $T$ must be large enough. 
The authors did not provide estimators before detection decisions are made. 
The authors of~\cite{mo2016secure, liu2017secure} used local estimators at each sensor and proposed a LASSO based fusion scheme. However, their approach imposes some strong constraints on the system dynamics. Furthermore, the estimate error of the proposed algorithm when there are indeed attacks is not specifically characterized.

In this paper, we deal with scenarios where noises are not necessarily bounded and give a different characterization of the estimator performance, i.e., the decaying rate of the worst-case probability that the estimation error is larger than some value $\delta$ rather than the worst-case error in~\cite{pajic2017attack, mo2016secure, liu2017secure} and estimation error covariance in~\cite{mishra2017secure}. This is partially motivated by the following three observations. Firstly, with unbounded noise, the worst-case estimation error might result in too conservative system designs.
Notice also that even for the bounded noise cases studied in~\cite{pajic2017attack}, the upper bound of the worst-case estimation error thereof increases with respect to (w.r.t.) the window size $T$, which counters intuition since more information should lead to better estimation accuracy. Secondly, to mitigate the bad effects caused by Byzantine sensors, one has to accumulate much enough information, i.e., the time window $T$  should be large enough. In this case, the decaying rate is able to characterize the probability well enough (just as, e.g.,~\cite{mishra2017secure}). Lastly, the system operator may pre-define the error threshold $\delta$ according to the performance specification, which leads to a more flexible system design.   

%

In the subsequent sections, we focus on the problem of secure static state estimation with Byzantine sensors.
A fusion center aims to estimate a vector state $x\in\R^n$ from measurements collected by $m$ sensors, among which $q$ sensors might be compromised. Without imposing any restrictions on the attacker's capabilities, we assume that the compromised sensors can send arbitrary messages.  Static state estimation has a wide range of applications in power system, where the power network states (i.e., bus voltage phase angles and bus voltage magnitudes) are estimated from measurements collected by Supervisory Control And Data Acquisition (SCADA) systems (e.g., transmission line power flows, bus power injections,
and part of the bus voltages) through remote terminal units (RTUs)~\cite{schweppe1974static, abur2004power}.  Considering the possibility that the RTUs are controlled and the communicated data from SCADA systems tampered with by malicious attackers, much work has devoted to security problems of power systems, e.g.~\cite{liu2011false, mo2011cyber, hendrickx2014efficient, sou2014data}.
The closest literature is~\cite{mo2015secure,HanMX15}, which, however, both focused on the one-shot scenario, while in this work the observations are taken sequentially, the possible temporal correlations of which make the analysis more challenging. 
We should also note that both~\cite{mo2015secure,HanMX15} used the worst-case estimate error as the performance metric rather than the probabilistic approach in this paper. Moreover, the main results of this work provide fundamental insights on the counterpart for dynamical systems that we are still investigating.


The main contributions of this work are summarized as follows.
\begin{enumerate}
  \item We propose a new metric to characterize the performance of an estimator when observation noise is not necessarily bounded and an attacker may be present. 
  \item We provide an optimal estimator for a given estimation error threshold $\delta$ (Theorem~\ref{theorem:optimalestimator}), which is the Chebyshev center of a union of ellipsoids. We then propose an  algorithm to compute the optimal estimator (Algorithm~\ref{alg:fstar} and Theorem~\ref{theorem:performanceguaAlg}). 
  \item When the sensors are homogeneous, we further provide a uniformly optimal estimator, i.e., simultaneously optimal for any error threshold $\delta$ (Theorem~\ref{theorem:uniform}). The estimator is just the ``trimmed mean" of the averaged observations.
\end{enumerate}

A preliminary version of this paper was presented in~\cite{ren2018secure}. The main difference is threefold. Firstly, new results have been provided in this paper, i.e., numerical implementation of our algorithm (Section~\ref{sec:numericalImple}) and uniformly optimal estimator design (Section~\ref{sec:uniformoptimal}). Secondly, in~\cite{ren2018secure}, only proofs of Lemmas~\ref{lemma:probgeometric}~and~\ref{lemma:probinter} were presented due to page limitation. Lastly, new simulations have been conducted in this paper for better illustration.


\emph{Organization:} In Section~\ref{sec:problem}, we formulate the problem of static state estimation with Byzantine sensors, including the attack model and performance metric.
The main results are presented in Section~\ref{sec:secure-estimator}.
We first prove that one may only consider estimators with certain ``nice" structures. 
Based on this, we then provide an optimal estimator for a given error threshold and propose an algorithm to compute the optimal estimator.
Furthermore, a very simple yet uniformly optimal estimator when sensors are homogeneous is provided in Section~\ref{sec:uniformoptimal}.
After showing numerical examples in Section~\ref{sec:simulation}, we conclude the paper in Section~\ref{sec:conclusion}.
All proofs are reported in the appendix.

\emph{Notations}: $\R$ ($\mathbb{R}_{+}$) is the set of (nonnegative) real numbers. $\mathbb{N}$ ($\mathbb{N}_{+}$) is the set of nonnegative (positive) integers. For a vector $x\in\R^n$, define $\|x\|_0$ as the ``zero norm'', i.e., the number of nonzero elements of the vector $x$.  
For a vector $x\in\R^n$, the support of $x$, denoted by $\supp(x)$, is the set of indices of nonzero elements:
\[\supp( x) \triangleq \{i\in\{1,2,\ldots,n\}:  x_i \neq 0\}.\] 
Define ${\bm 1}$ as the column vector of ones and the size clear from the context if without further notice. Let  ${\bm I}_n$ be the identity matrix of size $n\times n$.
For a matrix ${\bm M}\in\R^{m\times n}$, unless stated otherwise, ${\bm M}_i$ represents the $i$-th row, and  ${\bm M}_{\I}$  the matrix obtained from ${\bm M}$ after removing all of the rows except those in the index set $\I$. 
For a set of matrices $\mathcal{A}\subseteq \R^{m\times n}$, we use $\mathcal{A}_{\mathcal I}$ to denote the set of matrices keeping rows indexed by $\mathcal I$, i.e.,
\[\mathcal{A}_{\mathcal I} \triangleq \{{\bm M}_{\I}: {\bm M}\in\mathcal{A}\}.\]
For a set $\mathcal{A}$, define the indicator function as $\mathbbm{1}_{\mathcal{A}}(x) = 1$, if $x\in\mathcal{A}$; $0$ otherwise.
The cardinality of a set $\mathcal A$ is denoted as $|\mathcal A|$. Let ${\bm M}^{\top}$ denote the transpose of the matrix ${\bm M}$. We write ${\bm M} \succcurlyeq {\bm N}$ if ${\bm M} - {\bm N}$ is a positive semi-definite matrix.

\section{Problem Formulation}
\label{sec:problem}
\subsection{System Model}  \label{sec:StaticEstimation}
Consider the problem of estimating the state $x\in\R^n$ using $m$ sensor measurements as depicted in Fig.~\ref{Fig:blockdiagram}. Let $\M \triangleq \{1,\ldots,m\}$ be the index set of all the sensors. The measurement equation for sensor $i\in\M$ is
\begin{align*}
 z_i(k) = H_ix + w_i(k),
\end{align*}
where $z_i(k)\in\R$ is the 
(``true'') measurement collected by the sensor $i$ at time $k\in\mathbb N_+$,
$H_i\in\R^{1\times n}$ is the output matrix associated with sensor $i$, $w_i(k)\in\R$ is the observation noise. It is assumed that $w_i(k)$ is Gaussian distributed with zero mean and variance $\mathbb E[(w_i(k))^2] = W_i>0$ for any $i,k$\footnote{Actually, the main results  in this paper hold for any noise distribution in the exponential family; the details are discussed in Remark~\ref{remark:exponentialfamily}}. Furthermore, $w_i(k)$ are independent across the sensors and over time, i.e., $\mathbb E[w_{i_1}(k_1)w_{i_2}(k_2)] = 0$ if $i_1\neq i_2$ or $k_1\neq k_2$.

In the presence of attacks, the measurement received by the fusion center is $y_i(k)$, with satisfies the following equation:
\begin{align*}
y_i(k) = z_i(k) + a_i(k),
\end{align*}
where $a_i(k)\in \R$ is the bias injected by the attacker.
\begin{center}
	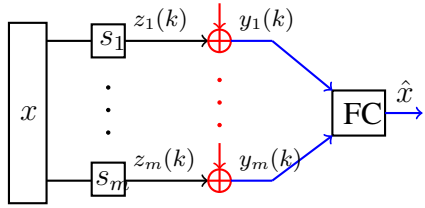
\begin{figure}[ht]
		\centering
		\begin{tikzpicture}
\draw [thick] (0.1,0.1) rectangle (0.54,0.54);
\node [right] at (0,0.3) {$s_m$};

\draw [fill] (0.32,0.94) circle [radius=0.022];
\draw [fill] (0.32,1.24) circle [radius=0.022];
\draw [fill] (0.32,1.54) circle [radius=0.022];

\draw [thick] (0.1,1.94) rectangle (0.54,2.38);
\node [right] at (0.05,2.16) {$s_1$};

\draw [->,thick] (0.54,2.16) -- (1.65,2.16);
\draw [thick, red] (1.8,2.16) circle [radius=0.15];
\draw [-,thick,red] (1.65,2.16) -- (1.95,2.16);
\draw [-,thick,red] (1.8,2.01) -- (1.8,2.31);
\draw [->,thick,red] (1.8,2.66) -- (1.8,2.31);

\draw [->,thick] (0.54,0.3) -- (1.65,0.3);
\draw [thick, red] (1.8,0.3) circle [radius=0.15];
\draw [-,thick,red] (1.65,0.3) -- (1.95,0.3);
\draw [-,thick,red] (1.8,0.15) -- (1.8,0.45);
\draw [->,thick,red] (1.8,0.8) -- (1.8,0.45);

\draw [fill, red] (1.8,1.04) circle [radius=0.022];
\draw [fill, red] (1.8,1.34) circle [radius=0.022];
\draw [fill, red] (1.8,1.64) circle [radius=0.022];

\draw [thick] (-1,0) rectangle (-0.5,2.4);
\node [right] at (-1,1.2) {$x$};
\draw [-,thick] (-0.5,0.3) -- (0.1,0.3);
\draw [-,thick] (-0.5,2.16) -- (0.1,2.16);

\draw [-,thick, blue] (1.95,0.3) -- (2.5,0.3);
\draw [-,thick, blue] (1.95,2.16) -- (2.5,2.16);

\draw [thick] (3.3,0.9) rectangle (4,1.5);
\node [right] at (3.3,1.2) {FC};
\draw [->,thick, blue] (2.5,2.16) -- (3.3,1.5);
\draw [->,thick, blue] (2.5,0.3) -- (3.3,0.9);
\draw [->,thick, blue] (4,1.2) -- (4.5,1.2);
\node [above right] at (4,1.2) {$\hat{x}$};

{\footnotesize
\node [above right] at (0.54,0.3) {$z_m(k)$};
\node [above right] at (0.54,2.16) {$z_1(k)$};
\node [above right] at (1.95,0.3) {$ y_m(k)$};
\node [above right] at (1.95,2.16) {$ y_1(k)$};
}

\end{tikzpicture}
		\caption{The fusion center (FC) estimates the underlying state $x$ using sensor measurements  that might be manipulated.}
		\label{Fig:blockdiagram}
	\end{figure}
\end{center}
 We assume the attacks are $q$-sparse:
\begin{assumption}[$q$-sparse attack] \label{assumpt:sparseattack}
There exists an index set $\mathcal{C} \subseteq \M$ such that
\begin{enumerate}
  \item for any sensor $i\in\M\setminus\mathcal C$, $a_i(k)=0$ for any time $k$.
  \item $|\mathcal C|=q$.
\end{enumerate}
\end{assumption}
The sparse attack model, which is conventional in the literature~\cite{fawzi2014secure, pajic2017attack, mishra2017secure,  mo2016secure, liu2017secure, fellouris2017efficient,  xiaoqiang_securedetection, HanMX15}, says that the set of compromised sensors is somewhat ``constant'' over time.   This
is in essence the only restriction we impose on the attacker's capability. The bias $a_i(k)$ of a compromised sensor may take any value and might be correlated across sensors and over time. If the set of compromised sensors is time-varying, the estimators (or detectors) in \emph{all} the aforementioned literature will be destroyed. That is, the estimators (or detectors) could not work at all or the error could be arbitrarily large. In this paper, without this constant property, even Lemma~\ref{lemma:compressed_estimator} provided later (in particular, e.g.,~\eqref{eqn:optCompress}~and~\eqref{eqn:optCompress1}), which is the basis for Theorems~\ref{theorem:restrict_estimator}~and~\ref{theorem:optimalestimator}, would not hold.
\begin{assumption}[System  knowledge] \label{assumpt:knowledge2}
	The system designer knows the number $q$, but does not know the exact set of compromised sensors $\mathcal C$. 
\end{assumption}
The quantity $q$ might be determined by the \textit{a priori} knowledge about the quality of each sensor.
Alternatively, the quantity $q$ may be viewed as a design parameter, which indicates the resilience level  that the designer is willing to pay for.  One finds more comments about the above assumption in Remark~\ref{remark:parameterq}.

Let ${\bm H}= [H_1^\top, H_2^\top, \ldots, H_m^\top]^\top$ be the measurement matrix. 
We assume that the matrix ${\bm H}$ is $2q$-observable:
\begin{assumption} \label{assumpt:2qobservable}
	The measurement matrix ${\bm H}$ is $2q$-observable, i.e., for every set $\I\subseteq\M$ with $|\I|=m-2q$, the matrix ${\bm H}_{\I}$ is of full column  rank.
\end{assumption}
It has been shown in~\cite{fawzi2014secure} that  $2q$-observability of the measurement matrix is a necessary and sufficient condition to recover the exact state under $q$-sparse attacks when there are no observation noises. One finds the results if Assumption~\ref{assumpt:2qobservable} is violated in Lemma~\ref{lemma:vioassumpt2} later.  Notice that in power systems, measurement redundancy is a common practice~\cite{abur2004power}. 

To introduce the knowledge available at the attacker, we need the following definitions. 
Define the measurement from all sensors at time $k$ to be a column vector:
\begin{equation}
{\bm y}(k)\triangleq \begin{bmatrix}
y_1(k)&y_2(k)&\ldots&y_m(k)
\end{bmatrix}^\top\in \mathbb R^m.
\label{eq:yvector}
\end{equation}
We further define ${\bm Y}(k)$ as a matrix of all measurements from time $1$ to time $k$:
\begin{equation}
{\bm Y}(k) \triangleq \begin{bmatrix}
{\bm y}(1)&{\bm y}(2)&\ldots&{\bm y}(k)
\end{bmatrix}\in \mathbb R^{m\times k}.
\label{eqn:bigY}
\end{equation}
The quantities ${\bm a}(k), {\bm A}(k)$ are defined in the same manner.  At time $k$, given measurements from all the sensors ${\bm Y}(k)$, the fusion center generates a state estimate $\hat{x}_k$. The estimator $f$ might be random, i.e., given ${\bm Y}(k)$, $\hat{x}_k$ is a random variable governed by certain probability measure on $\R^n$ determined by~$f$.   

\begin{assumption}[Attacker's knowledge] \label{assumpt:knowledge}
	It is assumed that
	\begin{enumerate}
		\item the attacker knows the true state $x$;
		\item the attacker knows the estimator $f$, the system parameters (i.e., each $H_i$ and $W_i$), and can access the historical and current observations from the compromised sensors.
	\end{enumerate}
\end{assumption}
The above assumption as a whole has been adopted in literature on sparse attack, e.g.,~\cite{xiaoqiang_securedetection,fellouris2017efficient, marano2008distributed}, while the second bullet prevails in literature on data-injection attack, e.g.,~\cite{liu2011false,mishra2017secure, smith2011decoupled}. The parameters $H_i$ and $W_i$ might be developed by an attacker using the \textit{a priori} knowledge of the underlying physical model.  To obtain the true state, the attacker may deploy its own sensor network. Though it might be difficult in practice to obtain the accurate parameters and true state for an attacker, this assumption is de facto when dealing with potential worst-case attacks. We should note that this assumption is in accordance with the Kerckhoffs's principle~\cite{shannon1949communication}, namely the security of a system should not rely on its obscurity. Interested readers are referred to~\cite{teixeira2012attack}  to see more attack models in cyber-physical systems. This assumption is leveraged later to define the performance metric in~\eqref{eqn:errorProb} and characterize the attack capacity in Theorems~\ref{theorem:restrict_estimator}~and~\ref{theorem:optimalestimator}. In particular, one finds more on how Assumptions~\ref{assumpt:sparseattack}~and~\ref{assumpt:knowledge} are utilized to derive~\eqref{eqn:optCompress} in Remark~\ref{remark:relation} later.
\subsection{Performance Metric}
At time $k$, given the measurements ${\bm Y}(k)_{\mathcal{C}}$, the bias ${\bm A}(k-1)$, the set of compromised sensors $\mathcal C$,  and true state $x$, the bias ${\bm a}(k)$ is generated according to some probability measure on $\R^m$. This bias injection mechanism is denoted by $g$. Let $\mathcal G$ be the set of all attack strategies such that the generated bias ${\bm a}(k)$ satisfies the $q$-sparse attack model in Assumption~\ref{assumpt:sparseattack}.

In this paper, we are concerned with the worst-case scenario. Given an estimator $f$, we define
\begin{equation}  \label{eqn:errorProb}
  e(f,k,\delta) \triangleq \sup_{ \mathcal{C}\subseteq\mathcal{M},g\in\mathcal{G}, x\in\R^n} \mathbb{P}_{f,g,x,\mathcal{C}}\left(  \|\hat x_k -x \|_2 > \delta    \right)
\end{equation}
as the worst-case probability that the distance between the estimate at time $k$ and the true state is larger than a certain value $\delta \in\R_+$ considering all possible attack strategies, the set of compromised sensors and the true state. We use $\mathbb{P}_{f,g,x,\mathcal{C}}$ to denote the probability measure governing $\hat{x}_k$ when the estimator $f$, attack strategy $g$, the true state $x$, and the set of compromised sensors $\mathcal{C}$ are given.

Ideally, one wants to design an estimator $f$ such that $e(f,k,\delta)$ is minimized at any time $k$ for any $\delta$. However, it is quite difficult to analyze $e(f,k,\delta)$ when $k$ takes finite values since computing the probability of error usually involves numerical integration.
Therefore,  we consider an asymptotic estimation performance, i.e., the exponential rate with which the worst-case probability  goes to zero:
\begin{equation} \label{eqn:RateObj}
 r(f,\delta) \triangleq \liminf_{k\rightarrow\infty} - \frac{\log e(f,k,\delta)}{k}.
\end{equation}
 Obviously, for any $\delta$, the system designer would like to maximize $r(f,\delta)$ by choosing a suitable estimator $f$. 

The threshold $\delta$ is chosen by the designer in accordance with system accuracy
 requirement by noticing that a true state $x$ is perceived as the same with any point $x'$ lying inside its neighbourhood, i.e., $ \|x' -x \|_2 \leq \delta$ by the above performance metric. However, in some cases (see Section~\ref{sec:uniformoptimal}), there is no need to determine $\delta$ since one can find an estimator that simultaneously maximizes $r(f,\delta)$ for all $\delta$.

\subsection{Problems of Interest} \label{sec:problemofintrest}
The following three problems are to be addressed.
\begin{enumerate}
  \item  \emph{Performance limit.} For any $\delta$, what is the maximal rate $r(f,\delta)$ that can be achieved by all possible estimators?
  \item  \emph{Optimal estimator.}  Given $\delta$, what is the optimal estimator that maximizes  $r(f,\delta)$?
  \item  \emph{Uniform optimality.} Is there an estimator that simultaneously maximizes $r(f,\delta)$ for all $\delta >0$?
  
\end{enumerate}

\section{Optimal Estimator}
\label{sec:secure-estimator}
In this section, the first two problems in Section~\ref{sec:problemofintrest} shall be addressed. We provide an estimator based on Chebyshev centers,  prove its optimality, and further present a numerical algorithm to implement it.

\subsection{Compressed and Deterministic Estimator}
A generic estimator $f_k$ might \emph{randomly} generate an estimate $\hat{x}_k$ based on \emph{all} the information contained in ${\bm Y}(k)$. In other words, given ${\bm Y}(k)$, the estimate $\hat x_k$ might be a random variable; and if any element (totally there are $m\times k$) of two observation matrices, say ${\bm Y}(k)$ and ${\bm Y}'(k)$, is different,  the corresponding probability distributions of the estimate $\hat x_k$ might be different. In this subsection, however, we shall show that, without loss of optimality, one may only consider estimators with certain ``nice'' structure (i.e., the compressed and deterministic estimators defined in Definition~\ref{def:deterministicandcompressed} later).


Define an operator $\avg(\cdot)$ that averages each row of the inputed real-valued matrix, i.e., for any matrix ${\bm M}\in\R^{n_1\times n_2}$, 
\[\avg({\bm M}) \triangleq {\bm M}{\bm 1}/n_2.\]
Hence, $\avg({\bm Y}(k))$ is a vector in $\R^m$ and the $i$-th element is the empirical mean of the observation from time $1$ to $k$ available for sensor $i$.

We use $\mathbb{P}_f(\hat{x}_k|{\bm Y}(k))$ to denote the conditional probability measure of estimate  $\hat{x}_k$ given any estimator $f$ and the information ${\bm Y}(k)$. 
Notice that an estimator $f$ can be completely characterized by the sequence of conditional probability measures from time $1$ to $\infty$: $(\mathbb{P}_f(\hat{x}_1|{\bm Y}(1)), \mathbb{P}_f(\hat{x}_2|{\bm Y}(2)),\ldots   )$. 
\begin{definition}
	An estimator $f$ is said to be compressed if at each time $k$, it only utilizes the averaged information $\avg({\bm Y}(k))$ to generate estimate $\hat{x}_k$, i.e., the conditional probability measures satisfy
	\begin{align}
	\mathbb{P}_f(\hat{x}_k\in\mathcal{A}|{\bm Y}(k)) = \mathbb{P}_f(\hat{x}_k\in\mathcal{A}|{\bm Y}'(k))
	\end{align}
	for any Borel set $\mathcal{A}\subseteq\R^n$ whenever $\avg({\bm Y}(k)) = \avg({\bm Y}'(k))$.
\end{definition}
Let $\mathcal F$ ($\mathcal F_{\rm c}$, resp.) be the set of all possible (compressed, resp.) estimators. In the following lemma, we show that it suffices to consider an estimator in $\mathcal F_{\rm c}$.
\begin{lemma} \label{lemma:compressed_estimator}
For any estimator $f\in\mathcal{F}$, there exists another compressed estimator $f'\in\mathcal{F}_{\rm c}$ such that for all $\delta>0$,
	\[e(f',k,\delta)  \leq e(f,k,\delta), \quad k=1,2,\ldots\]
\end{lemma}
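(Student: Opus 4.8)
The plan is to exploit the statistical sufficiency of the row averages for the honest sensors under the Gaussian model, and to show that whatever extra structure a general estimator reads off beyond $\avg(\bm Y(k))$ can be re-simulated internally without ever lowering the worst-case error probability. Concretely, I would fix the time $k$ and, for each sensor $i$, introduce the Markov kernel $K_i^{(k)}(\cdot\mid c)$ defined as the conditional law of the full row $(y_i(1),\dots,y_i(k))$ given that its average equals $c$, computed under the nominal model $y_i(t)=H_ix+w_i(t)$ with $w_i(t)\sim\mathcal N(0,W_i)$ i.i.d. Since $\tfrac1k\sum_t y_i(t)$ is a sufficient statistic for the Gaussian mean $H_ix$, the Fisher--Neyman factorization guarantees that $K_i^{(k)}$ does \emph{not} depend on $x$: it is a centered Gaussian law supported on the hyperplane $\{v:\avg(v)=c\}$, determined only by $W_i$ and $k$, and hence known to the designer.

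Given a general $f\in\mathcal F$, I would then define the compressed estimator $f'$ operationally: on receiving $\bar y=\avg(\bm Y(k))$, it draws a synthetic matrix $\tilde{\bm Y}(k)$ by sampling each row independently as $\tilde y_i(\cdot)\sim K_i^{(k)}(\cdot\mid\bar y_i)$, and outputs $\hat x_k=f(\tilde{\bm Y}(k))$. By construction $f'$ uses only $\avg(\bm Y(k))$ together with internal randomness, so $f'\in\mathcal F_{\rm c}$. The crux is then a coupling argument: fixing $\mathcal C$, $x$, and an arbitrary attack $g\in\mathcal G$ used against $f'$, I would exhibit an admissible attack $g'\in\mathcal G$ against $f$, with the \emph{same} compromised set $\mathcal C$, under which the matrix $\bm Y(k)$ seen by $f$ has exactly the same law as $\tilde{\bm Y}(k)$ does under $(f',g)$. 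Two facts drive this. First, for honest $i\notin\mathcal C$, composing the true average $\bar y_i=H_ix+\bar w_i$ (with $\bar w_i\sim\mathcal N(0,W_i/k)$) with $K_i^{(k)}(\cdot\mid\bar y_i)$ reconstructs exactly the genuine honest Gaussian law of that row, independently across sensors. Second, for compromised $i\in\mathcal C$ the rows of $\tilde{\bm Y}(k)$ carry some joint law $\mu$ on $(\R^k)^{\mathcal C}$, independent of the honest rows, and a randomized attack $g'$ can realize any such $\mu$ by sampling the target rows $\{y_i(\cdot)\}_{i\in\mathcal C}\sim\mu$ from its own randomness and injecting $a_i(\cdot)=y_i(\cdot)-z_i(\cdot)$; this is admissible because $g'$ may be randomized, may ignore the observed $z_i$, and remains $q$-sparse.

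With the two laws identified, $\hat x_k=f(\bm Y(k))$ under $(f,g')$ has the same distribution as $\hat x_k$ under $(f',g)$, so $\mathbb P_{f',g,x,\mathcal C}(\|\hat x_k-x\|_2>\delta)=\mathbb P_{f,g',x,\mathcal C}(\|\hat x_k-x\|_2>\delta)$ for \emph{every} $\delta$ simultaneously. Taking the supremum over $g$ (equivalently over the induced $g'$), $\mathcal C$, and $x$ on both sides yields $e(f',k,\delta)\le e(f,k,\delta)$ for all $\delta>0$ and all $k$, which is the claim.

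I expect the main obstacle to be the second driving fact: verifying rigorously that the joint law $\mu$ of the compromised rows produced by the $f'$-simulation is reproducible by an \emph{admissible} attack against $f$, and that this reproduction can be made independent of the honest rows. One must check that the attacker's information (the true $x$ and the compromised observations $z_i$) together with its freedom to randomize suffices to synthesize $\mu$ exactly, and dispatch the measure-theoretic technicalities---existence and regularity of the conditional kernels $K_i^{(k)}$ and of $\mu$---which are benign since all laws live on Polish spaces. The sufficiency step is the conceptual heart but becomes routine once the kernel is shown to be free of $x$; I would also note in a remark that the identical argument goes through for any exponential-family noise, since the empirical mean of the natural sufficient statistic is again sufficient.
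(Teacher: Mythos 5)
Your proposal is correct and follows essentially the same route as the paper: the compressed estimator is constructed exactly as in the paper's equation defining $f'$, namely by resampling the full measurement matrix from the ($x$-free, by sufficiency) conditional law of $\bm Z(k)$ given $\avg(\bm Z(k))$ and then applying $f$. The only difference is presentational—you phrase the comparison as an explicit attack-matching/coupling argument producing a $g'$ with identical induced law, whereas the paper bounds the resulting integral over the compromised block by a maximum over that block—but the underlying mechanism is identical.
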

\begin{proof}
	See Appendix~\ref{appendix:compressed}.
\end{proof}
\begin{remark}  \label{remark:sufficient}
	Intuitively, only measurements from benign sensors provide ``useful information'' needed to estimate the underlying state, while under the most harmful attack,  compromised sensors will merely generate disturbing noises.  In our case, the averaged information $\avg({\bf Y}(k))$ can \emph{fully} summarize the information contained in measurements from benign sensors due to the fact that $\avg({\bf Y}(k))$ is a sufficient statistic for the underlying state $x$ when there is no attacker. Therefore, it suffices to consider a compressed estimator that only utilizes the averaged information each time. This might be counterintuitive as one expects that with more information, i.e., using raw data ${\bf Y}(k)$, the compromised sensors could be detected more easily and, thus, better performance could be achieved. This, however, is not the case.
\end{remark}

\begin{remark} \label{remark:exponentialfamily}
Lemma~\ref{lemma:compressed_estimator} says that $\avg({\bf Y}(k))$ is a sufficient statistic for the underlying state $x$ whether or not the attacker is present. In fact, one may verify, using the same idea in Appendix~\ref{appendix:compressed}, in particular, the construction technique in~\eqref{eqn:constructf1}, that Lemma~\ref{lemma:compressed_estimator} holds if the distribution of $w_i(k)$ is in the exponential family and not necessarily Gaussian as we assume.  This is mainly due to the fact that, if the distribution of a one-shot observation is in the exponential family, the sufficient statistic of a set of i.i.d. observations is simply the sum of individual sufficient statistics, the size of which will not increase as data accumulate.  
\end{remark}

In the following, we refine the set $\mathcal{F}$ from another perspective. 
\begin{definition}
	An estimator $f$ is said to be deterministic w.r.t ${\bm Y}(k)$ if for every time $k$ and observations ${\bm Y}(k)$, the estimate $f({\bm Y}(k))$ is a single point in $\R^n$.
\end{definition}
Let $\mathcal F_{\rm d}$ be the set of all estimators that are deterministic w.r.t. ${\bm Y}(k)$. Then similar to the above lemma we have
\begin{lemma} \label{lemma:cd_estimator}
	For any estimator  $f\in\mathcal{F}$, there exists another deterministic one $f'\in\mathcal{F}_{\rm d}$ such that for all $\delta>0$
	\[r(f',\delta)  \geq r(f,\delta).\]
\end{lemma}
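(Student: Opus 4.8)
The plan is to derandomize $f$ \emph{pointwise in the observation}: for each ${\bm Y}(k)$ I would collapse the conditional law $\mu\triangleq\mathbb{P}_f(\hat x_k\in\cdot\mid{\bm Y}(k))$ of the random estimate onto a single, cleverly chosen point $y(\mu)$, in such a way that the error probability grows by at most a fixed multiplicative constant \emph{at the same threshold} $\delta$. Concretely, I would aim for a deterministic map $\mu\mapsto y(\mu)$ satisfying
\[
\mathbbm 1\{\|y(\mu)-x\|_2>\delta\}\le C\,\mu(\{z:\|z-x\|_2>\delta\})
\]
for every $x\in\R^n$ and every $\delta>0$, with one constant $C$. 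Given this pointwise inequality, integrating over ${\bm Y}(k)$ under an arbitrary triple $(\mathcal C,g,x)$ and taking the supremum gives $e(f',k,\delta)\le C\,e(f,k,\delta)$ for the deterministic estimator $f'({\bm Y}(k))\triangleq y(\mathbb{P}_f(\hat x_k\in\cdot\mid{\bm Y}(k)))$, and then
\[
r(f',\delta)=\liminf_{k}-\frac{\log e(f',k,\delta)}{k}\ge\liminf_{k}\Big(-\frac{\log C}{k}-\frac{\log e(f,k,\delta)}{k}\Big)=r(f,\delta),
\]
since $\log C/k\to0$. This is exactly the claim, and for all $\delta$ at once, because neither $y(\cdot)$ nor $C$ depends on $\delta$.

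The crucial (and only delicate) point is choosing $y(\mu)$ so the comparison holds at the \emph{same} $\delta$. A naive representative, such as the center of the smallest ball carrying half the mass or the geometric median, only yields the inequality with $2\delta$ on the left-hand side, hence merely $r(f',\delta)\ge r(f,\delta/2)$, which is strictly weaker because $r(f,\cdot)$ is nondecreasing. I would instead take $y(\mu)$ to be a \emph{centerpoint} (Tukey median) of $\mu$: by the centerpoint theorem (a consequence of Helly's theorem, valid for every Borel probability measure on $\R^n$) there is a point $y$ such that every closed halfspace containing $y$ has $\mu$-mass at least $1/(n+1)$. The elementary geometric fact I would invoke is that, with $r=\|y-x\|_2$, the ball exterior $\{z:\|z-x\|_2\ge r\}$ contains the halfspace $\{z:\langle z-y,\,y-x\rangle\ge0\}$ (expand $\|z-x\|_2^2=\|z-y\|_2^2+2\langle z-y,y-x\rangle+\|y-x\|_2^2$ and drop the nonnegative cross term). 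Hence whenever $\|y-x\|_2>\delta$ one has $\{z:\|z-x\|_2>\delta\}\supseteq\{z:\langle z-y,y-x\rangle\ge0\}$, a set of mass $\ge1/(n+1)$, which establishes the displayed pointwise inequality with $C=n+1$.

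The remaining steps are routine. I would check that $\mu_k(\cdot\mid{\bm Y}(k))$ is a measurable kernel in ${\bm Y}(k)$ and that the centerpoint region (nonempty, closed, convex) admits a measurable selection via a standard measurable-selection theorem, so that $f'\in\mathcal F_{\rm d}$ is a genuine estimator; then assemble $e(f',k,\delta)\le(n+1)\,e(f,k,\delta)$ and conclude with the liminf computation above. I expect the main obstacle to be precisely the ``same-$\delta$'' requirement: it forces a depth-based choice (the centerpoint) rather than a proximity-based one, and it is exactly why the constant must be allowed to depend on the dimension $n$ — which is harmless, since it washes out after dividing by $k$. Measurability of the selection is a secondary technical nuisance rather than a genuine difficulty.
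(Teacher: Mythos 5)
Your proof is correct, and its overall skeleton is the same as the paper's: derandomize $f$ pointwise in $\bm{Y}(k)$ by replacing the conditional law of $\hat x_k$ with a single ``deep'' point, pay a multiplicative factor $n+1$ in the error probability via a Helly-type argument, and observe that this constant vanishes in the exponential rate. The difference is in the key lemma used to produce that point. The paper defines the set $\mathcal T_\delta$ of $\delta$-typical points (those $x$ with $\mathbb{P}(\hat x_k\in\mathcal B_\delta(x))>n/(n+1)$), shows via Helly's theorem that $\mathcal T_\delta$ is contained in a ball of radius $\delta$ (Lemma~\ref{lemma:probgeometric}), and takes $f'(\bm Y(k))$ to be the center of that ball; the resulting inequality $1-\mathbbm 1_{\mathcal B_\delta(x)}(x^*)\le(n+1)\mathbb{P}(\hat x_k\notin\mathcal B_\delta(x))$ is exactly your pointwise bound with $C=n+1$. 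You instead invoke the centerpoint (Tukey median) theorem for Borel measures and the halfspace-containment identity $\|z-x\|_2^2=\|z-y\|_2^2+2\langle z-y,y-x\rangle+\|y-x\|_2^2$, which is a clean and correct alternative. Your route buys something the paper's does not: the centerpoint is scale-free, so a \emph{single} $f'$ satisfies the bound for every $\delta$ simultaneously, matching the quantifier order of the lemma statement literally, whereas the paper's $\delta$-center (and hence its $f'$) depends on $\delta$ and strictly speaking only proves the statement for each fixed $\delta$ separately (which suffices for its later use). Your observation that proximity-based representatives only give a $2\delta$-versus-$\delta$ comparison is accurate and explains why a depth-based choice is needed. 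The measurable-selection issue you flag is real but is equally unaddressed in the paper, so it is not a gap relative to the intended level of rigor.
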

\begin{proof}
	See Appendix~\ref{appendix:cd_estimator}.
\end{proof}

Based on the above two lemmas, we further refine $\mathcal{F}$. 
\begin{definition} \label{def:deterministicandcompressed}
	An estimator $f$ is said to be compressed and deterministic if it is deterministic w.r.t. $\avg({\bm Y}(k))$, i.e., there exists a sequences of functions $\{\tilde{f}_k\}_{k=1,2,\ldots}$ with $\tilde{f}_k:\R^m\to\R^n$ such that the estimate at each time $k$
	\[f({\bm Y}(k)) = \tilde{f}_k(\avg({\bm Y}(k))).\]
\end{definition}


Let $\mathcal F_{\rm cd}$ be the set of all compressed and deterministic estimators. Obviously, $\mathcal{F}_{\rm cd}\subseteq \mathcal{F}_{\rm c}, \mathcal{F}_{\rm cd}\subseteq \mathcal{F}_{\rm d}$.
In the following theorem, we show that instead of $\mathcal F$, one may only consider the  set $\mathcal F_{\rm cd}$ for our problem. 
\begin{theorem} \label{theorem:restrict_estimator}
For any estimator $f\in\mathcal{F}$, there exists another compressed and deterministic estimator $f'\in\mathcal{F}_{\rm cd}$ such that
	\[r(f',\delta)  \geq r(f,\delta), \quad \forall \delta>0 .\]
\end{theorem}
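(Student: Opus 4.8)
The plan is to chain the two preceding lemmas, applying them in the order ``compress, then determinize,'' and then to verify that the determinization step does not destroy the compressed structure established first. I would carry this out in three steps.

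First, I would invoke Lemma~\ref{lemma:compressed_estimator} to pass from an arbitrary $f\in\mathcal{F}$ to a compressed estimator $f_1\in\mathcal{F}_{\rm c}$ satisfying $e(f_1,k,\delta)\le e(f,k,\delta)$ for every $k$ and every $\delta>0$. The immediate task is to convert this pointwise-in-$k$ error bound into a statement about the rate functional $r$. Since $t\mapsto -\log(t)/k$ is decreasing, the inequality $e(f_1,k,\delta)\le e(f,k,\delta)$ gives $-\log e(f_1,k,\delta)/k \ge -\log e(f,k,\delta)/k$ for each $k$, and taking $\liminf_{k\to\infty}$ on both sides preserves the inequality. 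Hence $r(f_1,\delta)\ge r(f,\delta)$ for all $\delta>0$.

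Next, I would apply Lemma~\ref{lemma:cd_estimator} to $f_1$, obtaining a deterministic estimator $f'\in\mathcal{F}_{\rm d}$ with $r(f',\delta)\ge r(f_1,\delta)$ for all $\delta>0$. Combined with the first step this yields $r(f',\delta)\ge r(f,\delta)$, which is exactly the desired rate inequality. The only remaining point is to show that $f'$ lies in $\mathcal{F}_{\rm cd}=\mathcal{F}_{\rm c}\cap\mathcal{F}_{\rm d}$, i.e., that $f'$ is not merely deterministic but also compressed.

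The crux is therefore that the two refinements are compatible in this order: the determinization of Lemma~\ref{lemma:cd_estimator} must preserve the compressed structure of $f_1$. Here I would appeal to the construction used in the appendix proof of Lemma~\ref{lemma:cd_estimator}, in which the deterministic estimate at time $k$ is selected as a function of the conditional law $\mathbb{P}_{f_1}(\hat{x}_k\in\cdot\mid{\bm Y}(k))$ (for instance, a point at which the conditional probability of error is no larger than its average under that law). Because $f_1$ is compressed, this conditional law depends on the data only through $\avg({\bm Y}(k))$; consequently the selected point is itself a function of $\avg({\bm Y}(k))$ alone, so there exist functions $\tilde{f}_k$ with $f'({\bm Y}(k))=\tilde{f}_k(\avg({\bm Y}(k)))$, placing $f'\in\mathcal{F}_{\rm cd}$. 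I expect verifying this compatibility---that the determinization procedure reads the data only through the conditional distribution and hence inherits whatever sufficiency reduction is already in force---to be the main, and essentially the only nontrivial, obstacle. The order of application matters: carrying out the reductions in the reverse order would produce a compressed estimator with no guarantee of being deterministic, so the argument genuinely relies on determinization respecting compressedness rather than the converse.
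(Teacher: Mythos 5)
Your proposal is correct and follows essentially the same route as the paper: the paper likewise first reduces to a compressed estimator via Lemma~\ref{lemma:compressed_estimator}, then applies the $\delta$-center determinization from the proof of Lemma~\ref{lemma:cd_estimator}, observing exactly as you do that the selection depends on the data only through the law of $f({\bm Y}(k))$, which is determined by $\avg({\bm Y}(k))$, so the resulting estimator lands in $\mathcal{F}_{\rm cd}$. The compatibility point you flag as the crux is precisely the one sentence the paper devotes to justifying existence of $f'\in\mathcal{F}_{\rm cd}$.
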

\begin{proof}
	See Appendix~\ref{appendix:theoremRestEst}.
\end{proof}

\subsection{Optimal Estimator Based on Chebyshev Centers}
In this subsection, we propose an optimal compressed and deterministic estimator. To this end, we need the following definitions: The distance of a point $x_0\in\R^n$ to a bounded and non-empty set $\mathcal A\subseteq\R^n$ is defined as
\begin{align*}
\dist(x_0,\mathcal{A}) \triangleq \sup\{\|x-x_0\|_2: x\in\mathcal A\}.
\end{align*}
Moreover, the set's radius $\rad(\mathcal A)\in\R_+$ and Chebyshev center $\chv(\mathcal A)\in\R^n$ are defined by 
\begin{align}
\rad(\mathcal A) &\triangleq \min_{x_0\in\R^n} \dist(x_0,\mathcal{A}),    \label{eqn:radiusdef}\\
\chv(\mathcal A) &\triangleq \argmin_{x_0\in\R^n} \dist(x_0,\mathcal{A}).  \label{eqn:chebyshevdef}
\end{align}
Notice that the Chebyshev center exists and is unique, since $\R^n$ is uniformly convex and $\mathcal A$ is bounded~\cite[Part 5, \S 33]{holmes2006course}.

Given $y\in\R^m, x\in\R^n$, define their inconsistency  $d_x(y)$ as the optimal value of the following optimization problem:
\begin{mini}|l| 
	{a\in\R^m}{\frac{1}{2}\sum_{i=1}^m (y_i - H_ix +a_i)^2/W_i}{\label{eqn:opt1}}{}
	\addConstraint{\|a\|_0}{\leq q.}
\end{mini} 
Further define the set $\mathcal X(y,\phi), \phi\geq 0$ as the set of $x$ such that the inconsistency with $y$ is upper bounded by $\phi$, i.e.,
\begin{align} \label{eqn:bigX1}
	\mathcal X(y,\phi) \triangleq \{x\in\R^n: d_x(y)\leq \phi\}.
\end{align}
Given $\delta\geq 0$, define $\mathbb X(y,\delta)$ as the biggest $\mathcal X(y,\phi)$ of which the radius is upper bounded by $\delta$:
\begin{align} \label{eqn:Xestimator}
  \mathbb X(y,\delta) \triangleq \bigcup_{\rad( \mathcal X(y,\phi)) \leq \delta, \,\phi\geq 0} \mathcal X(y,\phi). 
\end{align}

It is easy to see that $\mathcal X(y,\phi)$ is monotonically increasing w.r.t. $\phi$. As a result, its radius is also increasing.  Notice also that given $y$, the radius $\rad( \mathcal X(y,\phi))$ is right-continuous with respect to $\phi$ (see details in Lemma~\ref{lemma:continuous} later). Therefore, it might happen that $\rad(\mathbb X(y,\delta)) < \delta$ for certain $\delta$, while in most cases $\rad(\mathbb X(y,\delta)) = \delta$ is achieved.    
Let $f^*_{\delta}$ be the estimator such that the estimate at time $k$ is the Chebyshev center of $\mathbb X(\avg({\bm Y}(k)),\delta)$, i.e., 
\begin{align} \label{eqn:optestimatork}
  f^*_{\delta}\left({\bm Y}(k)\right) = \chv\left(\mathbb X(\avg({\bm Y}(k)),\delta)\right).
\end{align}
For $y\in\R^m$ and $\delta>0$, we define $u(y, \delta)$ as the upper bound of the inconsistency between $y$ and the elements in $\mathbb{X}(y,\delta)$:
\begin{align} \label{eqn:upperd}
  u(y,\delta) \triangleq \sup_{x\in\mathbb{X}(y,\delta)} d_x(y).
\end{align}
With a slight abuse of notation, we define $u(\delta)$ as the lower bound of $u(y,\delta)$:
\begin{align} \label{eqn:upperd1}
  u(\delta) \triangleq \inf_{y\in\R^m} u(y,\delta).
\end{align}

We have our first main result about the estimator~\eqref{eqn:optestimatork}.
\begin{theorem}  \label{theorem:optimalestimator}
  Given any $\delta>0$, the estimator $f^*_{\delta}$ in~\eqref{eqn:optestimatork} is optimal in the sense that it maximizes the rate~\eqref{eqn:RateObj}, i.e., for any estimator $f\in\mathcal{F}$, 
  \begin{align} \label{eqn:theoremoptimalrate}
    r(f,\delta) \leq r(f^*_{\delta},\delta) = u(\delta).
  \end{align}
\end{theorem}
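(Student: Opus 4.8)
The plan is to establish the two inequalities $r(f^*_\delta,\delta)\ge u(\delta)$ (achievability) and $r(f,\delta)\le u(\delta)$ for every $f\in\mathcal{F}$ (converse); taking $f=f^*_\delta$ in the converse then gives $r(f^*_\delta,\delta)=u(\delta)$, and combining yields $r(f,\delta)\le r(f^*_\delta,\delta)$, which is exactly~\eqref{eqn:theoremoptimalrate}. By Theorem~\ref{theorem:restrict_estimator} it suffices to restrict attention to compressed and deterministic estimators, so I write $\hat x_k=\tilde f_k(\avg({\bm Y}(k)))$ throughout. The conceptual engine is a large-deviation reading of the inconsistency: starting from true state $x$, the cheapest way for a $q$-sparse attacker to force the empirical average into a small neighborhood of a point $y$ is to corrupt the $q$ sensors whose required deviations are largest and let the benign sensors deviate on their own, so that $\mathbb{P}_{g,x,\mathcal{C}}(\avg({\bm Y}(k))\approx y)\approx e^{-k\,d_x(y)}$. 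I would invoke the large-deviation estimates (Lemmas~\ref{lemma:probgeometric} and~\ref{lemma:probinter}) for the matching exponential upper and lower bounds, using that $\avg(w_i(\cdot))\sim\mathcal{N}(0,W_i/k)$, so the relevant exponents are exactly the weighted squared deviations $\tfrac{(y_i-H_ix)^2}{2W_i}$ appearing in~\eqref{eqn:opt1}.

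For achievability, the key geometric fact is that an error can occur only when the true state escapes the estimated set. Since $\rad(\mathbb{X}(y,\delta))\le\delta$ for every $y$ (this follows from the monotonicity of $\mathcal{X}(y,\phi)$ in $\phi$ together with the right-continuity of $\phi\mapsto\rad(\mathcal{X}(y,\phi))$ from Lemma~\ref{lemma:continuous}), the Chebyshev center obeys $\|\hat x_k-x\|_2\le\rad(\mathbb{X}(\avg({\bm Y}(k)),\delta))\le\delta$ whenever $x\in\mathbb{X}(\avg({\bm Y}(k)),\delta)$. Writing $y=\avg({\bm Y}(k))$ and using that $\mathbb{X}(y,\delta)$ is the sublevel set $\{x':d_{x'}(y)\le u(y,\delta)\}$, I obtain
\[\{\|\hat x_k-x\|_2>\delta\}\subseteq\{x\notin\mathbb{X}(y,\delta)\}=\{d_x(y)>u(y,\delta)\}\subseteq\{d_x(y)>u(\delta)\},\]
the last inclusion because $u(y,\delta)\ge u(\delta)$ for all $y$. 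For any attacked set $\mathcal{C}$, choosing $S=\mathcal{C}$ in~\eqref{eqn:opt1} gives $d_x(y)\le\sum_{i\notin\mathcal{C}}\tfrac{\avg(w_i(\cdot))^2}{2W_i}$, which is distributed as $\tfrac{1}{2k}\chi^2_{m-q}$ independently of the attack $g$ and of $x$. Hence $e(f^*_\delta,k,\delta)\le\mathbb{P}(\chi^2_{m-q}>2k\,u(\delta))$, whose exponential rate is exactly $u(\delta)$; taking $-\tfrac{1}{k}\log(\cdot)$ and letting $k\to\infty$ yields $r(f^*_\delta,\delta)\ge u(\delta)$.

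For the converse I would run a one-point large-deviation attack. Fix $\epsilon>0$ and choose $y^\star$ with $u(y^\star,\delta)\le u(\delta)+\epsilon/2$; here $u(y^\star,\delta)$ equals the threshold $\phi^\star$ with $\mathbb{X}(y^\star,\delta)=\mathcal{X}(y^\star,\phi^\star)$. Because $\phi^\star$ is the largest $\phi$ with $\rad(\mathcal{X}(y^\star,\phi))\le\delta$, for $\phi=\phi^\star+\epsilon/2$ we have $\rad(\mathcal{X}(y^\star,\phi))>\delta$; thus for each $k$, whatever single point $\hat x^\star_k:=\tilde f_k(y^\star)$ the estimator returns, $\dist(\hat x^\star_k,\mathcal{X}(y^\star,\phi))\ge\rad(\mathcal{X}(y^\star,\phi))>\delta$, so there is $x_1$ (depending on $k$) in $\mathcal{X}(y^\star,\phi)$ with $\|\hat x^\star_k-x_1\|_2>\delta$ and $d_{x_1}(y^\star)\le\phi\le u(\delta)+\epsilon$ uniformly in $k$. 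Letting the true state be this $x_1$ and letting the attacker steer $\avg({\bm Y}(k))$ into a small ball $B(y^\star,\rho)$, Lemma~\ref{lemma:probinter} gives probability at least $e^{-k(d_{x_1}(y^\star)+\epsilon)}$, on which the estimate is pushed more than $\delta$ away from $x_1$. Since the supremum in~\eqref{eqn:errorProb} already ranges over $x$, we get $e(f,k,\delta)\ge e^{-k(u(\delta)+2\epsilon)+o(k)}$, hence $r(f,\delta)\le u(\delta)+2\epsilon$, and letting $\epsilon\downarrow 0$ completes the converse.

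The main obstacle lies in the last step of the converse: for a general, possibly discontinuous, deterministic map $\tilde f_k$, landing $\avg({\bm Y}(k))$ merely near $y^\star$ need not force $\hat x_k$ near $\hat x^\star_k=\tilde f_k(y^\star)$, so the error event could fail on part of the ball. I would resolve this by shrinking $\rho$ jointly with $\epsilon$ and exploiting that the lower bound of Lemma~\ref{lemma:probinter} holds for the open ball (so the exponent degrades only by the controllable $\epsilon$), together with a covering argument: since $\mathcal{X}(y^\star,\phi)$ has radius strictly greater than $\delta$ while the estimator's image $\tilde f_k(B(y^\star,\rho))$ must remain effectively localized on the high-probability event (otherwise its error is already large), some $x_1$ in this set stays at distance $>\delta$ from that image. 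By contrast, the achievability direction is essentially mechanical once the chi-square domination $d_x(\avg({\bm Y}(k)))\le\tfrac{1}{2k}\chi^2_{m-q}$ is in hand.
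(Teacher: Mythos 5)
Your achievability half is essentially correct and, if anything, cleaner than the paper's: the inclusion $\{\|\hat x_k-x\|_2>\delta\}\subseteq\{d_x(\avg({\bm Y}(k)))\geq u(\delta)\}$ followed by the domination $d_x(\avg({\bm Y}(k)))\leq\tfrac12\sum_{i\notin\mathcal C}\avg(w_i(\cdot))^2/W_i\sim\tfrac{1}{2k}\chi^2_{m-q}$ yields the exponent $u(\delta)$ directly, whereas the paper (Lemma~\ref{lemma:achieve}) gets the same bound by applying Cram\'er's theorem to the closed set $\{y:d_x(y)\geq u(\delta)\}$ projected onto the benign coordinates. (Minor caveat: you should justify $\rad(\mathbb X(y,\delta))\leq\delta$ for the \emph{union} in~\eqref{eqn:Xestimator}, since the radius of an increasing union can a priori exceed the supremum of the individual radii; and note that Lemmas~\ref{lemma:probgeometric} and~\ref{lemma:probinter} are not large-deviation estimates --- they are the Helly-type facts used to build $\delta$-centers in the derandomization step, so your repeated appeal to them for exponential bounds is a misattribution.)

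The genuine gap is in the converse, at exactly the point you flag and do not resolve. Your attack fixes one true state $x_1$ with $\|\tilde f_k(y^\star)-x_1\|_2>\delta$ and steers $\avg({\bm Y}(k))$ into a ball $\mathcal B_\beta(y^\star)$, but for an arbitrary measurable $\tilde f_k$ the event $\{\tilde f_k(y)\notin\mathcal B_\delta(x_1)\}$ may occupy an arbitrarily small portion of that ball, and your proposed repair (``the image must remain effectively localized on the high-probability event, otherwise its error is already large'') is circular: the estimator can scatter its image over $\mathcal X(y^\star,\phi)$ so that no \emph{single} $x_1$ is missed on a substantial part of the ball unless you first control how many candidate states are needed. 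The paper closes this in two steps that are both missing from your sketch. First, Lemma~\ref{lemma:radius} (via Helly's theorem) extracts a finite set $\mathcal A_0\subseteq\mathcal A$ with $|\mathcal A_0|\leq n+1$ and $\rad(\mathcal A_0)>\delta$; since no ball of radius $\delta$ contains all of $\mathcal A_0$, every $y\in\mathcal B_\beta(y^\star)$ lies in $\mathcal Y(f,x,\delta,k)$ for at least one $x\in\mathcal A_0$, and pigeonhole on Lebesgue measure produces an $x^*\in\mathcal A_0$ with $\mathfrak L_m(\mathcal B_\beta(y^\star)\cap\mathcal Y(f,x^*,\delta,k))\geq\mathfrak L_m(\mathcal B_\beta(y^\star))/(n+1)$ uniformly in $k$ and $f$. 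Second, because this bad set is only known up to a Lebesgue-measure fraction (it need not be open, so Cram\'er's lower bound does not apply to it directly), the paper converts measure into probability by comparing against the worst equal-measure subset of the ball built from the density level sets (the set $\mathcal Z(x^*,k)$ in Lemma~\ref{lemma:upperbound}), which is open and on which the rate function is still below $u(\delta)+\epsilon$. Without these two ingredients your converse does not go through.
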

\begin{proof}
	See Appendix~\ref{proof:opt-estimator}.
\end{proof}
\begin{remark} \label{remark:parameterq}
	Notice that our estimator involves~$q$, as is the case in~\cite{mishra2017secure}, where the estimator (i.e., Algorithm~2 thereof)  depends on the perceived number of compromised sensors (or its upper bound) as well. On the contrary, estimators in~\cite{fawzi2014secure,mo2016secure} do not. In practice, the number of actually compromised sensors, $q_0$, might be smaller or larger than the design parameter $q$. If $q_0< q$, the performance of our estimator is lower bounded by $u(\delta)$ in~\eqref{eqn:upperd1}.  The details are as follows. With a little abuse of notation, in this remark, we use $d_x(y,q)$ (instead of $d_x(y)$) to denote the optimal value of optimization problem in~\eqref{eqn:opt1}, and rewrite $r(f^*_{\delta},\delta)$ as $r_q(f^*_{\delta},\delta)$.    Then the performance of our estimator when the number of compromised sensors is $q_0<q$ is:
	\begin{align}
		r_{q_0}(f^*_{\delta},\delta) =  \inf_{y\in\R^m}   \sup_{x\in\mathbb{X}(y,\delta)} d_x(y,q_0)  \geq   u(\delta).
	\end{align}
	We should admit that it is challenging to design an estimator that balances decently $r_q(f^*_{\delta},\delta)$ and $r_{q_0}(f^*_{\delta},\delta)$ in our case. Interested readers are referred to our previous work~\cite{xiaoqiang_securedetection}, where an detector that achieves the ``best" trade-off among performances with different $q$'s in the binary hypothesis testing case was provided.
	While if $q_0>q$, our estimator will be destroyed, i.e., $r(f^*_{\delta},\delta)=0$, as is the case in~\cite{mishra2017secure}. This is not desirable in practice. Our future work will investigate estimators independent of $q$.
\end{remark}

In the following lemma, we consider the case where Assumption~\ref{assumpt:2qobservable} is violated. 
\begin{lemma}  \label{lemma:vioassumpt2}
If Assumption~\ref{assumpt:2qobservable} is violated, the followings holds:
\begin{enumerate}
	\item For any $\delta>0$, there exists $y^*,x_1,x_2$ (dependent on $\delta$) such that $d_{x_1}(y^*)=d_{x_2}(y^*)=0$ and $\|x_1-x_2\|_2>\delta$;
	\item $ r(f,\delta) = r(f^*_{\delta},\delta) = 0$.
\end{enumerate}
\end{lemma}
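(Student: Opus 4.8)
The plan is to turn the failure of $2q$-observability into a pair of states that are far apart yet induce identical measurement statistics, so that no estimator can tell them apart. Since Assumption~\ref{assumpt:2qobservable} is violated, there is an index set $\I\subseteq\M$ with $|\I|=m-2q$ for which ${\bm H}_{\I}$ is not of full column rank; hence I can fix a nonzero $v\in\R^n$ with $H_iv=0$ for every $i\in\I$. Because $|\M\setminus\I|=2q$, I partition the remaining sensors into two disjoint blocks $S_1,S_2$ with $|S_1|=|S_2|=q$. These objects are the only ingredients both parts of Lemma~\ref{lemma:vioassumpt2} will use.

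For part~(1), fix a scalar $c$ with $|c|\,\|v\|_2>\delta$, pick any $x_1$, and set $x_2=x_1+cv$, so that $\|x_1-x_2\|_2>\delta$. I define $y^*$ by $y^*_i=H_ix_1$ for $i\in\I\cup S_1$ and $y^*_i=H_ix_2$ for $i\in S_2$; note $H_ix_1=H_ix_2$ on $\I$ since $H_iv=0$ there. To see $d_{x_1}(y^*)=0$, observe that $y^*_i=H_ix_1$ already holds on $\I\cup S_1$, a set of size $m-q$, so the bias $a$ supported on $S_2$ with $a_i=H_ix_1-y^*_i$ drives the objective of~\eqref{eqn:opt1} to zero while obeying $\|a\|_0\le q$. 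By the symmetric choice (bias supported on $S_1$, noting $y^*_i=H_ix_2$ on $\I\cup S_2$) I get $d_{x_2}(y^*)=0$, which establishes the first claim.

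For part~(2), the key observation is that these two states generate the same measurement law. Consider Scenario~A, where the true state is $x_1$, the compromised set is $S_2$, and the attacker injects the constant-in-time bias $a_i(k)=H_i(x_2-x_1)$ for $i\in S_2$; and Scenario~B, where the true state is $x_2$, the compromised set is $S_1$, and $a_i(k)=H_i(x_1-x_2)$ for $i\in S_1$. Both are valid $q$-sparse members of $\mathcal{G}$. In each scenario $y_i(k)$ is Gaussian with variance $W_i$ and mean exactly $y^*_i$ (again using $H_ix_1=H_ix_2$ on $\I$), with independence across sensors and time preserved, so ${\bm Y}(k)$ has identical distribution under A and B for every $k$. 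Hence, for \emph{any} estimator $f\in\mathcal{F}$ (randomized or not, compressed or not), $\hat{x}_k$ has one common law $\nu_k$ in both scenarios.

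Choosing $c$ slightly larger so that $\|x_1-x_2\|_2>2\delta$, the triangle inequality shows every point of $\R^n$ is at distance exceeding $\delta$ from at least one of $x_1,x_2$; thus under $\nu_k$ the events $\{\|\hat{x}_k-x_1\|_2>\delta\}$ and $\{\|\hat{x}_k-x_2\|_2>\delta\}$ cover the whole space, and their probabilities sum to at least one. Taking the supremum in~\eqref{eqn:errorProb} over the two scenarios then gives $e(f,k,\delta)\ge 1/2$ for all $k$, so from~\eqref{eqn:RateObj}, $r(f,\delta)=\liminf_k\big(-\log e(f,k,\delta)\big)/k\le \lim_k(\log 2)/k=0$; since the rate is always nonnegative, $r(f,\delta)=0$ for every $f$, in particular for $f^*_\delta$. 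The main obstacle I anticipate is not the rate bound, which is routine once indistinguishability holds, but the bookkeeping that makes both inconsistencies vanish simultaneously: aligning the sparsity budget $q$ with the two-block partition $S_1,S_2$ so that each state is exactly consistent with $y^*$ on a set of size $m-q$, and confirming that the coverage step genuinely needs the $2\delta$ separation rather than the bare $\delta$ used in part~(1).
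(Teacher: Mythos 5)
Your proof of the first bullet is essentially the paper's own construction: the paper also takes an index set $\mathcal{I}^*$ of size $m-2q$ on which ${\bm H}_{\mathcal{I}^*}$ loses rank, picks $x_1,x_2$ with $H_{\mathcal{I}^*}(x_2-x_1)=0$ and $\|x_2-x_1\|_2>\delta$, splits the remaining $2q$ sensors into two blocks of size $q$, and defines $y^*$ to agree with ${\bm H}x_1$ on one block plus $\mathcal{I}^*$ and with ${\bm H}x_2$ on the other block plus $\mathcal{I}^*$, so that each state is exactly consistent with $y^*$ on a set of size $m-q$. Where you genuinely diverge is the second bullet. The paper derives it by re-running the large-deviations argument of its Lemma~\ref{lemma:upperbound}: since $d_{x_1}(y^*)=d_{x_2}(y^*)=0$, the Cram\'{e}r-type lower bound on the worst-case error probability forces the exponent to be at most $\epsilon$ for every $\epsilon>0$, hence zero. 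You instead give a direct indistinguishability argument: your Scenarios~A and~B produce literally identical laws for ${\bm Y}(k)$ at every $k$ (valid $q$-sparse deterministic attacks, means matching on $\mathcal{I}$ because $H_iv=0$ there), so any estimator's output has one common law $\nu_k$; with the separation upgraded to $\|x_1-x_2\|_2>2\delta$ the two error events cover $\R^n$ and $e(f,k,\delta)\geq 1/2$ uniformly in $k$. This is more elementary (no large-deviations machinery, no continuity of the rate function near $y^*$) and strictly stronger in conclusion: it shows the worst-case error probability does not decay at all, not merely that its exponential rate is zero. The paper's route has the advantage of reusing machinery already built for Theorem~\ref{theorem:optimalestimator} and of quantifying what happens when the inconsistency is small but nonzero; yours isolates the exact degeneracy. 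Your closing remark that part~(2) needs $2\delta$ separation while part~(1) only asserts $\delta$ is correct and handled properly, since the first bullet holds for every $\delta$ and can be invoked at $2\delta$.
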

\begin{proof}
	See Appendix~\ref{appendix:vioassumpt2}
\end{proof}
The above first bullet yields that for any $\delta>0$, there exists $y$ (dependent on $\delta$) such that $\mathbb X(y,\delta)$ is empty.  

\subsection{Numerical Implementation} \label{sec:numericalImple}
In this subsection, we provide an algorithm to compute the estimator $f^*_{\delta}$ proposed above. We shall first propose a method to compute the Chebyshev center and the radius of $\mathcal X(y,\phi)$ for a given $\phi$. This shares a similar spirit with~\cite{mo2015multi}. We then consider how to derive the appropriate $\phi$ using a modified bisection method. 
To proceed, we need the following definition and lemmas.

A variation of $d_x(y)$, where the support of $a$ in the definition in~\eqref{eqn:opt1} is given \textit{a priori}, is defined as follows:
\begin{definition}
  Given $x\in\R^n$, $y\in\R^m$, and index set $\mathcal{I}\subseteq\mathcal{M}$, the restricted inconsistency $d_x(y,\mathcal{I})$ is
  \begin{align} \label{eqn:distancerestict}
    d_x(y,\mathcal{I}) \triangleq \frac{1}{2}\sum_{i\in \mathcal{I}} (y_i - H_ix)^2/W_i.
  \end{align}	
\end{definition}
It is clear that with a fixed set $\mathcal{I}$, $d_x(y,\mathcal{I})$ is continuous w.r.t. both $x$ and $y$. Furthermore,
\begin{align*}
d_x(y) = \min_{\mathcal{I}\subseteq\mathcal{M},\,|\mathcal{I}| = m-q} d_x(y,\mathcal{I}).
\end{align*}

\begin{lemma} \label{lemma:restrictdistance}
  When $|\mathcal{I}|\geq m-2q$, the restricted inconsistency $d_x(y,\mathcal{I})$ can be equivalently written as:
  \begin{align} \label{eqn:distrewrite}
    d_x(y,{\mathcal I}) = (x-\kappa_\mathcal{I} y_{\mathcal{I}})^\top  \var(\mathcal{I}) (x- \kappa_\mathcal{I} y_{\mathcal{I}}) +\res(\mathcal{I})
  \end{align}
  where the ``variance''
  \begin{align}
    \var(\mathcal{I}) = \frac{1}{2}{\bm H}_\mathcal{I}^\top {\bm W}_{\{\mathcal{I}\}}^{-1}  {\bm H}_\mathcal{I} 
  \end{align}
  and the ``residue''
  \begin{align} \label{eqn:res}
    \res(\mathcal{I}) =  \frac{1}{2}(y_{\mathcal{I}} - {\bm H}_\mathcal{I}\kappa_\mathcal{I}y_{\mathcal{I}})^\top{\bm W}_{\{\mathcal{I}\}}^{-1}  (y_{\mathcal{I}} - {\bm  H}_\mathcal{I}\kappa_\mathcal{I}y_{\mathcal{I}})
  \end{align}
  with ${\bm W}_{\{\mathcal{I}\}}$ (different from ${\bm W}_{\mathcal{I}}$) being the square matrix obtained from   ${\bm W} = \diag(W_1,W_2,\ldots,W_m)$ after removing all of the rows and columns except those in the index set $\I$, and 
  \begin{align} \label{eqn:pesudoinverse}
    \kappa_\mathcal{I} = (\bm H_\mathcal{I}^\top {\bm W}_{\{\mathcal{I}\}}^{-1} \bm H_\mathcal{I} )^{-1}\bm H_\mathcal{I}^\top {\bm W}_{\{\mathcal{I}\}}^{-1} .
  \end{align}
\end{lemma}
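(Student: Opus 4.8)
The plan is to recognize $d_x(y,\mathcal{I})$ as a weighted least-squares objective and then complete the square around its minimizer. First I would rewrite the sum in~\eqref{eqn:distancerestict} in matrix form: since ${\bm W}_{\{\mathcal{I}\}} = \diag(W_i : i\in\mathcal{I})$ is diagonal, we have $d_x(y,\mathcal{I}) = \tfrac12 (y_{\mathcal{I}} - {\bm H}_{\mathcal{I}}x)^\top {\bm W}_{\{\mathcal{I}\}}^{-1} (y_{\mathcal{I}} - {\bm H}_{\mathcal{I}}x)$, which is manifestly a convex quadratic in $x$ with Hessian ${\bm H}_{\mathcal{I}}^\top {\bm W}_{\{\mathcal{I}\}}^{-1} {\bm H}_{\mathcal{I}} = 2\var(\mathcal{I})$.

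The crucial preliminary step is to establish that $\var(\mathcal{I})$ is invertible, so that $\kappa_{\mathcal{I}}$ in~\eqref{eqn:pesudoinverse} is well defined. This is exactly where the hypothesis $|\mathcal{I}|\geq m-2q$ enters: by Assumption~\ref{assumpt:2qobservable}, ${\bm H}_{\mathcal{J}}$ is of full column rank for every $\mathcal{J}$ with $|\mathcal{J}| = m-2q$; choosing such a $\mathcal{J}\subseteq\mathcal{I}$ and noting that adjoining rows cannot decrease rank, ${\bm H}_{\mathcal{I}}$ also has full column rank $n$. Since ${\bm W}_{\{\mathcal{I}\}}^{-1}\succ 0$, the Gram matrix ${\bm H}_{\mathcal{I}}^\top {\bm W}_{\{\mathcal{I}\}}^{-1} {\bm H}_{\mathcal{I}}$ is positive definite, hence invertible, and the minimizer $\hat{x} \triangleq \kappa_{\mathcal{I}} y_{\mathcal{I}}$ is uniquely defined.

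With invertibility in hand I would complete the square. The normal equations at $\hat{x}$ read ${\bm H}_{\mathcal{I}}^\top {\bm W}_{\{\mathcal{I}\}}^{-1} y_{\mathcal{I}} = 2\var(\mathcal{I})\,\hat{x}$. Expanding the quadratic and substituting this identity yields $d_x(y,\mathcal{I}) = (x-\hat{x})^\top \var(\mathcal{I})(x-\hat{x}) + \tfrac12\big(y_{\mathcal{I}}^\top {\bm W}_{\{\mathcal{I}\}}^{-1} y_{\mathcal{I}} - \hat{x}^\top {\bm H}_{\mathcal{I}}^\top {\bm W}_{\{\mathcal{I}\}}^{-1} y_{\mathcal{I}}\big)$; a short computation using the same normal equations shows the constant term equals $\tfrac12(y_{\mathcal{I}} - {\bm H}_{\mathcal{I}}\hat{x})^\top {\bm W}_{\{\mathcal{I}\}}^{-1}(y_{\mathcal{I}} - {\bm H}_{\mathcal{I}}\hat{x}) = \res(\mathcal{I})$, because the cross terms collapse. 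This is precisely~\eqref{eqn:distrewrite}.

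I do not anticipate a genuine obstacle: the identity is pure linear algebra, namely a change of variables centered at the weighted least-squares optimum. The only point requiring care — and the one that makes the stated range $|\mathcal{I}|\geq m-2q$ necessary — is the invertibility of $\var(\mathcal{I})$, since without it $\kappa_{\mathcal{I}}$ and hence the entire right-hand side of~\eqref{eqn:distrewrite} would be ill defined; every remaining step is a routine expansion.
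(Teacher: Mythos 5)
Your proof is correct and is essentially the paper's argument in algebraic rather than geometric clothing: the paper whitens by $\sqrt{{\bm W}_{\{\mathcal{I}\}}^{-1}}$ and decomposes the residual via the orthogonal projection $H_wH_w^+$, which is exactly your completion of the square around the weighted least-squares minimizer $\kappa_{\mathcal{I}}y_{\mathcal{I}}$, with the vanishing cross terms being the orthogonality statement. Your explicit justification that $|\mathcal{I}|\geq m-2q$ plus Assumption~\ref{assumpt:2qobservable} gives ${\bm H}_{\mathcal{I}}$ full column rank (hence $\kappa_{\mathcal{I}}$ well defined) is the same point the paper makes when it replaces the pseudo-inverse by $(H_w^\top H_w)^{-1}H_w^\top$.
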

\begin{proof}
	See Appendix~\ref{proof:lemmas}.
\end{proof}
In the following, we show that computing the Chebyshev center and radius of the set $\mathcal{X}(y,\phi)$ introduced in~\eqref{eqn:bigX1} can be transferred to a convex optimization problem. 
Notice that one can rewrite $\mathcal{X}(y,\phi)$ as:
\begin{align}
  \mathcal{X}(y,\phi) = \bigcup\mathcal{X}(y,\phi,\mathcal{I}),   \label{eqn:unionellipsoid}
\end{align}
where
\begin{align*}
 \mathcal{X}(y,\phi,\mathcal{I})\triangleq\{x\in\R^n: d_x(y,\mathcal{I}) \leq \phi\}.
\end{align*}
In other words, $\mathcal{X}(y,\phi)$ is a union of ellipsoids.
It is worth pointing out that if $\res(\mathcal{I}) = \phi$,  $\mathcal{X}(y,\phi,\mathcal{I})$ degenerates to a single point; and if $\res(\mathcal{I}) > \phi$,  $\mathcal{X}(y,\phi,\mathcal{I})$ is empty.  Therefore, to differentiate these cases, we define 
\begin{align*} 
\mathfrak{I}(\phi)& \triangleq\{\mathcal{I}\subseteq \mathcal{M}: \res(\mathcal{I})\leq \phi\text{ and } |\mathcal{I}| =  m-q\}, \addtag \label{eqn:collectionI} \\
  \mathfrak{I}_{+}(\phi) &\triangleq\{\mathcal{I}\subseteq \mathcal{M}: \res(\mathcal{I})< \phi\text{ and } |\mathcal{I}| =  m-q\}, \\
  \mathfrak{I}_{0}(\phi)&\triangleq  \mathfrak{I}(\phi)\setminus\mathfrak{I}_+(\phi). 
\end{align*}
\begin{lemma} \label{lemma:cvxopt}
  Given $\phi$ such that $\mathfrak{I}(\phi)$ is not empty.	Consider the following semidefinite programming problem:
  \begin{align*}
    \mathop{\minimize}_{\tau\in\R^{|\mathfrak{I}_{+}(\phi)|}, c, \psi\in\R} & \quad \psi \\
    \subjecto&  \quad \psi \geq 0,\\
                                                   &\quad \tau_i \geq 0, \:\forall 1\leq i \leq |\mathfrak{I}_{+}(\phi)|, \\
                                                   &\quad \tau_{\id(\mathcal{I})} \Theta(\mathcal{I},\phi) \succcurlyeq
                                                   \theta(c,\psi), \:\forall \mathcal{I}\in \mathfrak{I}_+(\phi), \\
                                                   &\quad   \begin{bmatrix}
                                                   \psi & (\kappa_\mathcal{I} y_{\mathcal{I}} - c)^\top \\
                                                   * & {\bm I}_n
                                                   \end{bmatrix}
                                                   \succcurlyeq 0, 	\:\forall \mathcal{I}\in \mathfrak{I}_{0}(\phi),	
  \end{align*}
  where 
  \begin{align*}
    &\Theta(\mathcal{I},\phi) \\
    \triangleq& 
                \begin{bmatrix}
                  \var(\mathcal{I})& -\var(\mathcal{I})\kappa_\mathcal{I} y_{\mathcal{I}} & 0 \\
                  * & (\kappa_\mathcal{I} y_{\mathcal{I}})^\top\var(\mathcal{I})\kappa_\mathcal{I} y_{\mathcal{I}} + \res(\mathcal{I}) -\phi & 0\\
                  0&0&0
                \end{bmatrix}
  \end{align*}
  with $*$ being recovered by symmetry, 
  \begin{align*}
  	\theta(c,\psi)  \triangleq \begin{bmatrix}
  	{\bm I}_n & -c & 0 \\
  	-c^\top & -\psi &c^\top \\
  	0 & c & -{\bm I}_n
  	\end{bmatrix},
  \end{align*} and $\id(\cdot): \mathfrak{I}_+(\phi) \mapsto \{1,2,\ldots,|\mathfrak{I}_{+}(\phi)|\}$ is any one-to-one function. Then
  \begin{align}
    \chv( \mathcal{X}(y,\phi) ) &= c^*,\\
    \rad( \mathcal{X}(y,\phi) ) &= \sqrt{\psi^*},
  \end{align}
  where $c^*$ and $\psi^*$ are the optimal solution of the semidefinite programming problem.
\end{lemma}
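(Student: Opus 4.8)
The plan is to recast the Chebyshev-center problem of~\eqref{eqn:radiusdef}--\eqref{eqn:chebyshevdef} as a min-max optimization over a covering ball and then convexify each inner constraint with the S-procedure. Since $\dist(x_0,\mathcal X(y,\phi))=\sup_{x\in\mathcal X(y,\phi)}\|x-x_0\|_2$, squaring and introducing an epigraph variable $\psi$ yields
\begin{align*}
\rad(\mathcal X(y,\phi))^2=\min_{c,\,\psi\ge 0}\;\psi\quad\text{s.t. }\ \|x-c\|_2^2\le\psi\ \ \forall x\in\mathcal X(y,\phi),
\end{align*}
where the minimizing $c$ equals $\chv(\mathcal X(y,\phi))$ (unique by the uniform-convexity argument already invoked). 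The constraint states exactly that the ball $\{x:\|x-c\|_2^2\le\psi\}$ covers $\mathcal X(y,\phi)$, so it remains to turn this covering condition into LMIs.

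Next I would invoke the decomposition~\eqref{eqn:unionellipsoid}. Because $\mathcal X(y,\phi,\mathcal I)$ is empty whenever $\res(\mathcal I)>\phi$, covering the union is equivalent to covering each $\mathcal X(y,\phi,\mathcal I)$ with $\mathcal I\in\mathfrak I(\phi)$; and $\mathfrak I(\phi)$ splits into the full-dimensional ellipsoids $\mathfrak I_+(\phi)$ (where $\res(\mathcal I)<\phi$) and the degenerate singletons $\mathfrak I_0(\phi)$ (where $\res(\mathcal I)=\phi$, so by Lemma~\ref{lemma:restrictdistance} with $\var(\mathcal I)\succcurlyeq 0$ nonsingular the set collapses to the point $\kappa_\mathcal I y_\mathcal I$). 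Thus the single coupled covering constraint becomes a finite family of per-$\mathcal I$ constraints that share the variables $(c,\psi)$.

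For each $\mathcal I\in\mathfrak I_+(\phi)$, I would apply the S-procedure to the implication $d_x(y,\mathcal I)-\phi\le 0\Rightarrow\|x-c\|_2^2-\psi\le 0$. Writing both quadratics in homogeneous coordinates $[x;1]$ with matrices $F_1(\mathcal I)$, the top-left $(n{+}1)\times(n{+}1)$ block of $\Theta(\mathcal I,\phi)$ obtained from Lemma~\ref{lemma:restrictdistance}, and $F_0(c,\psi)$, the key point is that Slater's condition holds: $x=\kappa_\mathcal I y_\mathcal I$ gives the strict inequality $\res(\mathcal I)-\phi<0$, so the implication is \emph{equivalent} to the existence of a multiplier $\tau_{\mathcal I}\ge0$ with $\tau_\mathcal I F_1(\mathcal I)\succcurlyeq F_0(c,\psi)$. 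The obstruction is that $F_0(c,\psi)$ carries the nonconvex term $c^\top c$ in its $(2,2)$ entry; I would eliminate it through the factorization $\begin{bmatrix}0&0\\0&c^\top c\end{bmatrix}=\begin{bmatrix}0\\c^\top\end{bmatrix}\begin{bmatrix}0&c\end{bmatrix}$ and a Schur complement, which lifts the $(n{+}1)$-dimensional inequality to the $(2n{+}1)$-dimensional LMI $\tau_\mathcal I\Theta(\mathcal I,\phi)\succcurlyeq\theta(c,\psi)$ after a sign-flipping congruence by $\diag({\bm I}_n,1,-{\bm I}_n)$ acting on the third block. This reproduces the third constraint of the SDP.

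Finally, for $\mathcal I\in\mathfrak I_0(\phi)$ the covering condition reduces to the single inequality $\|\kappa_\mathcal I y_\mathcal I-c\|_2^2\le\psi$, and one Schur complement turns it into the fourth LMI. Assembling the two families shows that the SDP's feasible set coincides with the set of $(c,\psi)$ whose ball covers $\mathcal X(y,\phi)$; hence the optimal value is $\rad(\mathcal X(y,\phi))^2=\psi^*$ and the optimal $c^*$ is the Chebyshev center. I expect the main obstacle to be the bookkeeping in the S-procedure step: verifying Slater only on $\mathfrak I_+(\phi)$ (which is precisely why the singletons in $\mathfrak I_0(\phi)$ must be handled separately, the S-procedure being only sufficient without strict feasibility), and confirming that the $(2n{+}1)$-dimensional lift is genuinely \emph{equivalent}, not merely sufficient, to the $(n{+}1)$-coordinate matrix inequality.
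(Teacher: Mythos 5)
Your proposal is correct and follows essentially the same route as the paper: decompose $\mathcal{X}(y,\phi)$ via~\eqref{eqn:unionellipsoid} into full-dimensional ellipsoids indexed by $\mathfrak{I}_+(\phi)$ and singletons indexed by $\mathfrak{I}_0(\phi)$, characterize ball-covers-ellipsoid by the LMI $\tau\Theta\succcurlyeq\theta$ and ball-covers-point by a Schur complement, and intersect the resulting constraints. The only difference is that the paper simply cites the ellipsoid-covering LMI from~\cite[Lemma 2.8]{yildirim2006minimum}, whereas you re-derive it via the lossless S-procedure (with Slater supplied by $\res(\mathcal{I})<\phi$) plus a Schur-complement lift to eliminate the $c^\top c$ term; your derivation checks out.
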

\begin{proof}
	See Appendix~\ref{proof:lemmas}.
\end{proof}

It follows from this lemma that, finding the Chebyshev center and radius of the set $\mathcal{X}(y,\phi)$ is  a semidefinite programming problem when $y,\phi$ are given. However, we are interested in finding the optimal estimator that maximize the rate $r(f,\delta)$, for a given $\delta$. In the following lemma, we give how $\rad(\mathcal{X}(y,\phi))$ varies with $\phi$, the illustration of which is in Section~\ref{sec:illustration}. 
\begin{lemma} \label{lemma:continuous}
  Given any $y\in\R^m$, the radius $\rad(\mathcal X(y,\phi))$ have the following properties:
  \begin{enumerate}
  \item $\rad(\mathcal X(y,\phi))$ is increasing, right-continuous w.r.t. $\phi$.
  \item If $\rad(\mathcal X(y,\phi))$ is discontinuous at a point $\phi_0$, then there must exist a set $\mathcal{I}\subseteq\mathcal{M}$ with $|\mathcal{I}| = m-q$ such that $\res(\mathcal{I}) = \phi_0.$
  \item When $\rad(\mathcal X(y,\phi))>0$, $\rad(\mathcal X(y,\phi))$ is strictly increasing w.r.t. $\phi$.
  \end{enumerate}
\end{lemma}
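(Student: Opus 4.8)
The plan is to analyze $R(\phi) \triangleq \rad(\mathcal X(y,\phi))$ through the ellipsoidal decomposition~\eqref{eqn:unionellipsoid}. Write $c_\mathcal{I} \triangleq \kappa_\mathcal{I} y_\mathcal{I}$ and $V_\mathcal{I} \triangleq \var(\mathcal{I})$, so that by Lemma~\ref{lemma:restrictdistance}, for every $\mathcal{I}$ with $|\mathcal{I}| = m-q$ the set $\mathcal{X}(y,\phi,\mathcal{I}) = \{x : (x - c_\mathcal{I})^\top V_\mathcal{I}(x - c_\mathcal{I}) \le \phi - \res(\mathcal{I})\}$ is a genuine (bounded, full-dimensional) ellipsoid when $\phi > \res(\mathcal{I})$, the single point $\{c_\mathcal{I}\}$ when $\phi = \res(\mathcal{I})$, and empty when $\phi < \res(\mathcal{I})$; here $V_\mathcal{I} \succ 0$ thanks to Assumption~\ref{assumpt:2qobservable}. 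Two elementary facts will do most of the work: (i) the Chebyshev radius is $1$-Lipschitz with respect to the Hausdorff distance $d_H$ on nonempty compact sets---if $A \subseteq B + \epsilon \mathbb{B}$ and $B \subseteq A + \epsilon \mathbb{B}$, then evaluating $\dist(\cdot,\cdot)$ at the respective Chebyshev centers gives $|\rad(A) - \rad(B)| \le \epsilon$; and (ii) each ellipsoid depends Hausdorff-continuously on $\phi$, since its semi-axes scale like $\sqrt{\phi - \res(\mathcal{I})}$, a continuous function of $\phi$ on $[\res(\mathcal{I}), \infty)$.

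\textbf{Monotonicity, right-continuity, and location of discontinuities (items 1 and 2).} Monotonicity is immediate: $\mathcal X(y,\phi)$ increases with $\phi$, and the radius is monotone under inclusion. For the continuity statements I would introduce the finite set of thresholds $\{\res(\mathcal{I}) : |\mathcal{I}| = m-q\}$ and the active family $\mathcal{S}(\phi) \triangleq \{\mathcal{I} : |\mathcal{I}| = m-q,\ \res(\mathcal{I}) \le \phi\}$. Fixing $\phi_0$, the family $\mathcal{S}(\phi)$ is constant on $[\phi_0, \phi_0 + \eta)$ for small $\eta > 0$, so on this interval $\mathcal{X}(y,\phi) = \bigcup_{\mathcal{I} \in \mathcal{S}(\phi_0)} \mathcal{X}(y,\phi,\mathcal{I})$ is a union over a fixed finite index set of ellipsoids each varying Hausdorff-continuously in $\phi$ (including those activated exactly at $\phi_0$, which start as points and grow). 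Since $d_H$ of a union over a fixed finite index set is bounded by the maximum of the members' $d_H$, we get $d_H(\mathcal{X}(y,\phi),\mathcal{X}(y,\phi_0)) \to 0$ as $\phi \downarrow \phi_0$, and fact (i) yields $R(\phi)\to R(\phi_0)$, proving item 1. For item 2 I prove the contrapositive: if $\phi_0$ equals no $\res(\mathcal{I})$, then finiteness of the thresholds gives $\eta > 0$ with no threshold in $(\phi_0 - \eta, \phi_0 + \eta)$; hence $\mathcal{S}(\phi)$ is constant on this whole interval, every active ellipsoid has $\phi - \res(\mathcal{I})$ bounded away from $0$, the union is two-sided Hausdorff-continuous at $\phi_0$, and fact (i) gives continuity of $R$ at $\phi_0$. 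Thus any discontinuity forces $\phi_0 = \res(\mathcal{I})$ for some admissible $\mathcal{I}$.

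\textbf{Strict monotonicity where positive (item 3).} Fix $\phi_1$ with $R(\phi_1) > 0$ and take $\phi_2 > \phi_1$; I must upgrade $R(\phi_2) \ge R(\phi_1)$ to a strict inequality. Let $c^* \triangleq \chv(\mathcal{X}(y,\phi_1))$ and choose $x^* \in \mathcal{X}(y,\phi_1)$ attaining $\|x^* - c^*\|_2 = R(\phi_1)$ (the set is compact). Then $x^*$ lies in some ellipsoid $\mathcal{X}(y,\phi_1,\mathcal{I}^*)$ and, since it maximizes the distance to $c^*$ over that ellipsoid and $R(\phi_1) > 0$, it must sit on its boundary, i.e. $(x^* - c_{\mathcal{I}^*})^\top V_{\mathcal{I}^*}(x^* - c_{\mathcal{I}^*}) = \phi_1 - \res(\mathcal{I}^*) < \phi_2 - \res(\mathcal{I}^*)$. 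Hence $x^*$ is interior to $\mathcal{X}(y,\phi_2,\mathcal{I}^*)$, so moving a short distance $t > 0$ along the outward unit direction $u = (x^* - c^*)/R(\phi_1)$ stays inside $\mathcal{X}(y,\phi_2)$ while reaching distance $R(\phi_1) + t$ from $c^*$; thus $\dist(c^*, \mathcal{X}(y,\phi_2)) > R(\phi_1)$. Finally I invoke uniqueness of the Chebyshev center: if $R(\phi_2) = R(\phi_1)$ held, then writing $c_2^* = \chv(\mathcal{X}(y,\phi_2))$ and using $\mathcal{X}(y,\phi_1) \subseteq \mathcal{X}(y,\phi_2)$ we get $\dist(c_2^*,\mathcal{X}(y,\phi_1)) \le \dist(c_2^*,\mathcal{X}(y,\phi_2)) = R(\phi_1) = \rad(\mathcal{X}(y,\phi_1))$, so $c_2^*$ minimizes the distance functional for $\mathcal{X}(y,\phi_1)$ and hence $c_2^* = c^*$, contradicting $\dist(c^*,\mathcal{X}(y,\phi_2)) > R(\phi_1) = \dist(c_2^*,\mathcal{X}(y,\phi_2))$. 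Therefore $R(\phi_2) > R(\phi_1)$.

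The main obstacle is item 3: monotone inclusion alone gives only $R(\phi_2) \ge R(\phi_1)$, and the strict gain has to be extracted from the geometry of a single growing ellipsoid together with the uniqueness of the Chebyshev center, since the optimal center itself moves with $\phi$. A secondary technical point to handle carefully is the degenerate range $\phi < \min_\mathcal{I}\res(\mathcal{I})$, where $\mathcal{X}(y,\phi)$ is empty and $R$ is undefined; the three statements are understood on the range where the set is nonempty, and the transition at $\phi = \min_\mathcal{I}\res(\mathcal{I})$ is itself a threshold, consistent with item 2.
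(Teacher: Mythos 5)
Your proof is correct. For items 1 and 2 it follows essentially the paper's route --- decompose $\mathcal{X}(y,\phi)$ into the finite union of ellipsoids~\eqref{eqn:unionellipsoid}, observe that the active index family can only change at the finitely many thresholds $\res(\mathcal{I})$, and deduce right-continuity everywhere plus two-sided continuity away from the thresholds from convergence of the underlying sets --- but you make rigorous a step the paper leaves implicit, namely why convergence of the sets implies convergence of the Chebyshev radius (your Hausdorff $1$-Lipschitz estimate for $\rad(\cdot)$ and the bound on the Hausdorff distance of a union over a fixed finite index set). For item 3 you take a genuinely different route: the paper first reduces to the open intervals between consecutive residues via items 1--2 and then asserts, without further argument, that the optimal value $\psi^*$ of the semidefinite program of Lemma~\ref{lemma:cvxopt} is strictly increasing in $\phi$ on each such interval; you instead argue directly and geometrically that the farthest point of $\mathcal{X}(y,\phi_1)$ from its Chebyshev center must lie on the boundary of one of the constituent ellipsoids, becomes interior once $\phi$ increases, so the old center sees a strictly larger sup-distance, and uniqueness of the Chebyshev center then forbids $R(\phi_2)=R(\phi_1)$. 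Your version is self-contained, applies to arbitrary $\phi_1<\phi_2$ without the interval reduction, and supplies exactly the justification that the paper's one-line claim about the SDP omits; the paper's version is shorter but leans on that unproved strict-monotonicity assertion.
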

\begin{proof}
	See Appendix~\ref{proof:lemmas}.
\end{proof}

Given a predefined approximation bound $\varepsilon>0$, we compute the corresponding estimate $\hat{x}$ for an averaged measurement $\avg({\bm Y}(k))\in\R^m$ in Algorithm~\ref{alg:fstar}. Denoted by $\hat{f}_{\varepsilon}$ the resulting estimator, and by $\hat{f}(y,\varepsilon)$ the output of Algorithm~\ref{alg:fstar} (i.e., the estimate $\hat{x}$) when the inputs are $y,\varepsilon$.  

Notice that Algorithm~\ref{alg:fstar} is a slight variation of the classic bisection method. The distinguished part lies in~\eqref{eqn:guaranteenalg}, which together with Lemma~\ref{lemma:continuous} assures that 
 for any $y\in\R^m$, 
\[\inf_{x\not\in\mathcal{B}_\delta(\hat{f}(y,\varepsilon))} d_x(y) \geq u(y,\delta) - \varepsilon,\]
where $u(y,\delta)$ is defined in~\eqref{eqn:upperd}. Therefore, the following theorem readily follows:
\begin{theorem} \label{theorem:performanceguaAlg}
  Let an estimator $\hat{f}_{\varepsilon}({\bm Y}(k))=\hat{f}(\avg({\bm Y}(k)),\varepsilon)$ be computed by Algorithm~\ref{alg:fstar} with $\avg({\bm Y}(k)$ and $\varepsilon>0$ as inputs, then for all $\delta>0$ this estimator possesses the guaranteed performance: 
  \[r(\hat{f}_{\varepsilon},\delta) \geq r(f^*_{\delta},\delta) - \varepsilon, \]  
  where $f^*_{\delta}$ is the optimal estimator in~\eqref{eqn:optestimatork} 
\end{theorem}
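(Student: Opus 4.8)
The plan is to reduce Theorem~\ref{theorem:performanceguaAlg} to two ingredients: a large-deviation characterization of the rate of an arbitrary compressed and deterministic estimator, and the per-measurement guarantee that the text attributes to Algorithm~\ref{alg:fstar} together with Lemma~\ref{lemma:continuous}. First I would observe that $\hat{f}_{\varepsilon}$ lies in $\mathcal{F}_{\rm cd}$: by construction its estimate at time $k$ is a deterministic function of $\avg({\bm Y}(k))$ alone, so Theorem~\ref{theorem:restrict_estimator} applies and the whole analysis reduces to the averaged statistic.

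The key identity I would extract from the proof of Theorem~\ref{theorem:optimalestimator} is that for any $f\in\mathcal{F}_{\rm cd}$ with estimation map $\tilde{f}$,
\begin{align*}
  r(f,\delta)=\inf_{y\in\R^m}\ \inf_{x\notin\mathcal{B}_\delta(\tilde{f}(y))} d_x(y).
\end{align*}
The reason is that $d_x(y)$ is exactly the large-deviation cost for the optimally attacked averaged observation $\avg({\bm Y}(k))$ to concentrate near $y$ when the true state is $x$, while the error event $\|\hat{x}_k-x\|_2>\delta$ coincides with $\{\,\tilde{f}(\avg({\bm Y}(k)))\notin\mathcal{B}_\delta(x)\,\}$; the worst case over $x$ and over attacks then produces the double infimum after swapping the order of the two infima. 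Specializing this to $f^*_\delta$ turns the inner infimum into $u(y,\delta)$ and recovers $r(f^*_\delta,\delta)=\inf_y u(y,\delta)=u(\delta)$, in agreement with Theorem~\ref{theorem:optimalestimator}; for the present theorem only the lower-bound (achievability) direction of this identity is needed.

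Next I would establish the per-$y$ guarantee. Using the monotonicity and right-continuity of $\rad(\mathcal{X}(y,\phi))$ from Lemma~\ref{lemma:continuous}, the modified bisection returns a value $\hat\phi=\hat\phi(y)$ with $\rad(\mathcal{X}(y,\hat\phi))\leq\delta$ and $\hat\phi\geq u(y,\delta)-\varepsilon$, and sets $\hat{f}(y,\varepsilon)=\chv(\mathcal{X}(y,\hat\phi))$. Since the Chebyshev center attains the radius and $\rad(\mathcal{X}(y,\hat\phi))\leq\delta$, we have $\mathcal{X}(y,\hat\phi)\subseteq\mathcal{B}_\delta(\hat{f}(y,\varepsilon))$; contrapositively every $x\notin\mathcal{B}_\delta(\hat{f}(y,\varepsilon))$ lies outside $\mathcal{X}(y,\hat\phi)$, i.e.\ $d_x(y)>\hat\phi$. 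Taking the infimum over such $x$ gives $\inf_{x\notin\mathcal{B}_\delta(\hat{f}(y,\varepsilon))}d_x(y)\geq\hat\phi\geq u(y,\delta)-\varepsilon$, which is precisely the guarantee stated before the theorem.

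Finally I would combine the pieces. Substituting the per-$y$ guarantee into the rate identity and taking the infimum over $y$,
\begin{align*}
  r(\hat{f}_{\varepsilon},\delta)\ \geq\ \inf_{y}\inf_{x\notin\mathcal{B}_\delta(\hat{f}(y,\varepsilon))}d_x(y)\ \geq\ \inf_{y}\bigl(u(y,\delta)-\varepsilon\bigr)=u(\delta)-\varepsilon=r(f^*_\delta,\delta)-\varepsilon,
\end{align*}
where the last two equalities use $u(\delta)=\inf_y u(y,\delta)$ and Theorem~\ref{theorem:optimalestimator}. The main obstacle is the per-$y$ guarantee, and specifically the behaviour at the discontinuity points of $\rad(\mathcal{X}(y,\phi))$ flagged in Lemma~\ref{lemma:continuous}: at such a jump the radius can leap across $\delta$, so a naive bisection could overshoot, and the modification in~\eqref{eqn:guaranteenalg} is exactly what forces the returned $\hat\phi$ to stay on the feasible side $\rad(\mathcal{X}(y,\hat\phi))\leq\delta$ while still approximating $u(y,\delta)=\sup\{\phi\geq0:\rad(\mathcal{X}(y,\phi))\leq\delta\}$ to within $\varepsilon$; one also has to invoke right-continuity to confirm $\mathbb{X}(y,\delta)=\mathcal{X}(y,\phi^\ast)$ so that $u(y,\delta)$ indeed equals this supremal $\phi^\ast$.
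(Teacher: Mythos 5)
Your proposal is correct and follows essentially the same route as the paper, which offers no standalone proof but asserts that the theorem ``readily follows'' from the per-$y$ guarantee $\inf_{x\notin\mathcal{B}_\delta(\hat f(y,\varepsilon))}d_x(y)\geq u(y,\delta)-\varepsilon$ combined with the achievability argument of Lemma~\ref{lemma:achieve} (Cram\'er's upper bound on the closed set $\{y:d_x(y)\geq c\}$); you reconstruct both ingredients faithfully. One minor caveat: you state the rate formula $r(f,\delta)=\inf_y\inf_{x\notin\mathcal{B}_\delta(\tilde f(y))}d_x(y)$ as an identity, whereas the paper only establishes the ``$\geq$'' direction for general $f\in\mathcal{F}_{\rm cd}$ --- but since you correctly note that only this direction is needed, the argument is sound.
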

Clearly, a smaller $\varepsilon$ in Algorithm~\ref{alg:fstar} leads to a better estimator, which, however, requires more iterations to run. 

\begin{algorithm}
  {\bf Inputs:} averaged measurements $y\in\R^m$,\\
  \hspace*{11mm} performance error tolerance $\varepsilon>0$.\\
  {\bf Output:} estimate $\hat{x}\in\R^n$\\
  {\bf Initialization:} Let
  \begin{align*}
    \underline{\phi}&=\min\{\res(\mathcal{I}): \mathcal{I}\subseteq\mathcal{M}, |\mathcal{I}|=m-q\} \\
    &\triangleq \Upsilon,
  \end{align*}
 and  $\overline{\phi}$ be such that $\rad(\mathcal{X}(y,\overline{\phi})) > \delta$. \\
  {\bf Repeat:}\\
  1. {\bf If}  $\overline{\phi} - \underline{\phi} < \varepsilon/2$ {\bf then} 
\begin{flalign*}
\hspace*{9mm}	&\phi = \max\{\Upsilon,\underline{\phi}-\varepsilon/2\},& \addtag \label{eqn:guaranteenalg}\\
\hspace*{9mm}	&\hat{x} = \chv(\mathcal{X}(y,\phi))&
\end{flalign*}
  \hspace*{7mm} {\bf Stop}\\
  \hspace*{2.5mm} {\bf EndIf} 

  2. Let $\phi = (\underline{\phi} + \overline{\phi})/2$.\\
  3. {\bf If}  $\rad(\mathcal{X}(y,\phi)) = \delta$ {\bf then} \\
  \hspace*{7mm} $\hat{x} = \chv(\mathcal{X}(y,\phi))$ \\
  \hspace*{7mm} {\bf Stop}\\
  \hspace*{2.5mm} {\bf ElseIf}  $\rad(\mathcal{X}(y,\phi)) > \delta$ {\bf then} \\
  \hspace*{7mm} $\overline{\phi} = \phi$ \\
  \hspace*{2.5mm} {\bf Else}  $\underline{\phi} = \phi$\\
  \hspace*{2.5mm} {\bf EndIf}

  \caption{Approximate Optimal Estimator $f^*_{\delta}$ in~\eqref{eqn:optestimatork}}
  \label{alg:fstar}
\end{algorithm}

\begin{remark}
Though semidefinite programming problem can be (approximately) solved in a polynomial time of program size~\cite{de2006aspects}. In our case, however, when $\phi$ is large enough, $|\mathfrak{I}_+(\phi)|=\begin{psmallmatrix}
	m\\q
\end{psmallmatrix}$, where $\begin{psmallmatrix}
m\\q
\end{psmallmatrix}$  is the binomial coefficient, which renders the optimization problem in Lemma~\ref{lemma:cvxopt} rather computationally heavy when $\phi$ and $m$ are large. Nevertheless, we defend our estimator from the following two aspects. First, though efficient and optimal algorithms might exist in certain problems, see e.g.,~\cite{xiaoqiang_securedetection}, the resilient information fusion under sparse attack is intrinsically of combinatorial nature, see e.g.,~\cite{fawzi2014secure, mishra2017secure, chong2015observability}, since we basically need to search over all combinations of possibly healthy sensors. Nevertheless, this work is just a starting point, and we are planning to investigate the approaches that could relieve the computational burden (in certain cases) just as in~\cite{shoukry2017secure, lee2018redundant, nakahira2018attack}. Second, in practice, a small $\delta$ would be usually chosen. Then the size of $\mathfrak{I}_+(\phi)$ will be small as well no matter how big $m$ is, and, therefore, the optimization problem in Lemma~\ref{lemma:cvxopt} could be efficiently solved. Though finding $\Upsilon$ of Algorithm~\ref{alg:fstar} is of combinatorial nature, computing $\res(\mathcal{I})$ (given in~\eqref{eqn:res}) for a given set $\mathcal{I}$ is light (notice that ${\bm W}_{\{\mathcal{I}\}}$ is a diagonal matrix and its inverse, therefore, is readily given). Therefore, the computation burden of Algorithm~\ref{alg:fstar} could be tolerated for a large $m$.
\end{remark}
\begin{remark} \label{remark:resilience}
	 The resilience of the proposed optimal estimator $f^*_{\delta}$ in~\eqref{eqn:optestimatork} may not be that apparent since Chebyshev center itself is sensitive to noises, i.e., the Chebyshev center of a set $\mathcal{A}$ can be driven to anywhere even if only one point of $\mathcal{A}$ is allowed to be manipulated.
	Nevertheless, the resilience of the estimator $f^*_{\delta}$ can be heuristically explained by the following two factors. First, when the time $k$ is large enough, the measurements from benign sensors can lead to rather accurate estimate, i.e., the $\res(\mathcal{I}_*)$  will be quite small, where $\mathcal{I}_*$ is of size $m-q$ and contains no compromised sensors. Therefore, $\mathcal{I}_*$ would be in the collection $\mathfrak{I}(\phi)$ defined in~\eqref{eqn:collectionI} and somehow serves as an anchor when computing the estimate as in Algorithm~\ref{alg:fstar} and Lemma~\ref{lemma:cvxopt}. Second, when the injected bias of a compromised sensor is too large, the resulting $\res(\mathcal{I})$ for any $\mathcal{I}$ containing this compromised sensor will be quite large as well. Therefore, the set $\mathcal{I}$ will not be in the collection  $\mathfrak{I}(\phi)$ and the measurement from this compromised sensor will be discarded when computing the estimate.
\end{remark}
\section{Uniformly Optimal Estimator for Homogeneous Sensors} \label{sec:uniformoptimal}
In this section we provide a simple yet uniformly optimal estimator $f$ such that $r(f,\delta)$ is simultaneously maximized for all $\delta >0$ when the sensors are homogeneous, i.e., $H_1=\cdots =H_m$ and $W_1=\cdots=W_m$. Notice that when homogeneous sensors are considered, to satisfy the $2q$-observable assumption in Assumption~\ref{assumpt:2qobservable}, the state has to be scalar, i.e., $x\in \R$.

To proceed, we first provide an upper bound of the optimal performance proved in Theorem~\ref{theorem:optimalestimator}, $u(\delta)$,  for any $\delta$ and \emph{any} system models (instead of only homogeneous sensors).

\begin{lemma} \label{theorem:upperbound}
	The optimal performance $u(\delta)$ in~\eqref{eqn:theoremoptimalrate} is upper bounded:
	\begin{align*}
		u(\delta) \leq \bar{u}(\delta),
	\end{align*}
	where $\bar{u}(\delta) = \delta^2\bar{u}(1)$ with $\bar{u}(1)$ being the optimal value of the following optimization problem:
	\begin{mini}|l| 
		{x\in\R^n, s\in\R^m}{\frac{1}{2}\sum_{i=1}^m (H_ix +s_i)^2/W_i}{\label{eqn:optupper}}{}
		\addConstraint{\|s\|_0}{ \leq 2q}
		\addConstraint{\|x\|_2}{= 1.}
	\end{mini} 
\end{lemma}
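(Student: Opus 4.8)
The plan is to exploit the outer infimum in the definition of the optimal rate, $u(\delta)=\inf_{y\in\R^m}u(y,\delta)$ from~\eqref{eqn:upperd1}, so that it suffices to exhibit a \emph{single} measurement vector $y$ for which $u(y,\delta)\le\delta^2\bar u(1)$. I would build such a $y$ directly from a minimizer of~\eqref{eqn:optupper} and engineer a genuine ambiguity: two candidate states, each explainable by $q$ corrupted sensors, that are $2\delta$ apart.

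Concretely, let $(x^*,s^*)$ attain the optimum in~\eqref{eqn:optupper}, so $\|x^*\|_2=1$, $\|s^*\|_0\le 2q$, and $\tfrac12\sum_i (H_ix^*+s_i^*)^2/W_i=\bar u(1)$. Write $\mathcal S=\supp(s^*)$ and split it as $\mathcal S=\mathcal S_1\cup\mathcal S_2$ (disjoint) with $|\mathcal S_1|\le q$ and $|\mathcal S_2|\le q$, which is feasible because $|\mathcal S|\le 2q$. I would then consider
\[ x_1=\delta x^*,\qquad x_2=-\delta x^*,\qquad \|x_1-x_2\|_2=2\delta, \]
together with the measurement
\[ y_i=\begin{cases} \delta H_ix^*, & i\in\mathcal S_1,\\ -\delta H_ix^*, & i\in\mathcal S_2,\\ 0, & i\notin\mathcal S. \end{cases} \]

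The next step is to bound the inconsistency of both candidates under this $y$. For $x_1$, in~\eqref{eqn:opt1} I would take the $q$-sparse bias $a$ with $a_i=H_ix_1-y_i$ for $i\in\mathcal S_2$ (legitimate since $|\mathcal S_2|\le q$) and $a_i=0$ otherwise; this zeroes every residual on $\mathcal S_2$ while $\mathcal S_1$ already fits $x_1$ exactly, leaving $d_{x_1}(y)\le\tfrac12\sum_{i\notin\mathcal S_2}(y_i-H_ix_1)^2/W_i=\tfrac12\sum_{i\notin\mathcal S}\delta^2(H_ix^*)^2/W_i\le\delta^2\bar u(1)$, where the last inequality discards the nonnegative terms indexed by $\mathcal S$ from $\bar u(1)$. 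By the symmetric choice cancelling the residuals on $\mathcal S_1$, I get $d_{x_2}(y)\le\delta^2\bar u(1)$ as well. Setting $\phi_0:=\delta^2\bar u(1)$, both $x_1,x_2$ lie in $\mathcal X(y,\phi_0)$, and since they are $2\delta$ apart, $\rad(\mathcal X(y,\phi_0))\ge\tfrac12\|x_1-x_2\|_2=\delta>0$; boundedness of this set, needed for the radius to be well defined, follows from Assumption~\ref{assumpt:2qobservable} via the union-of-ellipsoids representation~\eqref{eqn:unionellipsoid}.

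Finally I would convert the radius estimate into the claim. Using monotonicity and right-continuity of $\phi\mapsto\rad(\mathcal X(y,\phi))$ (Lemma~\ref{lemma:continuous}(1)), the set $\mathbb X(y,\delta)$ in~\eqref{eqn:Xestimator} equals $\mathcal X(y,\phi^*)$ with $\phi^*=\sup\{\phi\ge0:\rad(\mathcal X(y,\phi))\le\delta\}$, whence $u(y,\delta)\le\phi^*$. The delicate point—and the step I expect to require the most care—is showing $\phi^*\le\phi_0$: the bound $\rad(\mathcal X(y,\phi_0))\ge\delta$ is in general only non-strict (two points alone have Chebyshev radius exactly $\delta$), so I cannot immediately exclude $\phi_0$ from the feasible set. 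Here I would invoke the strict monotonicity of the radius while it is positive (Lemma~\ref{lemma:continuous}(3)): since $\rad(\mathcal X(y,\phi_0))\ge\delta>0$, it follows that $\rad(\mathcal X(y,\phi))>\delta$ for every $\phi>\phi_0$, forcing $\phi^*\le\phi_0$. Combining the estimates gives $u(\delta)\le u(y,\delta)\le\phi^*\le\phi_0=\delta^2\bar u(1)=\bar u(\delta)$, as desired.
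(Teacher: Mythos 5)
Your proposal is correct and follows essentially the same route as the paper: both build the ambiguous measurement $y$ from an optimizer $(x^*,s^*)$ of~\eqref{eqn:optupper} by splitting $\supp(s^*)$ into two sets of size at most $q$, place two candidate states $2\delta$ apart inside $\mathcal X(y,\delta^2\bar u(1))$, and conclude via the radius. The only difference is the finishing step — you invoke the strict monotonicity of $\phi\mapsto\rad(\mathcal X(y,\phi))$ from Lemma~\ref{lemma:continuous}(3) to force $\phi^*\le\phi_0$, whereas the paper exploits the scaling $\bar u(\delta)=\delta^2\bar u(1)$ to show $\rad(\mathcal X(y,\bar u(\delta)+\epsilon))>\delta$ for every $\epsilon>0$; both correctly close the non-strictness gap.
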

\begin{proof}
	See Appendix~\ref{appendix:upperbound}.
\end{proof}

In the remainder of this section, we consider the case where sensors are homogeneous and the system is scalar. Then without loss of generality, we let $H_i = 1$ for any $1\leq i\leq m$. We define the estimator $f^{\trm}$ be the ``trimmed mean", i.e.,
\begin{align} \label{eqn:ftrim}
f^{\trm}\left({\bm Y}(k)\right) = \trm\left(\avg({\bm Y}(k))\right),
\end{align}
where for any $y\in\R^m$, 
\begin{align}
	\trm(y) \triangleq \frac{1}{m-2q} \sum_{i=q+1}^{m-q} y_{[i]}
\end{align}
with $y_{[i]}$ being the $i$-th smallest element. In other words, $\trm(y)$ first discards the largest $q$ and smallest $q$ elements of $y$, and then averages over the remaining ones.

We show that the trimmed mean estimator $f^{\trm}$ is uniformly optimal in  Theorem~\ref{theorem:uniform}. The theorem is proved by showing that $f^{\trm}$ achieves the upper bound in Lemma~\ref{theorem:upperbound} for every $\delta$, which, in turn, means that the upper bound is tight when homogeneous sensors are considered.      
\begin{theorem} \label{theorem:uniform}
	When the sensors are homogeneous (and thus the system state is scalar), $f^{\trm}$ in~\eqref{eqn:ftrim} is uniformly optimal, i.e.,
	\[r(f^{\trm},\delta) = u(\delta)\]
	holds for every $\delta$. 
\end{theorem}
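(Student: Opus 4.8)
The plan is to establish the equality by a sandwich argument. Theorem~\ref{theorem:optimalestimator} already gives $r(f,\delta)\le u(\delta)$ for \emph{every} estimator, so in particular $r(f^{\trm},\delta)\le u(\delta)$, and Lemma~\ref{theorem:upperbound} gives $u(\delta)\le\bar u(\delta)$. Hence it suffices to prove the single reverse inequality $r(f^{\trm},\delta)\ge\bar u(\delta)$: once this holds, the three quantities collapse and $r(f^{\trm},\delta)=u(\delta)=\bar u(\delta)$. As a preliminary step I would evaluate $\bar u(\delta)$ explicitly in the homogeneous scalar case $H_i=1,\,W_i=W$: solving~\eqref{eqn:optupper} one takes $x=1$, sets $s_i=-1$ on any $2q$ coordinates (annihilating those summands) and $s_i=0$ on the rest, giving $\bar u(1)=(m-2q)/(2W)$ and therefore $\bar u(\delta)=\delta^2(m-2q)/(2W)$.

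The core of the argument is a deterministic robustness property of the trimmed mean. Writing $\bar y=\avg({\bm Y}(k))$, the $m-q$ benign coordinates equal $x+\bar w_i$ with $\bar w_i\sim\mathcal N(0,W/k)$ i.i.d., while the $q$ compromised coordinates are arbitrary (possibly adaptive and randomized). Let $\beta_{(1)}\le\cdots\le\beta_{(m-q)}$ denote the sorted benign coordinates. A counting argument on order statistics shows that for every $p$ with $q+1\le p\le m-q$ the $p$-th smallest of all $m$ coordinates, $\bar y_{[p]}$, satisfies $\beta_{(p-q)}\le\bar y_{[p]}\le\beta_{(p)}$ regardless of the attack (at least $p$ coordinates lie below $\beta_{(p)}$, and at least $m-p+1$ lie above $\beta_{(p-q)}$); summing over $p$ yields the pathwise envelope
\begin{align*}
\frac{1}{m-2q}\sum_{j=1}^{m-2q}\beta_{(j)}\;\le\;\trm(\bar y)\;\le\;\frac{1}{m-2q}\sum_{j=q+1}^{m-q}\beta_{(j)}.
\end{align*}
Since $f^{\trm}$ is translation equivariant and symmetric in the sensor index, the error distribution depends on neither $x$ nor $\mathcal{C}$, so the supremum in~\eqref{eqn:errorProb} reduces to a supremum over attack values; and the envelope, being valid for every realization and every attack, gives
\begin{align*}
e(f^{\trm},k,\delta)\;\le\;2\,\mathbb P\!\left(\frac{1}{m-2q}\sum_{j=q+1}^{m-q}\bar w_{(j)}>\delta\right),
\end{align*}
where $\bar w_{(1)}\le\cdots\le\bar w_{(m-q)}$ order the benign averaged noises and the factor $2$ accounts for the symmetric lower tail.

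It then remains to bound this tail by large deviations. The event $\{\sum_{j=q+1}^{m-q}\bar w_{(j)}>(m-2q)\delta\}$ is contained in the union, over the $\binom{m-q}{q}$ subsets $S$ of size $m-2q$, of the events $\{\sum_{i\in S}\bar w_i>(m-2q)\delta\}$; a union bound with the Gaussian tail estimate $\mathbb P(\mathcal N(0,\sigma^2)>t)\le e^{-t^2/(2\sigma^2)}$, applied with $\sigma^2=(m-2q)W/k$ and $t=(m-2q)\delta$, yields
\begin{align*}
e(f^{\trm},k,\delta)\;\le\;2\binom{m-q}{q}\exp\!\left(-k\,\frac{(m-2q)\delta^2}{2W}\right).
\end{align*}
Taking $-\tfrac1k\log(\cdot)$ and letting $k\to\infty$ (the binomial prefactor is constant in $k$) gives $r(f^{\trm},\delta)\ge(m-2q)\delta^2/(2W)=\bar u(\delta)$, closing the sandwich. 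I expect the main obstacle to be the second step: making the order-statistics envelope fully rigorous while the $q$ compromised coordinates are chosen adversarially and adaptively from the entire history, and then arguing that translation and permutation invariance genuinely collapse the suprema over $x$ and $\mathcal{C}$, so that the one-sided tail bound really controls the worst-case probability $e(f^{\trm},k,\delta)$.
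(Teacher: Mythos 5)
Your proposal is correct, and it reaches the key inequality $r(f^{\trm},\delta)\ge\bar u(\delta)$ by a genuinely different route than the paper. The outer frame is the same sandwich the paper uses: Theorem~\ref{theorem:optimalestimator} gives $r(f^{\trm},\delta)\le u(\delta)$ and Lemma~\ref{theorem:upperbound} gives $u(\delta)\le\bar u(\delta)$, and your evaluation $\bar u(\delta)=\delta^2(m-2q)/(2W)$ from~\eqref{eqn:optupper} matches the paper's (which normalizes $W_i=W/2$). For the lower bound, the paper proves the pointwise inequality $d_x(y)\ge\bar u(|x-\trm(y)|)$ — via the observation that the $q$ residuals discarded in $d_x(y)$ always lie among the $2q$ extreme order statistics of $y$, followed by Jensen — and then feeds the closed set $\{y:d_x(y)\ge\bar u(\delta)\}$ into the Cram\'{e}r-based achievability argument of Lemma~\ref{lemma:achieve}. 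You instead sandwich $\trm(\avg({\bm Y}(k)))$ pathwise between two trimmed averages of the benign coordinates (your order-statistics envelope $\beta_{(p-q)}\le\bar y_{[p]}\le\beta_{(p)}$ is valid exactly on the range $q+1\le p\le m-q$ you use), which reduces the error event to one measurable with respect to the benign noises alone; this makes the suprema over $g$, $\mathcal{C}$ and $x$ collapse immediately — the "main obstacle" you flag is not actually an obstacle, since the envelope holds for every realization and every (adaptive, randomized) attack. You then finish with a union bound over $\binom{m-q}{q}$ subsets and an explicit Gaussian tail estimate rather than Cram\'{e}r's theorem. What each approach buys: the paper's argument is shorter given the machinery already built and ties $f^{\trm}$ directly to the rate function $d_x$ governing the optimal estimator; yours is elementary and self-contained, and as a bonus delivers the non-asymptotic bound $e(f^{\trm},k,\delta)\le 2\binom{m-q}{q}e^{-k\bar u(\delta)}$, which the asymptotic argument does not.
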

\begin{proof}
	See Appendix~\ref{appendix:uniform}.
\end{proof}
%
%

\section{Numerical Examples}
\label{sec:simulation}

\subsection{Illustration of $\mathcal X(y,\phi)$ and $\rad(\mathcal X(y,\phi))$} \label{sec:illustration}
We illustrate how  $\mathcal X(y,\phi)$ and  $\rad(\mathcal X(y,\phi))$ vary with $\phi$ in Fig.~\ref{Fig:ellipse} and Fig.~\ref{Fig:radius}, respectively. The parameters used are summarized as follows: $m=4$ sensors used to estimate $x\in\R^2$, $q=1$ sensor might be manipulated, measurement matrix ${\bm H}=[1, 0;0, 1;1, 2;2, 1]$, covariance matrix ${\bm W}= \diag(1,2,2,1)$, and observation $y=[4;-4;5;-5]$. Let $\res_{[i]}$ be the $i$-th item of the set $\{\res(\mathcal{I}): \mathcal{I}\subseteq \mathcal{M}, |\mathcal{I}| = m-q\}$ sorted in an ascending order. Then we have, in our case, that $\res_{[1]}=3.68182,  \res_{[2]}= 5.78571, \res_{[3]}=13.5, \res_{[4]}=24.3$.

From Fig.~\ref{Fig:ellipse}, one sees that $\mathcal X(y,\phi)$ is indeed a union of several ellipses. One also sees in Fig.~\ref{Fig:radius} that $\rad(\mathcal X(y,\phi))$ is strictly increasing w.r.t. $\phi$ when $\rad(\mathcal X(y,\phi))>0$, and discontinuous only at $\res_{[2]}$ and $\res_{[3]}$, which verifies Lemma~\ref{lemma:continuous}. Notice also that as $\phi$ crosses $\res_{[4]}$ from below, the new ellipse, which is indicated by the red one in the right-below sub-figure of Fig.~\ref{Fig:ellipse}, is inside the blue dashed circle that covers the previous three ellipses. Therefore, $\rad(\mathcal X(y,\phi))$ is continuous at $\phi=\res_{[4]}$.

	\begin{figure}[ht]
		\begin{center}
		\input{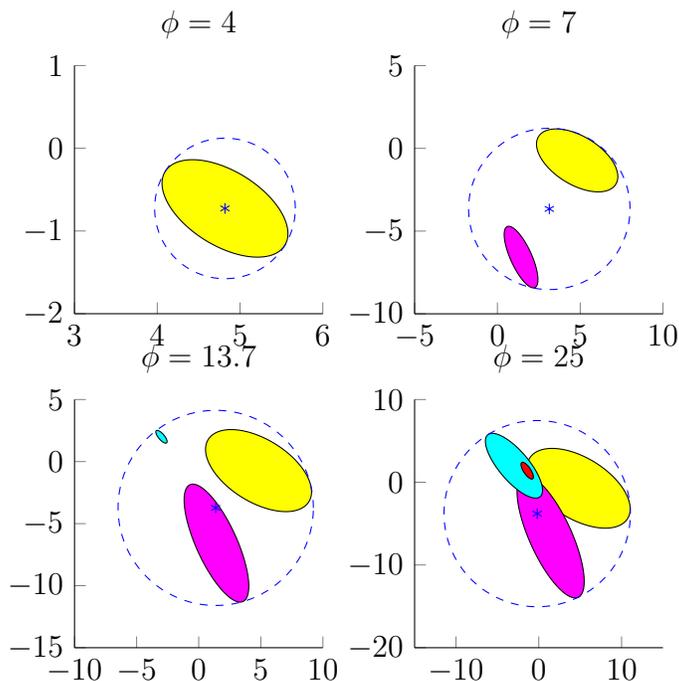}
		\caption{The set $\mathcal X(y,\phi)$ with different $\phi$'s. In each of the four sub-figures, x-axis is $x_1$ and y-axis $x_2$. The filled area corresponds to $\mathcal X(y,\phi)$. The blue ``*" is the Chebyshev center of $\mathcal X(y,\phi)$, and blue dashed line the circle centered at the Chebyshev center with radius $\rad(\mathcal X(y,\phi))$. }
		\label{Fig:ellipse}
	\end{center}
	\end{figure}

	\begin{figure}[ht]
		\begin{center}
%
%
\begin{tikzpicture}

\begin{axis}[%
width=2.8in,
height=1.8in,
at={(0.663in,0.442in)},
scale only axis,
xmin=0,
xmax=30,
xlabel style={font=\color{white!15!black}},
xlabel={$\phi$},
ymin=0,
ymax=14,
ylabel style={font=\color{white!15!black}},
ylabel={$\rad(\mathcal X(y,\phi))$},
axis background/.style={fill=white}
]
\addplot [color=blue, forget plot]
  table[row sep=crcr]{%
3.68181818181818	0.000481885\\
3.78181818181818	0.476\\
3.88181818181818	0.673358\\
3.98181818181818	0.824929\\
4.08181818181818	0.952326\\
4.18181818181818	1.06462\\
4.28181818181818	1.16625\\
4.38181818181818	1.26002\\
4.48181818181818	1.34701\\
4.58181818181818	1.42871\\
4.68181818181818	1.50598\\
4.78181818181818	1.57946\\
4.88181818181818	1.64941\\
4.98181818181818	1.71709\\
5.08181818181818	1.78158\\
5.18181818181818	1.84415\\
5.28181818181818	1.90466\\
5.38181818181818	1.96362\\
5.48181818181818	2.02051\\
5.58181818181818	2.07585\\
5.68181818181818	2.12979\\
5.78181818181818	2.18201\\
};
\addplot [color=blue, forget plot]
  table[row sep=crcr]{%
5.88181818181818	4.1691\\
5.98181818181818	4.26904\\
6.08181818181818	4.35147\\
6.18181818181818	4.42414\\
6.28181818181818	4.49043\\
6.38181818181818	4.55212\\
6.48181818181818	4.61029\\
6.58181818181818	4.66564\\
6.68181818181818	4.71864\\
6.78181818181818	4.76966\\
6.88181818181818	4.81896\\
6.98181818181818	4.86675\\
7.08181818181818	4.91321\\
7.18181818181818	4.95847\\
7.28181818181818	5.00265\\
7.38181818181818	5.04584\\
7.48181818181818	5.08812\\
7.58181818181818	5.12956\\
7.68181818181818	5.17022\\
7.78181818181818	5.21016\\
7.88181818181818	5.24943\\
7.98181818181818	5.28805\\
8.08181818181818	5.32608\\
8.18181818181818	5.36355\\
8.28181818181818	5.40048\\
8.38181818181818	5.4369\\
8.48181818181818	5.47284\\
8.58181818181818	5.50833\\
8.68181818181818	5.54337\\
8.78181818181818	5.57799\\
8.88181818181818	5.61222\\
8.98181818181818	5.64605\\
9.08181818181818	5.67952\\
9.18181818181818	5.71263\\
9.28181818181818	5.7454\\
9.38181818181818	5.77784\\
9.48181818181818	5.80996\\
9.58181818181818	5.84177\\
9.68181818181818	5.87329\\
9.78181818181818	5.90451\\
9.88181818181818	5.93546\\
9.98181818181818	5.96614\\
10.0818181818182	5.99655\\
10.1818181818182	6.02672\\
10.2818181818182	6.05663\\
10.3818181818182	6.0863\\
10.4818181818182	6.11574\\
10.5818181818182	6.14495\\
10.6818181818182	6.17394\\
10.7818181818182	6.20271\\
10.8818181818182	6.23127\\
10.9818181818182	6.25963\\
11.0818181818182	6.28779\\
11.1818181818182	6.31575\\
11.2818181818182	6.34352\\
11.3818181818182	6.3711\\
11.4818181818182	6.3985\\
11.5818181818182	6.42572\\
11.6818181818182	6.45277\\
11.7818181818182	6.47964\\
11.8818181818182	6.50635\\
11.9818181818182	6.53289\\
12.0818181818182	6.55928\\
12.1818181818182	6.5855\\
12.2818181818182	6.61157\\
12.3818181818182	6.63749\\
12.4818181818182	6.66327\\
12.5818181818182	6.68889\\
12.6818181818182	6.71437\\
12.7818181818182	6.73972\\
12.8818181818182	6.76492\\
12.9818181818182	6.78999\\
13.0818181818182	6.81492\\
13.1818181818182	6.83973\\
13.2818181818182	6.8644\\
13.3818181818182	6.88895\\
13.4818181818182	6.91338\\
};
\addplot [color=blue, forget plot]
  table[row sep=crcr]{%
13.5818181818182	7.76251\\
13.6818181818182	7.85003\\
13.7818181818182	7.92744\\
13.8818181818182	7.99722\\
13.9818181818182	8.06122\\
14.0818181818182	8.12076\\
14.1818181818182	8.1768\\
14.2818181818182	8.23001\\
14.3818181818182	8.2809\\
14.4818181818182	8.32982\\
14.5818181818182	8.37707\\
14.6818181818182	8.42285\\
14.7818181818182	8.46734\\
14.8818181818182	8.51068\\
14.9818181818182	8.55299\\
15.0818181818182	8.59435\\
15.1818181818182	8.63485\\
15.2818181818182	8.67456\\
15.3818181818182	8.71355\\
15.4818181818182	8.75185\\
15.5818181818182	8.78953\\
15.6818181818182	8.82661\\
15.7818181818182	8.86314\\
15.8818181818182	8.89914\\
15.9818181818182	8.93465\\
16.0818181818182	8.96969\\
16.1818181818182	9.00429\\
16.2818181818182	9.03847\\
16.3818181818182	9.07225\\
16.4818181818182	9.10564\\
16.5818181818182	9.13866\\
16.6818181818182	9.17133\\
16.7818181818182	9.20366\\
16.8818181818182	9.23567\\
16.9818181818182	9.26736\\
17.0818181818182	9.29876\\
17.1818181818182	9.32986\\
17.2818181818182	9.36068\\
17.3818181818182	9.39123\\
17.4818181818182	9.42152\\
17.5818181818182	9.45156\\
17.6818181818182	9.48135\\
17.7818181818182	9.5109\\
17.8818181818182	9.54021\\
17.9818181818182	9.56931\\
18.0818181818182	9.59818\\
18.1818181818182	9.62684\\
18.2818181818182	9.6553\\
18.3818181818182	9.68355\\
18.4818181818182	9.71161\\
18.5818181818182	9.73947\\
18.6818181818182	9.76715\\
18.7818181818182	9.79464\\
18.8818181818182	9.82195\\
18.9818181818182	9.8491\\
19.0818181818182	9.87607\\
19.1818181818182	9.90287\\
19.2818181818182	9.92951\\
19.3818181818182	9.95599\\
19.4818181818182	9.98232\\
19.5818181818182	10.0085\\
19.6818181818182	10.0345\\
19.7818181818182	10.0604\\
19.8818181818182	10.0861\\
19.9818181818182	10.1117\\
20.0818181818182	10.1372\\
20.1818181818182	10.1625\\
20.2818181818182	10.1877\\
20.3818181818182	10.2127\\
20.4818181818182	10.2376\\
20.5818181818182	10.2624\\
20.6818181818182	10.2871\\
20.7818181818182	10.3116\\
20.8818181818182	10.3361\\
20.9818181818182	10.3604\\
21.0818181818182	10.3846\\
21.1818181818182	10.4086\\
21.2818181818182	10.4326\\
21.3818181818182	10.4565\\
21.4818181818182	10.4802\\
21.5818181818182	10.5039\\
21.6818181818182	10.5274\\
21.7818181818182	10.5508\\
21.8818181818182	10.5741\\
21.9818181818182	10.5974\\
22.0818181818182	10.6205\\
22.1818181818182	10.6435\\
22.2818181818182	10.6664\\
22.3818181818182	10.6893\\
22.4818181818182	10.712\\
22.5818181818182	10.7347\\
22.6818181818182	10.7572\\
22.7818181818182	10.7797\\
22.8818181818182	10.8021\\
22.9818181818182	10.8243\\
23.0818181818182	10.8465\\
23.1818181818182	10.8687\\
23.2818181818182	10.8907\\
23.3818181818182	10.9126\\
23.4818181818182	10.9345\\
23.5818181818182	10.9563\\
23.6818181818182	10.978\\
23.7818181818182	10.9996\\
23.8818181818182	11.0211\\
23.9818181818182	11.0426\\
24.0818181818182	11.064\\
24.1818181818182	11.0853\\
24.2818181818182	11.1065\\
24.3818181818182	11.1277\\
24.4818181818182	11.1488\\
24.5818181818182	11.1698\\
24.6818181818182	11.1907\\
24.7818181818182	11.2116\\
24.8818181818182	11.2324\\
24.9818181818182	11.2531\\
25.0818181818182	11.2738\\
25.1818181818182	11.2944\\
25.2818181818182	11.3149\\
25.3818181818182	11.3354\\
25.4818181818182	11.3558\\
25.5818181818182	11.3761\\
25.6818181818182	11.3964\\
25.7818181818182	11.4166\\
25.8818181818182	11.4367\\
25.9818181818182	11.4568\\
26.0818181818182	11.4768\\
26.1818181818182	11.4967\\
26.2818181818182	11.5166\\
26.3818181818182	11.5365\\
26.4818181818182	11.5562\\
26.5818181818182	11.5759\\
26.6818181818182	11.5956\\
26.7818181818182	11.6152\\
26.8818181818182	11.6347\\
26.9818181818182	11.6542\\
27.0818181818182	11.6736\\
27.1818181818182	11.693\\
27.2818181818182	11.7123\\
27.3818181818182	11.7316\\
27.4818181818182	11.7508\\
27.5818181818182	11.7699\\
27.6818181818182	11.789\\
27.7818181818182	11.808\\
27.8818181818182	11.827\\
27.9818181818182	11.846\\
28.0818181818182	11.8649\\
28.1818181818182	11.8837\\
28.2818181818182	11.9025\\
28.3818181818182	11.9212\\
28.4818181818182	11.9399\\
28.5818181818182	11.9585\\
28.6818181818182	11.9771\\
28.7818181818182	11.9957\\
28.8818181818182	12.0141\\
28.9818181818182	12.0326\\
29.0818181818182	12.051\\
29.1818181818182	12.0693\\
29.2818181818182	12.0876\\
};
\addplot [color=black, dashed, forget plot]
  table[row sep=crcr]{%
5.78571	0\\
5.78571	14\\
};
\addplot [color=black, dashed, forget plot]
  table[row sep=crcr]{%
3.68182	0\\
3.68182	14\\
};
\addplot [color=black, dashed, forget plot]
  table[row sep=crcr]{%
13.5	0\\
13.5	14\\
};
\addplot [color=black, dashed, forget plot]
  table[row sep=crcr]{%
24.3	0\\
24.3	14\\
};

\addplot [color=blue,  forget plot]
table[row sep=crcr]{%
	0	0\\
	3.68182	0\\
};

\end{axis}

\node [above right] at (7,	5.65) {$ \res_{[4]}$};
\node [above right] at (4.55,	5.65) {$ \res_{[3]}$};
\node [above right] at (2.1,	5.65) {$ \res_{[1]}$};
\node [above right] at (2.9,	5.65) {$ \res_{[2]}$};
\end{tikzpicture}%
		\caption{Radius $\rad(\mathcal X(y,\phi))$ as a function of $\phi$. }
		\label{Fig:radius}
	\end{center}
	\end{figure}
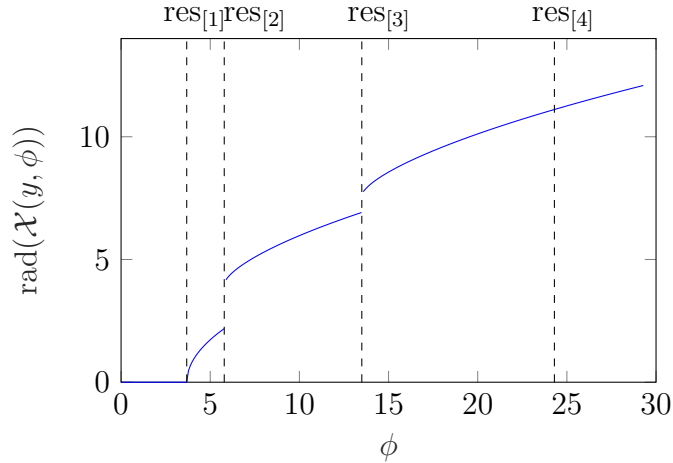

\subsection{Resilience of the Proposed Estimator}
In the following, in order to verify the intuitive comments of Remark~\ref{remark:resilience} about the resilience of $f^*_{\delta}$~\eqref{eqn:optestimatork}, we use a numerical example to show how the output of $f^*_{\delta}$ varies with the injected bias and $\delta$. The parameters used are summarized as follows: $m=4$ sensors used to estimate $x\in\R^2$, measurement matrix ${\bm H}=[1, 0;0, 1;1, 2;2, 1]$, covariance matrix ${\bm W}= \diag(1,2,2,1)$, and observation $z=[1;1;3;3]$. 
We let the fourth sensor be attacked, i.e., the first three elements of $y$ are $[1;1;3]$ and $y_4 = z_4+ a$. In particular, we let $a$ vary from $0$ to $15$. We simulate our estimator $f^*_\delta$ for two different error thresholds $\delta=1,3$, and further compare it to the least squares estimator, which computes the estimate as $({\bm H}^\top {\bm W}^{-1}{\bm H})^{-1}{\bm H}^\top{\bm W}^{-1}y$. When using Algorithm~\ref{alg:fstar}, we let the performance error tolerance $\varepsilon=0.001$. 

The result is illustrated in Fig.~\ref{Fig:resilience}. One sees that when the bias injected $a$ is too large, the estimation error of our algorithm is zero, i.e., the attack effects are eliminated. This is consistent with intuitive comments in Remark~\ref{remark:resilience}. Furthermore, when using a smaller $\delta$ (i.e., $\delta=1$ in our example), the estimator tends to discard the injected bias: the zero-error range is $a\in[3,\infty]$ for $\delta=1$, which is contrasted with $[8,\infty]$ for $\delta=3$. This is because given the same observation $y$, smaller $\delta$ is, smaller $\phi$ and, thus, the collection $\mathfrak{I}(\phi)$ are, which means that the ``abnormal" data (with large $\res(\mathcal{I})$) will be more likely to be discarded. It is clear that the naive least squares estimator is not resilient to the attack.

	\begin{figure}[ht]
		\begin{center}
%
%
\definecolor{mycolor1}{rgb}{0.00000,0.44700,0.74100}%
\definecolor{mycolor2}{rgb}{0.85000,0.32500,0.09800}%
\definecolor{mycolor3}{rgb}{0.92900,0.69400,0.12500}%
\begin{tikzpicture}

\begin{axis}[%
width=3.2in,
height=2.21in,
at={(0.758in,0.481in)},
scale only axis,
xmin=0,
xmax=15,
ymin=0,
ymax=6,
axis background/.style={fill=white},
legend style={at={(0.02,0.7)}, anchor=south west, legend cell align=left, align=left, draw=white!15!black}
]
\addplot [color=mycolor1, mark=asterisk, mark options={solid, mycolor1}]
  table[row sep=crcr]{%
0 5.44216716124729e-06\\
1 0.404051211926202\\
2 0.924989169616453\\
3 0.946362957190796\\
4 1.07965025203601\\
5 1.32993544913381\\
6 1.52350715884455\\
7 1.57630060110592\\
8 2.32581933568804e-06\\
9 6.38003949470632e-06\\
10 7.01420223370537e-06\\
11 1.14248268329468e-06\\
12 1.67981367673645e-06\\
13 0\\
14 0\\
15 0\\
};
\addlegendentry{$f^*_\delta, \delta=3$}

\addplot [color=mycolor2, mark=o, mark options={solid, mycolor2}]
  table[row sep=crcr]{%
0 2.81091600305088e-05\\
1 0.315182188116361\\
2 0.508161122970431\\
3 1.33388437086458e-05\\
4 1.60911469869570e-05\\
5 1.74556094245161e-05\\
6 1.74550194795729e-05\\
7 1.74531860045418e-05\\
8 1.74552832422059e-05\\
9 1.74515115845250e-05\\
10 1.74544096882810e-05\\
11 1.74560570079828e-05\\
12 1.74552832422059e-05\\
13 0\\
14 0\\
15 0\\
};
\addlegendentry{$f^*_\delta, \delta=1$}

\addplot [color=mycolor3, mark=square, mark options={solid, mycolor3}]
  table[row sep=crcr]{%
0	2.22044604925031e-16\\
1	0.393280865770661\\
2	0.786561731541322\\
3	1.17984259731198\\
4	1.57312346308264\\
5	1.96640432885331\\
6	2.35968519462397\\
7	2.75296606039463\\
8	3.14624692616529\\
9	3.53952779193595\\
10	3.93280865770661\\
11	4.32608952347727\\
12	4.71937038924793\\
13	5.11265125501859\\
14	5.50593212078925\\
15	5.89921298655991\\
};
\addlegendentry{Least squares estimator}

\end{axis}
\end{tikzpicture}%
		\caption{$2$-norm of estimation error as a function of the bias injected.  Our algorithm with different $\delta$ and the least squares estimator are compared. }
		\label{Fig:resilience}
	\end{center}
	\end{figure}
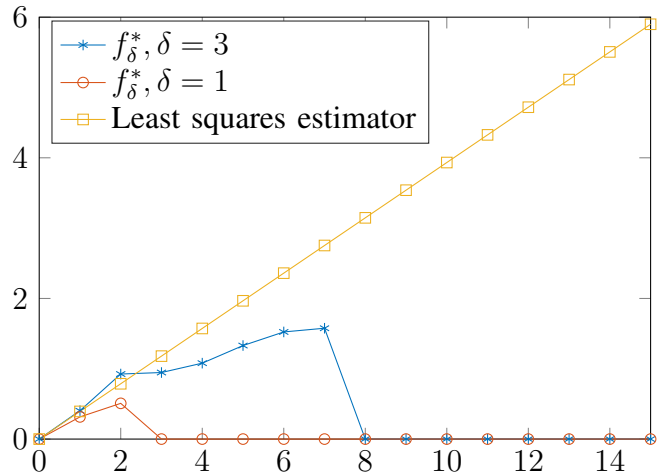

\subsection{Comparison with Other Estimators}
In this section, we compare our estimator with the LASSO.
In our case, given $\avg({\bm Y}(k))=y\in\R^m$, the LASSO reads 
\begin{align} \label{eqn:lasso}
\mathop{\rm minimize}_{x\in\R^n, a\in\R^m}  \|({\bm W}/k)^{-1/2}(y-{\bm H}x-a)\|^2 + \lambda\|a\|_1,
\end{align}
where $\lambda$ is predefined parameter and the optimal solution $x$ is the estimate.  Basically, the smaller $\lambda$ is, the securer is LASSO. Therefore, in our simulation, we set $\lambda=10^{-3}$.
Notice that the $l_0$ and $l_1$-based state estimation procedures~\cite{fawzi2014secure,pajic2017attack} works in systems without noises or with (small) bounded measurement noises, while the estimator in~\cite{mishra2017secure} (i.e., Algorithm~2 thereof) is undecided for (many) certain observations, that is, it can happen that no subset of sensors are deemed as attack free and, therefore, no output will be generated. We should also note that while~\cite{mo2016secure} proves the resilience of LASSO when each sensor is observable, i.e., $H_i$ is scalar in our case, the LASSO under sparse  attack is not resilient in general; see~\cite{HanMX15}.  Therefore, we consider scalar state in this simulation and for simplicity, we further assume the sensors are homogeneous.

We assume there are totally $m=5$ sensors, of which $q=1$ sensor is compromised. We let measurement matrix ${\bm H}=[1; 1;1;1;1]$ and covariance matrix ${\bm W}= \diag(1,1,1,1,1)$. When computing the worst-case probability $e(f,k,\delta)$ in~\eqref{eqn:errorProb}, we assume that, without loss of generality, the true state is $x=0$ and the fifth sensor compromised. We then simulate the error probability for a fixed $y_5$ with   
 $y=\avg({\bm Y}(k))$ being the averaged measurement, the maximum of which is then regarded as the worst-case probability $e(f,k,\delta)$.
From Fig.~\ref{Fig:comp}, one sees that for either $\delta=1$ or $\delta=1.5$, the performances of $f^{\trm}$ and $f^*_{\delta}$ are quite close, which is consistent with the uniform optimality of $f^{\trm}$ stated in Theorem~\ref{theorem:uniform}. One should also note that both $f^{\trm}$ and $f^*_{\delta}$ outperform the LASSO. 
 

	\begin{figure}[ht]
		\begin{center}
%
%
\definecolor{mycolor1}{rgb}{0.00000,0.44700,0.74100}%
\definecolor{mycolor2}{rgb}{0.85000,0.32500,0.09800}%
\definecolor{mycolor3}{rgb}{0.92900,0.69400,0.12500}%
\definecolor{mycolor4}{rgb}{0.49400,0.18400,0.55600}%
\definecolor{mycolor5}{rgb}{0.46600,0.67400,0.18800}%
\definecolor{mycolor6}{rgb}{0.30100,0.74500,0.93300}%
\begin{tikzpicture}

\begin{axis}[%
width=2.8in,
height=1.8in,
at={(0.663in,0.442in)},
scale only axis,
xmin=1,
xmax=10,
xtick={ 1,  5, 10},
xlabel style={font=\color{white!15!black}},
xlabel={time},
ymode=log,
ymin=1e-20,
ymax=1,
yminorticks=true,
ylabel style={font=\color{white!15!black}},
ylabel={worst-case probability},
axis background/.style={fill=white},
legend style={at={(0.02,-0.1)}, anchor=south west, legend cell align=left, align=left, draw=white!15!black}
]
\addplot [color=mycolor1, mark=asterisk, mark options={solid, mycolor1}]
  table[row sep=crcr]{%
1	0.122245\\
5	0.00018861\\
10	6.10543e-08\\
};
\addlegendentry{$f^*_\delta,\delta=1$}

\addplot [color=mycolor2, mark=o, mark options={solid, mycolor2}]
  table[row sep=crcr]{%
1	0.0961741\\
5	0.0001431\\
10	4.3281e-08\\
};
\addlegendentry{$f^{\trm},\delta=1$}

\addplot [color=mycolor3, mark=square, mark options={solid, mycolor3}]
  table[row sep=crcr]{%
1	0.139887\\
5	0.00063891\\
10	1.4835e-06\\
};
\addlegendentry{LASSO, $\delta=1$}

\addplot [color=mycolor4, mark=asterisk, mark options={solid, mycolor4}]
  table[row sep=crcr]{%
1	0.0178203\\
5	4.76865e-09\\
10	1.35969e-16\\
};
\addlegendentry{$f^*_\delta,\delta=1.5$}

\addplot [color=mycolor5, mark=o, mark options={solid, mycolor5}]
  table[row sep=crcr]{%
1	0.0134201\\
5	3.58253e-09\\
10	1.02141e-16\\
};
\addlegendentry{$f^{\trm},\delta=1.5$}

\addplot [color=mycolor6, mark=square, mark options={solid, mycolor6}]
  table[row sep=crcr]{%
1	0.0198031\\
5	1.60569e-07\\
10	1.10589e-12\\
};
\addlegendentry{LASSO, $\delta=1.5$}

\end{axis}
\end{tikzpicture}%
		\caption{Worst-case probability $e(f,k,\delta)$ in~\eqref{eqn:errorProb} as a function of estimator $f$ (our proposed estimator $f^*_{\delta}$ in Algorithm~\ref{alg:fstar}, trimmed mean estimator $f^{\trm}$ in~\eqref{eqn:ftrim}, and LASSO in~\eqref{eqn:lasso}), time $k$ ($1,5,10$), and error threshold $\delta$ ($1,1.5$). }
		\label{Fig:comp}
	\end{center}
	\end{figure}
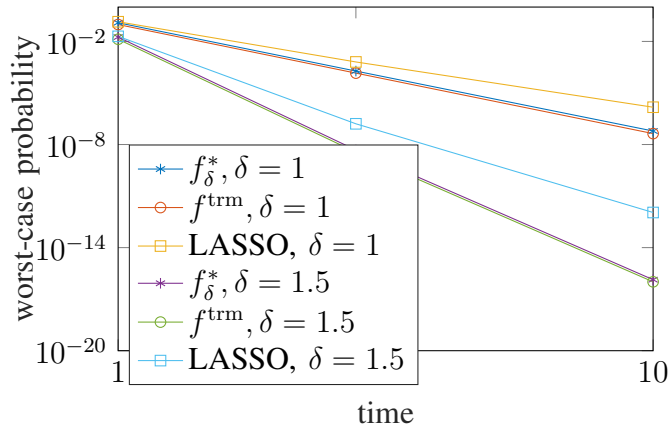

\section{Conclusion and Future Work}
\label{sec:conclusion}
In this paper, we provided a different perspective on secure static state estimation with Byzantine sensors by introducing  a new probabilistic performance metric, i.e., the decaying rate of the worst-case probability that the estimation error is larger than some value $\delta$ rather than the worst-case error or estimation error covariance in the existing literature.  This new metric does not necessarily require bounded noise. With this metric, we gave an optimal estimator for any given error threshold $\delta$, which is the Chebyshev center of a certain set, and proposed an algorithm to compute it. A significant byproduct is that if  distribution of the observation noise is in the exponential family, the sufficient statistic for the underlying state remains the same whether or not the attacker is present. When the sensors are homogeneous, we further derived a  simple yet uniformly optimal estimator, which, to be specific, is the trimmed mean of the averaged observations and simultaneously optimal for every $\delta$. 

For the future work, there are two interesting directions. One is to extend the existing results into dynamic systems, while the other one is to investigate the uniformly optimal estimator when sensors are heterogeneous.

\appendices

\section{Proof of Lemma~\ref{lemma:compressed_estimator}}
\label{appendix:compressed}
The proof is of constructive nature and mainly stems from the fact that  $\avg({\bm Z}(k))$ is a sufficient statistic for the underlying state $x$, where ${\bm Z}(k)$ is the ``true`` measurement matrix when there are no attacks and is defined in the same manner with ${\bm Y}(k)$. 

In the following, for simplicity of presentation, we do not distinguish the probability density function (pdf) for a continuous random variable and probability mass function (pmf) for a discrete one. Therefore, in some cases the summation is actually needed though we use integration universally.

For any $f\in\mathcal{F}$,  we let $f'$ satisfy~\eqref{eqn:constructf1}~and~\eqref{eqn:constructf2}.
For any $y\in\mathbb{R}^m$, Borel set $\mathcal{A}\subseteq \mathbb R^n$, and time $k$, 
\begin{align*}
  &\mathbb{P}_{f'} (\hat{x}_k \in\mathcal{A} | \avg({\bm Y}(k)) = y ) \\
  =&  \int_{ Y\in\mathbb{R}^{m\times k}}  \mathbb{P}_f( \hat{x}_k \in\mathcal{A}   |{\bm Y}(k) = Y)\\
  &\qquad \qquad {\rm d}\mathbb{P}({\bm Z}(k) = Y | \avg({\bm Z}(k) ) = y).\addtag \label{eqn:constructf1}  	
\end{align*}
The above equation is the integral of $\mathbb{P}_f(\cdot)$ over $Y\in\R^{m\times k}$ with respect to the  conditional probability measure $\mathbb{P}({\bm Z}(k)  | \avg({\bm Z}(k) ) = y)$.
Notice that this conditional probability measure $\mathbb{P}({\bm Z}(k)  | \avg({\bm Z}(k) ) = y)$ is well-defined since $\avg({\bm Z}(k) )$ is a sufficient statistic of the ``true'' measurements ${\bm Z}(k)$ for the underlying state $x$, i.e., for any state $x$,
\[\mathbb{P}_x({\bm Z}(k)  | \avg({\bm Z}(k) ) = y) = \mathbb{P}({\bm Z}(k)  | \avg({\bm Z}(k) ) = y),\]
where $\mathbb{P}_x(\cdot)$ denotes the probability measure governing the original measurements ${\bm Z}(k)$ when the state $x$ is given.
Notice that RHS of~\eqref{eqn:constructf1} can be interpreted as ``taking expectation'' of the conditional probability measure $\mathbb{P}_f( \hat{x}_k|{\bm Y}(k))$ given that $\avg({\bm Y}(k)) = y$ and that ${\bm Y}(k)$ shares the same distribution with ${\bm Z}(k)$.

Furthermore, let $f'$ be in $\mathcal{F}_{\rm c}$, i.e.,
\begin{align} \label{eqn:constructf2}
  \mathbb{P}_{f'}(\hat{x}_k\in\mathcal{A}|{\bm Y}(k)) = \mathbb{P}_{f'}(\hat{x}_k\in\mathcal{A}|{\bm Y}'(k))
\end{align}
for any Borel set $\mathcal{A}\subseteq\R^n$ whenever $\avg({\bm Y}(k)) = \avg({\bm Y}'(k))$.

Let  $\mathcal{B}_\delta(x)$ denote the closed ball centered at $x\in\R^n$ with radius $\delta >0$:
\begin{align} \label{eqn:closedball}
	\mathcal{B}_\delta(x) \triangleq \{y\in\R^n: \|  y -x \|_2 \leq \delta  \}.
\end{align}
 Regarding with $f$ and $f'$, in the remainder of this proof we devote ourselves to  showing that the following inequality holds for  any state $x$, set $\mathcal{C}$, $\delta>0$ and time $k$:
\begin{align*} 
  \sup_{g\in\mathcal{G}}\mathbb{P}_{f,g,x,\mathcal{C}}\left(  \hat x_k \not\in\mathcal{B}_\delta(x)    \right) 
  \geq \sup_{g\in\mathcal{G}} \mathbb{P}_{f',g,x,\mathcal{C}}\left( \hat x_k \not\in\mathcal{B}_\delta(x)  \right), \addtag \label{eqn:probequality}
\end{align*}
from which Lemma~\ref{lemma:compressed_estimator} follows straightforwardly. 

We first identify the most harmful attack strategy for a generic $f$. Given state $x$, set $\mathcal{C}$, $\delta>0$, time $k$, and estimator $f$, consider the following optimization problem:
\begin{align*}
 &\max_{Y_2\in\R^{q\times k}} \int_{ Y_1\in\mathbb{R}^{(m-q)\times k}} \mathbb{P}_f\left( \hat{x}_k \not\in\mathcal{B}_\delta(x)  \Big|{\bm Y}(k)_{\mathcal{M}\setminus \mathcal{C}} = Y_1, \right. \\
	&\qquad \qquad \qquad \quad {\bm Y}(k)_{\mathcal{C}} = Y_2 \Big) {\rm d}\mathbb{P}_x({\bm Z}(k)_{\mathcal{M}\setminus\mathcal{C}} = Y_1). \addtag \label{eqn:optCompress}
\end{align*}
Denote its optimal solution (i.e., the ``manipulated matrix") as 
$\mm(f,x,\mathcal{C},\delta,k)$. Then
one may see that changing the measurements of the compromised sensors available at time $k$, ${\bm Y}(k)_{\mathcal{C}}$, to $\mm(f,x,\mathcal{C},\delta,k)$ would maximize the error\footnote{For the sake of presentation, we call the event $\hat{x}_k \not\in\mathcal{B}_\delta(x)$ an error.  } probability under estimator~$f$. The optimal value of the optimization problem~\eqref{eqn:optCompress} is just the worst-case error probability $\sup_{g\in\mathcal{G}}\mathbb{P}_{f,g,x,\mathcal{C}}\left( \hat x_k \not\in\mathcal{B}_\delta(x)   \right)$.

We then identify the most harmful attack strategy for the compressed estimator $f'$. Given state $x$, set $\mathcal{C}$, $\delta>0$, time $k$, and estimator $f'$, consider the following optimization problem:
\begin{align*}
& \max_{y_2\in\R^{q}} \int_{y_1\in \mathbb{R}^{m-q}} \mathbb{P}_{f'}\left( \hat{x}_k \not\in\mathcal{B}_\delta(x)  \Big|\avg({\bm Y}(k))_{\mathcal{M}\setminus \mathcal{C}} = y_1, \right. \\
&\quad \avg({\bm Y}(k))_{\mathcal{C}} = y_2 \Big) {\rm d}\mathbb{P}_x(\avg({\bm Z}(k))_{\mathcal{M}\setminus\mathcal{C}} = y_1).  \addtag \label{eqn:optCompress1}
\end{align*}
Denote its optimal solution (i.e., the ``manipulated vector") as 
$\mv(f',x,\mathcal{C},\delta,k)$.
One may verify that changing the measurements of the compromised sensors available at time $k$ such that $\avg({\bm Z}(k))_{\mathcal{C}} =\mv(f',x,\mathcal{C},\delta,k) $  would maximize the error probability under estimator~$f'$. The optimal value of the optimization problem~\eqref{eqn:optCompress1} is just the worst-case error probability $\sup_{g\in\mathcal{G}}\mathbb{P}_{f',g,x,\mathcal{C}}\left(\hat x_k \not\in\mathcal{B}_\delta(x)    \right)$.
For the sake of better presentation, in the remainder of this proof, for any matrix ${\bm M}$, we rewrite ${\bm M}_{\mathcal{M}\setminus \mathcal{C}}$ as ${\bm M}_{[1]}$ and ${\bm M}_{\mathcal{C}}$ as ${\bm M}_{[2]}$. We also omit the time index $k$ of ${\bm Z}(k)$ and ${\bm Y}(k)$. The set $\mathcal{B}_\delta(x) $ is denoted by $\mathcal{B}$.
Notice that the ``true'' measurements ${\bm Z}$ are independent across sensors given the underlying state $x$. Therefore, we can rewrite~\eqref{eqn:constructf1} as follows:
\begin{align*}
  &\mathbb{P}_{f'} (\hat{x}_k \in\mathcal{A} | \avg({\bm Y}) = y ) \\
  =&  \int_{ \mathbb{R}^{q\times k}} \int_{ \mathbb{R}^{(m-q)\times k}}   \mathbb{P}_f( \hat{x}_k \in\mathcal{A}   |{\bm Y}_{[1]} = Y_{[1]}, {\bm Y}_{[2]} = Y_{[2]} )\\
  &\qquad\qquad  {\rm d}\mathbb{P}({\bm Z}_{[1]} = Y_{[1]} | \avg({\bm Z}_{[1]}) = y_{[1]})\\
  &\qquad\qquad\quad {\rm d}\mathbb{P}({\bm Z}_{[2]} = Y_{[2]} | \avg({\bm Z}_{[2]}) = y_{[2]}). 
\end{align*}
Then one obtains that
\begin{align*}
  &\sup_{g\in\mathcal{G}}\mathbb{P}_{f',g,x,\mathcal{C}}\left(\hat x_k \not\in\mathcal{B}_\delta(x)    \right) \\
    =& \int_{ \mathbb{R}^{m-q}} \int_{ \mathbb{R}^{(m-q)\times k}} \int_{ \mathbb{R}^{q\times k}}  \mathbb{P}_f( \hat{x}_k \not\in\mathcal{B}   |{\bm Y}_{[1]} = Y_{[1]}, {\bm Y}_{[2]} = Y_{[2]} )\\
  &\qquad\qquad{\rm d}\mathbb{P}({\bm Z}_{[2]} = Y_{[2]} | \avg({\bm Z}_{[2]}) = \mv(f',x,\mathcal{C},\delta,k))  \\
  &\qquad\qquad\quad {\rm d}\mathbb{P}({\bm Z}_{[1]} = Y_{[1]} | \avg({\bm Z}_{[1]}) = z_{[1]}) \\
  &\qquad\qquad\qquad {\rm d}\mathbb{P}_x(\avg({\bm Z}_{[1]}) = z_{[1]})\\
  =&  \int_{ \mathbb{R}^{(m-q)\times k}} \int_{ \mathbb{R}^{q\times k}}  \mathbb{P}_f( \hat{x}_k \not\in\mathcal{B}   |{\bm Y}_{[1]} = Y_{[1]}, {\bm Y}_{[2]} = Y_{[2]} )\\
  &\qquad\qquad{\rm d}\mathbb{P}({\bm Z}_{[2]} = Y_{[2]} | \avg({\bm Z}_{[2]}) = \mv(f',x,\mathcal{C},\delta,k))  \\
  &\qquad\qquad\quad {\rm d}\mathbb{P}_x({\bm Z}_{[1]} = Y_{[1]} ) \\
    =& \int_{ \mathbb{R}^{q\times k}} \int_{ \mathbb{R}^{(m-q)\times k}}   \mathbb{P}_f( \hat{x}_k \not\in\mathcal{B}   |{\bm Y}_{[1]} = Y_{[1]}, {\bm Y}_{[2]} = Y_{[2]} )\\
  &\qquad\qquad{\rm d}\mathbb{P}_x({\bm Z}_{[1]} = Y_{[1]} )  \\
  &\qquad\qquad {\rm d}\mathbb{P}({\bm Z}_{[2]} = Y_{[2]} | \avg({\bm Z}_{[2]}) = \mv(f',x,\mathcal{C},\delta,k))\\ 
  \leq& \max_{Y_{[2]}\in\R^{q\times k}} \int_{ \mathbb{R}^{(m-q)\times k}}   \mathbb{P}_f( \hat{x}_k \not\in\mathcal{B}   |{\bm Y}_{[1]} = Y_{[1]}, {\bm Y}_{[2]} = Y_{[2]} )\\
  &\qquad\qquad{\rm d}\mathbb{P}_x({\bm Z}_{[1]} = Y_{[1]} )  \\
  &=\sup_{g\in\mathcal{G}}\mathbb{P}_{f,g,x,\mathcal{C}}\left( \hat x_k \not\in\mathcal{B}_\delta(x)   \right),
\end{align*}
where the second equality
follows from the law of total probability, and the inequality holds because $\mathbb{P}({\bm Z}_{[2]} | \avg({\bm Z}_{[2]}) = y_{[2]})$ is a probability measure for any $y_{[2]}$, i.e.,
\[\int_{ \mathbb{R}^{q\times k}} {\rm d}\mathbb{P}({\bm Z}_{[2]} = Y_{[2]} | \avg({\bm Z}_{[2]}) = y_{[2]}) =1 \]
for any $y_{[2]}$.
The proof is thus complete. 

In order for readers to have a better picture of relationship between the main results obtained in this paper (e.g., Theorems~\ref{theorem:restrict_estimator}~and~\ref{theorem:optimalestimator}) and the assumptions posed in Section~\ref{sec:problem}, in particular, Assumptions~\ref{assumpt:sparseattack}~and~\ref{assumpt:knowledge}, in the following remark, we explain in detail how these assumptions are utilized to derive~\eqref{eqn:optCompress}. 
\begin{remark} \label{remark:relation}
	Due to Assumption~\ref{assumpt:sparseattack}, one could split the sensors into two groups: ``good" ones in $\mathcal M \setminus \mathcal C$ and ``bad" ones in $\mathcal{C}$. All the measurements from up to time $k$ from good sensors are not manipulated and, thus, denoted by $\bm Z(k)_{\mathcal M \setminus \mathcal C}$. The attacker can develop the term $\mathbb{P}_x({\bm Z}(k)_{\mathcal{M}\setminus\mathcal{C}} = Y_1)$  since it knows the true state $x$ and system parameters by Assumption~\ref{assumpt:knowledge}. Furthermore, since the attacker knows the estimator $f$ and has unlimited memory, it is proper to obtain the most harmful attack strategy by~\eqref{eqn:optCompress}.
\end{remark}

\section{Proof of Lemma~\ref{lemma:cd_estimator}}
\label{appendix:cd_estimator}
\subsection{Preliminaries}
In the following lemma, we bound the area where a random variable has a high probability showing up.

Given any random variable $y\in\R^n$, we shall say that a point $x$ is $\delta$-typical, if
\begin{align}
\mathbb{P}(y\in\mathcal{B}_\delta(x))> n/(n+1),
\label{eqn:deftypical}
\end{align}
where $\mathcal{B}_\delta(x)$ is the closed ball defined in~\eqref{eqn:closedball}.
In other words, $y$ has a high probability lying in the $\delta$-neighborhood of $x$.

Let $\mathcal{T}_\delta(y)$ denote the set of all $\delta$-typical point $x$ w.r.t. a random variable $y$, i.e.,
\begin{align} \label{eqn:defA}
\mathcal{T}_\delta(y) \triangleq \{x\in\R^n:x\text{ is $\delta$-typical w.r.t. $y$}\}.
\end{align}
We have the following lemma to show that $\mathcal{T}_\delta(y)$ lies in a ball with radius $\delta$:

\begin{lemma} \label{lemma:probgeometric}
	For any random variable $y\in\R^n$, there exists $x^*\in\R^n$ such that 
	\begin{align} \label{eqn:smallerthanball}
	\mathcal{T}_\delta(y) \subseteq \mathcal{B}_\delta(x^*).
	\end{align}	
\end{lemma}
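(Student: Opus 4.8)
The plan is to derive \eqref{eqn:smallerthanball} from Helly's theorem. The first observation is that, by symmetry of the Euclidean distance, $x^*\in\mathcal{B}_\delta(x)$ if and only if $x\in\mathcal{B}_\delta(x^*)$. Hence the desired inclusion $\mathcal{T}_\delta(y)\subseteq\mathcal{B}_\delta(x^*)$ is equivalent to the assertion that $x^*$ is a common point of the family of closed balls $\{\mathcal{B}_\delta(x):x\in\mathcal{T}_\delta(y)\}$. So it suffices to prove that this family of balls has nonempty intersection.

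To verify the hypothesis of Helly's theorem, I would show that any $n+1$ members of the family intersect. Pick arbitrary $x_1,\dots,x_{n+1}\in\mathcal{T}_\delta(y)$. By the definition \eqref{eqn:deftypical} of a $\delta$-typical point, $\mathbb{P}(y\notin\mathcal{B}_\delta(x_i))<1/(n+1)$ for each $i$, so the union bound gives
\[
\mathbb{P}\Big(\bigcup_{i=1}^{n+1}\{y\notin\mathcal{B}_\delta(x_i)\}\Big)\le\sum_{i=1}^{n+1}\mathbb{P}(y\notin\mathcal{B}_\delta(x_i))<1.
\]
Consequently $\mathbb{P}\big(y\in\bigcap_{i=1}^{n+1}\mathcal{B}_\delta(x_i)\big)>0$, and in particular the intersection $\bigcap_{i=1}^{n+1}\mathcal{B}_\delta(x_i)$ is nonempty, since it must contain at least one realization of $y$. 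This is precisely the condition required to apply Helly's theorem in $\R^n$: every $n+1$ of the convex sets share a common point.

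Since each ball $\mathcal{B}_\delta(x)$ is compact and convex, the infinite version of Helly's theorem (finite Helly together with the finite-intersection-property characterization of compactness) yields a point $x^*\in\bigcap_{x\in\mathcal{T}_\delta(y)}\mathcal{B}_\delta(x)$, which by the equivalence noted above completes the proof. The edge cases where $\mathcal{T}_\delta(y)$ is empty or contains at most $n$ points are trivial: the former makes \eqref{eqn:smallerthanball} vacuous, and for the latter the same union-bound argument applied to all the points simultaneously directly exhibits a common ball.

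The step I expect to require the most care is the passage from the finite Helly condition to a common point of the \emph{infinite} family; this is where compactness of the balls is essential and must be invoked explicitly, rather than appealing to finite Helly alone. I also want to emphasize in the write-up that the threshold $n/(n+1)$ in \eqref{eqn:deftypical} is not arbitrary: it is tuned exactly so that the union bound over $n+1$ balls stays strictly below $1$, matching the $n+1$ that appears in Helly's theorem in dimension $n$.
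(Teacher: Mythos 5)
Your proposal is correct and follows essentially the same route as the paper: the union bound over $n+1$ typical points (equivalent to the paper's Lemma~\ref{lemma:probinter}) shows every $n+1$ of the balls $\mathcal{B}_\delta(x)$, $x\in\mathcal{T}_\delta(y)$, intersect, and Helly's theorem together with compactness of the balls yields the common point $x^*$. The symmetry observation and the handling of the empty/finite edge cases match the paper's argument as well.
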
 
To proceed, we need the following lemma:
\begin{lemma} \label{lemma:probinter}
	Let $\mathcal{A}_1, \ldots, \mathcal{A}_n$ be $n$ random events with the same underlying probability space, then it holds that
	\begin{align} \label{probinter}
	\mathbb{P}(\cap_{j=1}^n\mathcal{A}_j) \geq \sum_{j=1}^{n} \mathbb{P}(\mathcal{A}_j) -n+1.
	\end{align}
\end{lemma}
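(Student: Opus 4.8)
The plan is to prove this Bonferroni-type lower bound by passing to complements and invoking Boole's inequality (the union bound), which is the only probabilistic fact required. First I would denote by $\mathcal{A}_j^c$ the complement of each event $\mathcal{A}_j$ and apply De Morgan's law to write $\bigl(\bigcap_{j=1}^n \mathcal{A}_j\bigr)^c = \bigcup_{j=1}^n \mathcal{A}_j^c$. This reduces the task of lower-bounding the probability of an intersection to upper-bounding the probability of a union, which is where the union bound becomes applicable.

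The key step is then the union bound $\mathbb{P}\bigl(\bigcup_{j=1}^n \mathcal{A}_j^c\bigr) \leq \sum_{j=1}^n \mathbb{P}(\mathcal{A}_j^c)$. Substituting $\mathbb{P}(\mathcal{A}_j^c) = 1 - \mathbb{P}(\mathcal{A}_j)$ yields $\sum_{j=1}^n \mathbb{P}(\mathcal{A}_j^c) = n - \sum_{j=1}^n \mathbb{P}(\mathcal{A}_j)$. Taking complements once more, $\mathbb{P}\bigl(\bigcap_{j=1}^n \mathcal{A}_j\bigr) = 1 - \mathbb{P}\bigl(\bigcup_{j=1}^n \mathcal{A}_j^c\bigr) \geq 1 - \bigl(n - \sum_{j=1}^n \mathbb{P}(\mathcal{A}_j)\bigr) = \sum_{j=1}^n \mathbb{P}(\mathcal{A}_j) - n + 1$, which is exactly the claimed inequality.

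Because both ingredients -- De Morgan's law and Boole's inequality -- are elementary, there is no genuine obstacle here; the only point requiring attention is that the argument must not presume independence or any special structure of the $\mathcal{A}_j$, and indeed the complementation route avoids this entirely. As an alternative I could argue by induction on $n$, using the trivial base case $n=1$ together with the two-set estimate $\mathbb{P}(\mathcal{A}\cap\mathcal{B}) = \mathbb{P}(\mathcal{A}) + \mathbb{P}(\mathcal{B}) - \mathbb{P}(\mathcal{A}\cup\mathcal{B}) \geq \mathbb{P}(\mathcal{A}) + \mathbb{P}(\mathcal{B}) - 1$ in the inductive step; but the complementation argument is shorter and I would prefer it. I would note in passing that this lemma is the quantitative tool needed to control the simultaneous occurrence of several high-probability events, which is presumably how it feeds into the proof of Lemma~\ref{lemma:probgeometric} via the definition of $\delta$-typicality in~\eqref{eqn:deftypical}.
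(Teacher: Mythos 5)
Your proposal is correct and is essentially identical to the paper's proof: the paper also passes to complements via De Morgan's law, applies the union bound to $\cup_{j=1}^n \mathcal{A}_j^c$, and substitutes $\mathbb{P}(\mathcal{A}_j^c) = 1 - \mathbb{P}(\mathcal{A}_j)$. No gaps.
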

\begin{proof}[Proof of Lemma~\ref{lemma:probinter}]
	\begin{align*}
	\mathbb{P}(\cap_{j=1}^n\mathcal{A}_j) &= 1 - 	\mathbb{P}(\cup_{j=1}^n\mathcal{A}_j^c)  \geq 1 - \sum_{j=1}^n \mathbb{P}(\mathcal{A}_j^c)\\
	& = 1-\sum_{j=1}^n\left(1-\mathbb{P}(\mathcal{A}_j)\right) = \sum_{j=1}^{n} \mathbb{P}(\mathcal{A}_j) -n+1,
	\end{align*}
	where $\mathcal A^c$ is the complement of set $\mathcal A$. The proof is thus complete.
\end{proof}

\begin{proof}[Proof of Lemma~\ref{lemma:probgeometric}]
If $\mathcal{T}_\delta(y)$ is empty, then obviously $\mathcal{T}_\delta(y) = \emptyset \subset \mathcal B_\delta(0)$.

If $\mathcal{T}_\delta(y)$ only contains $j\leq (n+1)$ elements, say, $x_1, \ldots, x_{j}$. Then Lemma~\ref{lemma:probinter} together with~\eqref{eqn:deftypical} yields that 
\[\mathbb{P}(y\in\cap_{i=1}^j\mathcal{B}_\delta(x_i)) > j\times\frac{n}{n+1}-j+1\geq 0,\]
which means that the set $\cap_{i=1}^j\mathcal{B}_\delta(x_i)$ is not empty. Then $\mathcal{T}_\delta(y)\subseteq \mathcal{B}_\delta(x^*)$ for some $x^*\in\cap_{i=1}^j\mathcal{B}_\delta(x_i)$.

If $\mathcal{T}_\delta(y)$ contains $j>(n+1)$ elements ($j$ might be infinite). Then again by Lemma~\ref{lemma:probinter}, one obtains that  for  \emph{any} $n+1$ elements, say,  $x_1, \ldots, x_{n+1}$, there holds
\[\mathbb{P}(y\in\cap_{i=1}^{n+1}\mathcal{B}_\delta(x_i)) > 0,\] 
that is, $\cap_{i=1}^{n+1}\mathcal{B}_\delta(x_i) \not = \emptyset$. Since $\mathcal{B}_\delta(x)$ is compact and convex for any $x$, then Helly's theorem~\cite{danzer1963helly} means that 
\[\cap_{x\in\mathcal{T}_\delta(y)} \mathcal{B}_\delta(x) \not =\emptyset. \] Then  $\mathcal{T}_\delta(y)\subseteq \mathcal{B}_\delta(x^*)$ for some $x^*\in\cap_{x\in\mathcal{T}_\delta(y)}\mathcal{B}_\delta(x)$. The proof is thus complete.
\end{proof}
\begin{definition}
	Any point $x^*\in\R^n$ is said to be a $\delta$-center of a random variable $y\in\R^n$ if it is such that~\eqref{eqn:smallerthanball} holds. 
\end{definition}

The following follows readily from Lemma~\ref{lemma:probgeometric}.
\begin{lemma} \label{lemma:indicatorlower}
	If $x^*\in\R^n$ is a $\delta$-center of a random variable $y\in\R^n$, then for any $x\in\R^n$:
	\begin{align}
		1- \mathbbm{1}_{\mathcal B_\delta(x)}(x^*) \leq (n+1) \mathbb{P}(y\notin\mathcal B_\delta(x)).
	\end{align}
\end{lemma}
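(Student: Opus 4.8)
The plan is to prove the inequality by a direct case analysis on the value of the indicator $\mathbbm{1}_{\mathcal{B}_\delta(x)}(x^*)$, which can only be $0$ or $1$. First I would observe that the right-hand side $(n+1)\mathbb{P}(y\notin\mathcal{B}_\delta(x))$ is always nonnegative, so whenever $x^*\in\mathcal{B}_\delta(x)$ the left-hand side equals $1-1=0$ and the claim holds trivially. Hence all the content lies in the complementary case $x^*\notin\mathcal{B}_\delta(x)$, where the left-hand side equals $1$ and I must establish $\mathbb{P}(y\notin\mathcal{B}_\delta(x))\geq 1/(n+1)$.

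For that case the key structural fact I would exploit is the symmetry of ball membership: since $\|x^*-x\|_2=\|x-x^*\|_2$, we have $x^*\notin\mathcal{B}_\delta(x)$ if and only if $x\notin\mathcal{B}_\delta(x^*)$. By the hypothesis that $x^*$ is a $\delta$-center, Lemma~\ref{lemma:probgeometric} (i.e.~\eqref{eqn:smallerthanball}) supplies the containment $\mathcal{T}_\delta(y)\subseteq\mathcal{B}_\delta(x^*)$. Taking the contrapositive, $x\notin\mathcal{B}_\delta(x^*)$ forces $x\notin\mathcal{T}_\delta(y)$, so $x$ fails to be $\delta$-typical.

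Negating the defining inequality~\eqref{eqn:deftypical} of $\delta$-typicality then yields $\mathbb{P}(y\in\mathcal{B}_\delta(x))\leq n/(n+1)$, whence $\mathbb{P}(y\notin\mathcal{B}_\delta(x))=1-\mathbb{P}(y\in\mathcal{B}_\delta(x))\geq 1/(n+1)$. Multiplying through by $n+1$ gives $(n+1)\mathbb{P}(y\notin\mathcal{B}_\delta(x))\geq 1$, which matches the left-hand side $1$ in this case and completes the argument.

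There is no genuine obstacle here; the lemma is an elementary consequence of the $\delta$-center property, which is precisely why the paper states that it "follows readily" from Lemma~\ref{lemma:probgeometric}. The only subtlety worth flagging is the logical direction, namely that the containment $\mathcal{T}_\delta(y)\subseteq\mathcal{B}_\delta(x^*)$ must be applied in contrapositive form (non-membership in the ball $\mathcal{B}_\delta(x^*)$ implies non-typicality of $x$) rather than used directly. One must also keep the strict versus non-strict inequalities aligned: $\delta$-typicality is defined with the \emph{strict} bound $\mathbb{P}(y\in\mathcal{B}_\delta(x))>n/(n+1)$, so its negation is the \emph{non-strict} $\mathbb{P}(y\in\mathcal{B}_\delta(x))\leq n/(n+1)$, which is exactly the form needed to obtain the clean bound $1/(n+1)$ on the complementary probability.
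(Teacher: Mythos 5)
Your proof is correct and is exactly the argument the paper intends: the paper gives no explicit proof, stating only that the lemma ``follows readily'' from Lemma~\ref{lemma:probgeometric}, and your case analysis (trivial when $x^*\in\mathcal{B}_\delta(x)$; otherwise symmetry of the ball plus the contrapositive of $\mathcal{T}_\delta(y)\subseteq\mathcal{B}_\delta(x^*)$ and the negation of the strict typicality bound) is the standard way to fill in that gap.
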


\subsection{Main Body}
%

Consider any estimator $f \in \mathcal F$, we construct a deterministic one $f'\in\mathcal{F}_{\rm d}$: for any time $k$ and observations ${\bm Y}(k)$, let $f'_k({\bm Y}(k))$ be a $\delta$-center of the random variable $f_k({\bm Y}(k))$, the existence of which is guaranteed by Lemma~\ref{lemma:probgeometric}.



Hence, for any attack strategy $g$, true state $x$, set of compromised sensors $\mathcal{C}$, and time $k$, we have
\begin{align*}
	&\mathbb{P}_{f',g,x,\mathcal{C}}(\|\hat{x}_k-x\| > \delta)\\
	=& \int_{Y\in\R^{m\times k}} 	1- \mathbbm{1}_{\mathcal B_\delta(x)}(f'_k(Y)) \,{\rm d}\mathbb{P}_{g,x,\mathcal{C}}({\bm Y}(k)=Y)\\
	\leq& \int_{Y\in\R^{m\times k}} 	(n+1) \mathbb{P}(f_k(Y)\notin\mathcal B_\delta(x))  \,{\rm d}\mathbb{P}_{g,x,\mathcal{C}}({\bm Y}(k)=Y) \\
	=& (n+1) \mathbb{P}_{f,g,x,\mathcal{C}}(\|\hat x(k)-x\| > \delta),
\end{align*}
where the inequality follows from Lemma~\ref{lemma:indicatorlower}, and  $\mathbb{P}_{g,x,\mathcal{C}}(\cdot)$ denotes the probability measure governing ${\bm Y}(k)$ when $g,x,\mathcal{C}$ are given.
Then it is clear that
\begin{align} \label{eqn:worstcasebound}
	e(f,k,\delta) \geq  e(f',k,\delta)/(n+1).
\end{align}
Recall that $e(f,k,\delta)$ is the worst-case probability defined in~\eqref{eqn:errorProb}. Then it follows that for any $\delta>0$:
\begin{align*}
	r(f,\delta) =& \liminf_{k\rightarrow\infty} - \frac{\log e(f,k,\delta)}{k} \\
	\leq & \liminf_{k\rightarrow\infty} - \frac{\log e(f',k,\delta)/(n+1)}{k} \\
	=& \liminf_{k\rightarrow\infty} - \frac{\log e(f',k,\delta)}{k} \addtag \label{eqn:asympequal}  \\
	=& r(f',\delta).
\end{align*}
The proof is thus complete. 

\section{Proof of Theorem~\ref{theorem:restrict_estimator}}
\label{appendix:theoremRestEst}
Consider a compressed but possibly random estimator $f \in \mathcal F_{\rm c}$, we construct a deterministic one $f'\in\mathcal{F}_{\rm cd}$ satisfying:
\begin{itemize}
	\item for any time $k$ and observations ${\bm Y}(k)$, $f'({\bm Y}(k))$ is a $\delta$-center of the random variable $f({\bm Y}(k))$;
	\item $f'({\bm Y}(k)) = f'({\bm Y'}(k))$ if $\avg({\bm Y}(k)) = \avg({\bm Y}'(k))$.
\end{itemize}
The existence of $f'$ is guaranteed by Lemma~\ref{lemma:probgeometric} and the fact that random variables $f({\bm Y}(k)), f({\bm Y'}(k))$ have the same distribution if $\avg({\bm Y}(k)) = \avg({\bm Y}'(k))$.

Then as in Appendix~\ref{appendix:cd_estimator}, one obtains $r(f,\delta) \leq r(f',\delta)$, which, together with Lemma~\ref{lemma:compressed_estimator}, concludes the proof.


\section{Proof of Theorem~\ref{theorem:optimalestimator}}
\label{proof:opt-estimator}
We first prove that $r(f,\delta)$ is upper-bounded by $u(\delta)$ for any $f\in\mathcal{F}$ in Lemma~\ref{lemma:upperbound}. We then show that $r(f^*_{\delta},\delta) = u(\delta)$ in Lemma~\ref{lemma:achieve}. Before proceeding, we need the following supporting definitions and lemmas.

\subsection{Supporting Definition and Lemmas}
\begin{lemma} \label{lemma:radius}
  For any $\mathcal{A}\subseteq\R^n$, if $\rad(\mathcal{A}) > \gamma$, there exists a set $\mathcal{A}_0\subseteq\mathcal{A}$ such that
  $|\mathcal{A}_0| \leq n+1$ and
  $\rad(\mathcal{A}_0) > \gamma$.
\end{lemma}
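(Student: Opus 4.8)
The plan is to prove the contrapositive by a Helly-type argument. That is, instead of the stated implication I would assume that \emph{every} finite subset $\mathcal{A}_0\subseteq\mathcal{A}$ with $|\mathcal{A}_0|\leq n+1$ satisfies $\rad(\mathcal{A}_0)\leq\gamma$, and from this deduce $\rad(\mathcal{A})\leq\gamma$. Since this is the logical contrapositive of the lemma (the strict inequalities on both sides flip to non-strict), it suffices.

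The first key step is to translate the radius condition into an intersection-of-balls statement, using the closed ball $\mathcal{B}_\gamma(x)$ from~\eqref{eqn:closedball}. The bridge is the elementary equivalence: for any set $\mathcal{S}\subseteq\R^n$, one has $\rad(\mathcal{S})\leq\gamma$ if and only if $\bigcap_{x\in\mathcal{S}}\mathcal{B}_\gamma(x)\neq\emptyset$. Indeed, a point $c\in\bigcap_{x\in\mathcal{S}}\mathcal{B}_\gamma(x)$ is precisely a point with $\|x-c\|_2\leq\gamma$ for all $x\in\mathcal{S}$, i.e. $\dist(c,\mathcal{S})\leq\gamma$, which forces $\rad(\mathcal{S})\leq\gamma$; conversely the Chebyshev center $\chv(\mathcal{S})$ certifies that the intersection is non-empty when $\rad(\mathcal{S})\leq\gamma$. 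Applying this to the finitely-many-points hypothesis, the assumption says exactly that for every choice of at most $n+1$ points $x_1,\dots,x_{n+1}\in\mathcal{A}$, the balls $\mathcal{B}_\gamma(x_1),\dots,\mathcal{B}_\gamma(x_{n+1})$ have a common point.

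The second step is to invoke Helly's theorem. Each $\mathcal{B}_\gamma(x)$ is compact and convex in $\R^n$, and by the previous step every subfamily of at most $n+1$ of them intersects; the (infinite version of the) Helly theorem~\cite{danzer1963helly} then yields $\bigcap_{x\in\mathcal{A}}\mathcal{B}_\gamma(x)\neq\emptyset$. Taking any $c^*$ in this intersection gives $\dist(c^*,\mathcal{A})\leq\gamma$, whence $\rad(\mathcal{A})\leq\gamma$, completing the contrapositive. I would cite the same reference already used for Lemma~\ref{lemma:probgeometric}.

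The main obstacle is the passage from the finite to the infinite collection: the elementary Helly theorem only asserts a common point for finitely many convex sets, so for an arbitrary (possibly uncountable) $\mathcal{A}$ one must justify the infinite version. This is legitimate precisely because the sets $\mathcal{B}_\gamma(x)$ are compact; moreover the contrapositive hypothesis already forces any two points of $\mathcal{A}$ to satisfy $\|x-y\|_2\leq 2\gamma$, so $\mathcal{A}$ has diameter at most $2\gamma$ and all the balls lie inside one common compact region, which is exactly the hypothesis needed to run the compact form of Helly's theorem without any extra boundedness assumption on $\mathcal{A}$.
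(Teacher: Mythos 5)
Your proposal is correct and follows essentially the same route as the paper: both translate the radius bound into the statement that the closed balls $\mathcal{B}_\gamma(x)$, $x\in\mathcal{A}$, have pairwise-compatible $(n+1)$-wise intersections and then invoke the compact-convex form of Helly's theorem to obtain a common point, which serves as a center certifying $\rad(\mathcal{A})\leq\gamma$ (the paper phrases this as a proof by contradiction rather than a contrapositive, which is the same argument). Your extra remarks justifying the passage to the infinite family via compactness and the automatic boundedness of $\mathcal{A}$ are a welcome refinement of a step the paper leaves implicit, but they do not constitute a different approach.
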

\begin{proof}
  If $|\mathcal{A}|\leq n+1$, then the lemma holds trivially by letting $\mathcal{A}_0 = \mathcal{A}$. When $|\mathcal{A}| > n+1$, we prove the lemma by contradiction. Suppose for every subset $\mathcal{A}_0\subseteq \mathcal{A}$ with $|\mathcal{A}_0|\leq n+1$, there holds $\rad(\mathcal{A}_0) \leq \gamma$. Then we have
  \begin{align*}
    \cap_{x\in\mathcal{A}_0} \mathcal{B}_{\gamma}(x) \not= \emptyset
  \end{align*}
  for every $\mathcal{A}_0\subseteq \mathcal{A}$ with $|\mathcal{A}_0| = n+1$. Since $\mathcal{B}_{\gamma}(x)$ is compact and convex for every $x\in\mathcal{A}$, Helly's theorem~\cite{danzer1963helly} implies that
  \begin{align*}
    \cap_{x\in\mathcal{A}} \mathcal{B}_{\gamma}(x) \not= \emptyset.
  \end{align*}
  Hence, for any $x_0\in \cap_{x\in\mathcal{A}} \mathcal{B}_{\gamma}(x)$, $\mathcal A \subseteq \mathcal B_\gamma(x_0)$. Therefore $\rad(\mathcal{A}) \leq \gamma$, which contradicts the condition $\rad(\mathcal{A}) > \gamma$.
\end{proof}

With a slight abuse of notation, we use the sequence of functions $(f_1(\avg({\bf Y}(1))), f_2(\avg({\bf Y}(2))),\ldots )$ from time $1$ to $\infty$ to denote a compressed and deterministic estimator $f\in\mathcal{F}_{\rm cd}$.
\begin{definition} \label{def:set}
  Given a compressed and deterministic estimator $f\in\mathcal{F}_{\rm cd}$, $x\in\R^n$, $\delta>0$, and time $k$, 	let $\mathcal{Y}(f,x,\delta,k)$ be the set of averaged measurements $\avg({\bf Y}(k))$ such that the estimate $\hat{x}_k$ lies outside the ball $\mathcal{B}_\delta(x)$, i.e.,
  \begin{align}
    \mathcal{Y}(f,x,\delta,k) \triangleq \{y\in\R^m: f_k(y) \not\in \mathcal{B}_\delta(x) \}.
  \end{align}
\end{definition}
%

\subsection{Upper Bound}
\begin{lemma} \label{lemma:upperbound}
  For any estimator $f\in\mathcal{F}$, there holds
  \begin{align}
    r(f,\delta) \leq u(\delta).
  \end{align}
\end{lemma}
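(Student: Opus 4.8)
The plan is to prove the inequality only for compressed and deterministic estimators and then transfer it to arbitrary $f$. By Theorem~\ref{theorem:restrict_estimator} every $f\in\mathcal{F}$ admits an $f'\in\mathcal{F}_{\rm cd}$ with $r(f,\delta)\le r(f',\delta)$, so once I show $r(g,\delta)\le u(\delta)$ for all $g\in\mathcal{F}_{\rm cd}$, applying it to $g=f'$ gives $r(f,\delta)\le r(f',\delta)\le u(\delta)$. Fix then $g\in\mathcal{F}_{\rm cd}$, write its estimate at time $k$ as $g_k(\avg({\bm Y}(k)))$, and recall the error set $\mathcal{Y}(g,x,\delta,k)$ of Definition~\ref{def:set}. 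The whole task reduces to exhibiting, for every $\varepsilon>0$, a true state and a $q$-sparse attack whose induced error probability decays no faster than $\exp(-k(u(\delta)+O(\varepsilon)))$.

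I would first fix $\varepsilon>0$ and, using $u(\delta)=\inf_{y}u(y,\delta)$ in~\eqref{eqn:upperd1}, pick $y_0$ with $u(y_0,\delta)\le u(\delta)+\varepsilon$. The set $\mathbb{X}(y_0,\delta)$ has radius at most $\delta$ by construction, and generically $\rad(\mathbb{X}(y_0,\delta))=\delta$. The geometric core is to produce finitely many candidate true states inside $\mathbb{X}(y_0,\delta)$ that no single estimate can approximate within $\delta$ simultaneously: since $\rad(\mathbb{X}(y_0,\delta))=\delta$, for $\gamma<\delta$ close to $\delta$ Lemma~\ref{lemma:radius} yields $x_1,\dots,x_j\in\mathbb{X}(y_0,\delta)$ with $j\le n+1$ and $\rad(\{x_1,\dots,x_j\})>\gamma$, i.e.\ no point of $\R^n$ lies within $\gamma$ of all the $x_i$. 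Because each $x_i\in\mathbb{X}(y_0,\delta)$, definition~\eqref{eqn:upperd} of $u(y_0,\delta)$ forces $d_{x_i}(y_0)\le u(y_0,\delta)\le u(\delta)+\varepsilon$; and since $x\mapsto d_x(y_0)=\min_{|\mathcal{I}|=m-q}d_x(y_0,\mathcal{I})$ is continuous, a small outward perturbation $\tilde x_i$ of the $x_i$ raises $\rad(\{\tilde x_i\})$ past $\delta$ while keeping $d_{\tilde x_i}(y_0)\le u(\delta)+2\varepsilon$. After this perturbation, no estimate value lies within $\delta$ of all $\tilde x_i$, so $\bigcup_{i}\mathcal{Y}(g,\tilde x_i,\delta,k)=\R^m$ for every $k$.

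The quantitative step is a Gaussian large-deviations lower bound stitched together by a change of measure. For each $\tilde x_i$ choose $\mathcal{I}_i\in\argmin_{|\mathcal{I}|=m-q}d_{\tilde x_i}(y_0,\mathcal{I})$, let $\mathcal{C}_i=\mathcal{M}\setminus\mathcal{I}_i$, fix the compromised averages to $y_{0,\mathcal{C}_i}$ (free), and let the benign sensors fluctuate; the benign empirical mean is a vector of independent Gaussians with means $H_l\tilde x_i$ and variances $W_l/k$, so a direct density estimate gives, for every $\rho>0$, $\mathbb{P}_{\tilde x_i}(\|\avg({\bm Z}(k))_{\mathcal{I}_i}-y_{0,\mathcal{I}_i}\|<\rho)\ge\exp(-k(d_{\tilde x_i}(y_0)+o(1)))$. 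Since $e(g,k,\delta)\ge\max_i\mathbb{P}_{\tilde x_i}(\avg({\bm Y}(k))\in\mathcal{Y}(g,\tilde x_i,\delta,k))\ge\tfrac{1}{n+1}\sum_i\mathbb{P}_{\tilde x_i}(\cdot)$, and since $\bigcup_i\mathcal{Y}(g,\tilde x_i,\delta,k)\supseteq\mathcal{B}_\rho(y_0)$, I would compare the $j\le n+1$ attack measures to one reference Gaussian on $\mathcal{B}_\rho(y_0)$ (log-likelihood ratios of order $O(k)$, controlled by the $d_{\tilde x_i}(y_0)$) to obtain $e(g,k,\delta)\ge\exp(-k(\max_i d_{\tilde x_i}(y_0)+O(\rho)+o(1)))$. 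Taking $-\tfrac1k\log$, then $\rho\downarrow0$ and $\varepsilon\downarrow0$, yields $r(g,\delta)\le u(\delta)$.

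The main obstacle is the interface between the deterministic but possibly discontinuous map $g_k$ and the large-deviations estimate: driving the average into a neighborhood of $y_0$ does not by itself keep the estimate outside $\mathcal{B}_\delta(\tilde x_i)$ for a prescribed $i$, because $g_k$ may jump. This is precisely why the finite family of Lemma~\ref{lemma:radius} is indispensable — it makes $\bigcup_i\mathcal{Y}(g,\tilde x_i,\delta,k)$ exhaust $\R^m$, so that however $g_k$ behaves on $\mathcal{B}_\rho(y_0)$ \emph{some} candidate records an error — and why the covering/change-of-measure argument, rather than a naive continuity argument, is needed to convert ``covers the ball'' into a rate bound. A secondary technical point is the exact-radius/boundary case $\rad(\mathbb{X}(y_0,\delta))=\delta$ and the possibility that $\rad(\mathbb{X}(y_0,\delta))<\delta$ at a right-continuity jump of $\rad(\mathcal{X}(y_0,\cdot))$; these I would dispatch through the limit $\gamma\uparrow\delta$, the outward perturbation justified by continuity of $d_{\cdot}(y_0)$, and, if the infimizing $y_0$ is degenerate, by replacing it with a nearby non-degenerate $y_0$ using the monotonicity and right-continuity of $\rad(\mathcal{X}(y,\cdot))$ recorded in Lemma~\ref{lemma:continuous}.
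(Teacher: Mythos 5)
Your proposal is correct and follows essentially the same route as the paper's proof: reduce to $\mathcal{F}_{\rm cd}$ via Theorem~\ref{theorem:restrict_estimator}, pick a near-infimizing $y_0$, extract via Lemma~\ref{lemma:radius} (Helly) a family of at most $n+1$ nearly consistent states whose radius exceeds $\delta$, observe that their error sets then cover a neighborhood of $y_0$, pigeonhole to find one state whose error set carries a constant fraction of the mass, and close with a Gaussian large-deviations lower bound; the paper realizes the same steps with $\mathcal{A}=\mathcal{X}(y^*,\phi^*+\epsilon/4)$ and Lemma~\ref{lemma:continuous} in place of your outward perturbation, and with a Lebesgue-measure pigeonhole plus a level-set/Cram\'er estimate in place of your change of measure. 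The only detail to watch is that fixing the compromised averages deterministically makes the $j$ attacked laws mutually singular, so the comparison to a single full-dimensional reference Gaussian must be replaced either by slicing the error set along the benign coordinates (as the paper does with its sets $\mathbb{B}(o,k)$) or by having the attacker inject matching Gaussian noise on the compromised channels — a routine fix that does not affect the rate.
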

\begin{proof}
  We show that $r(f,\delta) < u(\delta) +\epsilon$ for any $\epsilon>0$ and $f\in\mathcal{F}$. 
  
  Given $\epsilon>0$, from the definition of $u(\delta)$, one obtains that there must exist $y^*\in\R^m$ and a set $\mathcal{A}\subseteq \R^n$  such that:
  \begin{enumerate}
  \item $d_x(y^*)\leq u(\delta) + \epsilon/2$ for all $x\in\mathcal{A}$;
  \item $\rad(\mathcal{A}) > \delta $.
  \end{enumerate}	
  Notice that the above $y^*$ and $\mathcal{A}$ can be constructed as follows. By the definition of $u(\delta)$, there must exist a $y^*$ such that for every $x\in\mathbb X(y^*,\delta)$, $d_x(y^*)< u(\delta) + \epsilon/4$ holds. Then we construct $\mathcal{A}$ by cases. If $\rad(\mathbb X(y^*,\delta)) = \delta$, then there must exist $\phi^*< u(\delta) + \epsilon/4$ such that $\mathcal X(y^*,\phi^*) =  \mathbb X(y^*,\delta)$. Let $\mathcal{A} = \mathcal X(y^*,\phi^*+\epsilon/4)$, and, therefore, $\mathcal{A} \subset \mathcal X(y^*, u(\delta) + \epsilon/2)$. Also, by the third bullet of Lemma~\ref{lemma:continuous}, $\rad(\mathcal{A}) > \delta $ holds. If $\rad(\mathbb X(y^*,\delta)) < \delta$, then let $\phi^* = \min\{\phi: \mathbb X(y^*,\delta) \subseteq \mathcal X(y^*,\phi) \}$ and $\mathcal{A} = \mathcal X(y^*,\phi^*)$. Then by  Lemma~\ref{lemma:continuous},   $\rad(\mathcal{A}) > \delta $ and $d_x(y^*)\leq u(\delta) + \epsilon/4$ for all $x\in\mathcal{A}$.
  
  Lemma~\ref{lemma:radius} yields that there exists $\mathcal{A}_0\subseteq \mathcal{A}$ such that $\rad(\mathcal{A}_0)>\delta$ and $|\mathcal{A}_0| \leq n+1$. Let $a^*(y,x)$ be the optimal solution to the optimization problem in~\eqref{eqn:opt1} given $y\in\R^m$ and $x\in\R^n$.  Since $d_x(y,\mathcal{I})$ in~\eqref{eqn:distancerestict} is continuous w.r.t. $y$ and $|\mathcal{A}_0| \leq n+1$, then one obtains that 
  there exists a ball $\mathcal{B}_\beta(y^*)$, where $\beta>0$ is dependent on $\epsilon$, such that $d_x(y,\mathcal{M}\setminus\supp(a^*(y^*,x)))<u(\delta) +\epsilon$ for every $x\in\mathcal{A}_0$ and every $y\in\mathcal{B}_\beta(y^*)$.

  By Theorem~\ref{theorem:restrict_estimator}, one suffices to  consider a compressed and deterministic estimator $f\in\mathcal{F}_{\rm cd}$. Furthermore, since $\rad(\mathcal{A}_0) > \delta$,   one concludes that for every time $k$ and $f\in\mathcal{F}_{\rm cd}$, there holds
  \begin{align}
    \mathcal{B}_\beta(y^*) \subseteq \cup_{x\in\mathcal{A}_0} \mathcal{Y}(f,x,\delta,k).
  \end{align}
  Let $\mathfrak{L}_n(\cdot)$ denote the Lebesgue measure on $\R^n$. 
  Because of countable additivity of Lebesgue measure~\cite{rudin1964principles},  one obtains that there must exist a point $x^*\in\mathcal{A}_0$ such that
  \begin{align}   \label{eqn:lebesgueall}
    \mathfrak{L}_m( \mathcal{B}_\beta(y^*) \cap \mathcal{Y}(f,x^*,\delta,k) )   \geq  \mathfrak{L}_m(\mathcal{B}_\beta(y^*))/(n+1).
  \end{align}
  For the sake of simplicity, let $\mathcal{B}(x^*, k)  \triangleq \mathcal{B}_\beta(y^*) \cap \mathcal{Y}(f,x^*,\delta,k)  $ and $ \mathcal{I}^* \triangleq \supp(a^*(y^*,x^*))$. Then it is clear that
  \begin{align*}
    &e(f,k,\delta) \\
    \geq& \sup_{g\in\mathcal{G}} \mathbb{P}_{g,x^*,\mathcal{I}^*}\left(\avg({\bf Y}(k)) \in \mathcal{Y}(f,x^*,\delta,k)\right) \\
    \geq& \sup_{g\in\mathcal{G}} \mathbb{P}_{g,x^*,\mathcal{I}^*}\left(\avg({\bf Y}(k)) \in \mathcal{B}(x^*, k) \right) \\
    =&\sup_{\avg({\bf Z}(k))_{\mathcal{I^*}}\in\R^q}  \mathbb{P}_{x^*}\left( \avg({\bf Z}(k))\in\mathcal{B}(x^*, k)   \right)\\
    =& \sup_{o\in\R^q}  \mathbb{P}_{x^*}\left( \avg({\bf Z}(k))_{\mathcal{M}\setminus \mathcal{I}^*} \in	\B(o,k)   \right),
     \addtag \label{eqn:prob}
  \end{align*}
  where $\B(o,k)$ with $o\in\R^q$ is the projected set of $\mathcal{B}(x^*, k)$:
  \begin{align*}
  	\B(o,k) \triangleq \{y_{\mathcal{M}\setminus \mathcal{I}^*}: y\in\mathcal{B}(x^*, k), y_{ \mathcal{I}^*} = o\}.
  \end{align*}
  Further let $\mathbb{B}(k)$ be a set that satisfies:
  \begin{align*}
  	\mathfrak{L}_{m-q}( \mathbb{B}(k)) = \sup_{o\in\R^q} \mathfrak{L}_{m-q}( \mathbb{B}(o,k)), 
  \end{align*}
  and for any $\upsilon>0$, there exists $o\in\R^q$ such that 
  \[\mathfrak{L}_{m-q}( \mathbb{B}(k) \setminus \mathbb{B}(o,k)  ) <\upsilon. \]
  Roughly speaking, $\mathbb{B}(k)$ can be viewed as the supremum set. Then one obtains that 
  \begin{align*}
  	&\sup_{o\in\R^q}  \mathbb{P}_{x^*}\left( \avg({\bf Z}(k))_{\mathcal{M}\setminus \mathcal{I}^*} \in	\B(o,k)   \right) \\
  	&\geq \mathbb{P}_{x^*}\left( \avg({\bf Z}(k))_{\mathcal{M}\setminus \mathcal{I}^*} \in	\B(k)   \right) \addtag \label{eqn:prob1}
  \end{align*}
  In the following, we focus on characterizing the term in~\eqref{eqn:prob1}.  Let $p_{x^*}(\cdot): \R^{m-q} \mapsto \R_+$ be the probability density of $\avg({\bf Z}(k))_{\mathcal{M}\setminus \mathcal{I}^*}$ conditioned on the underlying state $x^*$, i.e.,
  \begin{align*}
    p_{x^*}(z) = \mathcal{N}({\bm H}_{\mathcal{M}\setminus \mathcal{I}^*}, {\bm W}_{\{\mathcal{M}\setminus \mathcal{I}^*\}}/k, z),
  \end{align*} 
  where ${\bm W} = \diag(W_i,W_2,\ldots,W_m)$ is the diagonal matrix, ${\bm W}_{\{\mathcal{M}\setminus \mathcal{I}^*\}}$ (different from ${\bm W}_{\mathcal{M}\setminus \mathcal{I}^*}$) the square matrix obtained from ${\bm W}$ after removing all of the rows and columns except those in the index set $\mathcal{M}\setminus \mathcal{I}^*$, and $\mathcal{N}({\bm \mu},{\bm \Sigma}, {\bm x})$ the probability density function of a Gaussian random variable with mean ${\bm \mu}$ and variance ${\bm \Sigma}$ taking value at ${\bm x}$. Then one obtains that
  \begin{align*}
    &	\mathbb{P}_{x^*}\left(\avg({\bf Z}(k))_{\mathcal{M}\setminus \mathcal{I}^*} \in \B(k) \right)\\
    =& \int_{\R^{m-q}} \mathbbm{1}_{\B(k)}(z) p_{x^*}(z) {\rm d} z.
  \end{align*}
  From~\eqref{eqn:lebesgueall}, some basic arguments mainly involving the regularity theorem for Lebesgue measure and the Heine--Borel theorem~\cite{rudin1964principles} give that there exists $\gamma>0$ such that
  \begin{align}   \label{eqn:lebesgue}
    \mathfrak{L}_{m-q} ( \B(k)  ) > \gamma \mathfrak{L}_{m-q} ( \mathcal{B}_\beta(y^*)_{\mathcal{M}\setminus \mathcal{I}^*}   ).
  \end{align}
  Furthermore, $\gamma$ can be  determined by $m,n,q$, and $\beta$, which is, in particular, irrelevant to time $k$. Let $\mathcal{Z}(x^*,k) \subseteq \mathcal{B}_\beta(y^*)_{\mathcal{M}\setminus \mathcal{I}^*}  $ be the pre-image of $(\underline{p},\bar{p})$ under the function $p_{x^*}(\cdot)$, where $\underline{p}\triangleq \min_{z\in\mathcal{B}_\beta(y^*)_{\mathcal{M}\setminus \mathcal{I}^*}} p_{x^*}(z)$ is the minimum value\footnote{Notice that this minimum can be attained since $p_{x^*}(z)$ is a continuous function and $\mathcal{B}_\beta(y^*)_{\mathcal{M}\setminus \mathcal{I}^*}$ is compact.} and $\bar{p}$ is such that  
  \begin{align}
    \mathfrak{L}_{m-q} ( \mathcal{Z}(x^*,k)  ) = \gamma \mathfrak{L}_{m-q} ( \mathcal{B}_\beta(y^*)_{\mathcal{M}\setminus \mathcal{I}^*}   ).
  \end{align}
  Notice that $\bar{p}$ exists since $	\mathfrak{L}_{m-q} ( \{z:  p_{x^*}(z) =p  \}  ) = 0  $ for any $p$. Then one obtains that
  \begin{align*}
    &	\mathbb{P}_{x^*}\left(\avg({\bf Z}(k))_{\mathcal{M}\setminus \mathcal{I}^*} \in \B(k) \right)\\
    \geq& \mathbb{P}_{x^*}\left(\avg({\bf Z}(k))_{\mathcal{M}\setminus \mathcal{I}^*} \in \mathcal{Z}(x^*,k) \right). \addtag 
          \label{eqn:probless}
  \end{align*}
  Notice that the pre-image of an open set under a continuous function is also open, $\mathcal{Z}(x^*,k)$ is thus open.  Furthermore, since both $\gamma$ and $\mathcal{B}_{\beta}(y^*)$ are independent of time $k$, $\mathcal{Z}(x^*,k)$ will be an nonempty set whatever $k$ is.  Therefore, the following holds:
  \begin{align*}
    &\limsup_{k\to\infty} \frac{1}{k}	\log \mathbb{P}_{x^*}\left(\avg({\bf Z}(k))_{\mathcal{M}\setminus \mathcal{I}^*} \in \mathcal{Z}(x^*,k) \right) \\
    \leq & \inf_{z\in\mathcal{Z}(x^*,k)} \frac{1}{2}(z- {\bm H}_{\mathcal{M}\setminus \mathcal{I}^*} x^* )^\top  \left({\bm W}_{\mathcal{M}\setminus \mathcal{I}^*}\right)^{-1} (z- {\bm H}_{\mathcal{M}\setminus \mathcal{I}^*} x^* )\\
    = & \inf_{z\in\R^m, z_{\mathcal{M}\setminus \mathcal{I}^*}\in\mathcal{Z}(x^*,k)} \, d_{x^*}(y,\mathcal{M}\setminus\mathcal{I}^*) \\
    <& u(\delta) + \epsilon, \addtag  \label{eqn:cramer1}
  \end{align*}
  where first inequality is due to the Cram\'{e}r's Theorem~\cite{dembo2009large} and the fact that $\mathcal{Z}(x^*,k)$ is open and $d_{x^*}(\cdot,\mathcal{M}\setminus\mathcal{I}^*)$ is the corresponding rate function since
  the observation noise $w_i(k)$ is i.i.d. across time and independent across the sensors; the last inequality follows from the definitions of $\mathcal{Z}(x^*,k)$ and $\mathcal{B}_{\beta}(y^*)$. Then, combining with~\eqref{eqn:prob},~\eqref{eqn:prob1}~and~\eqref{eqn:probless}, one concludes the proof.
\end{proof}

\subsection{Achievability}
About the estimator $f^*_{\delta}$ defined in~\eqref{eqn:optestimatork}, we have the following lemma:
\begin{lemma}  \label{lemma:achieve}
  There holds $r(f^*_{\delta},\delta) = u(\delta)$.
\end{lemma}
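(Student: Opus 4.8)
The plan is to pair the upper bound already in hand from Lemma~\ref{lemma:upperbound}, which gives $r(f^*_\delta,\delta)\le u(\delta)$, with a matching \emph{achievability} estimate $r(f^*_\delta,\delta)\ge u(\delta)$; equality is then immediate. Thus the entire task is to show that the worst-case error probability $e(f^*_\delta,k,\delta)$ decays with exponent at least $u(\delta)$. First I would reduce the error event geometrically. Abbreviate $y=\avg(\bm Y(k))$ and $\hat x_k=\chv(\mathbb X(y,\delta))$. Since $\mathbb X(y,\delta)$ is a union of sets $\mathcal X(y,\phi)$ of radius at most $\delta$ and $\rad(\mathcal X(y,\phi))$ is right-continuous (Lemma~\ref{lemma:continuous}), we have $\rad(\mathbb X(y,\delta))\le\delta$, so by definition of the Chebyshev center $\|x'-\hat x_k\|_2\le\delta$ for every $x'\in\mathbb X(y,\delta)$. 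Hence an error $\|\hat x_k-x\|_2>\delta$ forces the true state to satisfy $x\notin\mathbb X(y,\delta)$. Writing $\mathbb X(y,\delta)=\mathcal X(y,\phi^*)$ with $\phi^*=\sup\{\phi:\rad(\mathcal X(y,\phi))\le\delta\}$, exclusion means $d_x(y)>\phi^*\ge u(y,\delta)\ge u(\delta)$, where nonemptiness of $\mathbb X(y,\delta)$ (so that $\hat x_k$ is well defined) is supplied by Assumption~\ref{assumpt:2qobservable}. Therefore $\{\|\hat x_k-x\|_2>\delta\}\subseteq\{d_x(y)>u(\delta)\}$.

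Next I would localize this event to the benign sensors, which is what makes the bound robust to the adversary. Fix a compromised set $\mathcal C$ and put $\mathcal I_0=\M\setminus\mathcal C$, of size $m-q$. Using $d_x(y)=\min_{|\mathcal I|=m-q}d_x(y,\mathcal I)\le d_x(y,\mathcal I_0)$ together with the fact that the benign coordinates are untouched, $y_{\mathcal I_0}=\avg(\bm Z(k))_{\mathcal I_0}$, the inclusion above yields
\begin{align*}
\{\|\hat x_k-x\|_2>\delta\}\subseteq\bigl\{d_x\bigl(\avg(\bm Z(k))_{\mathcal I_0},\mathcal I_0\bigr)>u(\delta)\bigr\}.
\end{align*}
The right-hand event depends only on the benign observation noise, so the resulting probability bound is uniform over all attack strategies $g\in\mathcal G$. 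Moreover $d_x(\avg(\bm Z(k))_{\mathcal I_0},\mathcal I_0)=\tfrac12\sum_{i\in\mathcal I_0}\bar w_i^2/W_i$ with $\bar w_i=\tfrac1k\sum_{t=1}^k w_i(t)\sim\mathcal N(0,W_i/k)$ independent across $i$, whose distribution is $\tfrac1{2k}\chi^2_{m-q}$ and is independent of both $x$ and $\mathcal C$. Consequently the supremum defining $e(f^*_\delta,k,\delta)$ collapses to a single quantity, $e(f^*_\delta,k,\delta)\le\mathbb P\bigl(d_x(\avg(\bm Z(k))_{\mathcal I_0},\mathcal I_0)>u(\delta)\bigr)$.

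Finally I would read off the rate by Cram\'er's theorem. The vector $\avg(\bm Z(k))_{\mathcal I_0}$ is the empirical mean of $k$ i.i.d.\ Gaussian vectors $\mathcal N({\bm H}_{\mathcal I_0}x,{\bm W}_{\{\mathcal I_0\}})$, so its large-deviation rate function is precisely $z\mapsto d_x(z,\mathcal I_0)=\tfrac12(z-{\bm H}_{\mathcal I_0}x)^\top{\bm W}_{\{\mathcal I_0\}}^{-1}(z-{\bm H}_{\mathcal I_0}x)$. Applying the upper-bound half of the principle to the closed set $\{z:d_x(z,\mathcal I_0)\ge u(\delta)\}$, whose infimal rate equals $u(\delta)$ (attained on its boundary), gives $\limsup_{k\to\infty}\tfrac1k\log e(f^*_\delta,k,\delta)\le-u(\delta)$, i.e.\ $r(f^*_\delta,\delta)\ge u(\delta)$. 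Combined with Lemma~\ref{lemma:upperbound} this proves $r(f^*_\delta,\delta)=u(\delta)$; the degenerate case $u(\delta)=0$ is covered trivially since $r\ge0$ always.

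The hard part will be the geometric reduction in the first paragraph, specifically handling the edge cases of Lemma~\ref{lemma:continuous}: when $\rad(\mathbb X(y,\delta))<\delta$ the identification $\mathbb X(y,\delta)=\mathcal X(y,\phi^*)$ and the inequality $\phi^*\ge u(y,\delta)$ need care, and nonemptiness of $\mathbb X(y,\delta)$ must be argued from $2q$-observability. The second delicate point is that the localization must use exactly the benign index set $\mathcal I_0$: bounding $d_x(y)$ by $d_x(y,\mathcal I_0)$ rather than by any other restricted inconsistency is what simultaneously eliminates the attacker's influence and keeps the achieved exponent equal to $u(\delta)$ rather than something strictly larger.
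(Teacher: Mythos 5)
Your proposal is correct and follows essentially the same route as the paper's proof: reduce the error event to $\{d_x(\avg({\bm Y}(k)))\geq u(\delta)\}$ via the radius bound on $\mathbb{X}(y,\delta)$, pass to the benign index set so the event depends only on the unattacked noise, and apply the Cram\'er upper bound to the resulting closed set, matching the bound of Lemma~\ref{lemma:upperbound}. The only cosmetic differences are that you phrase the benign-sensor step through $d_x(y)\leq d_x(y,\mathcal{I}_0)$ and note the exact $\chi^2$ law, whereas the paper works with the set $\mathcal{Y}^*(x)$ and its projection.
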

\begin{proof}
  Notice that, by the definition of $u(\delta)$, for any $x$, $\delta$ and $k$, if $y\in\mathcal{Y}(f^*_{\delta},x,\delta,k)$, then $d_x(y) \geq u(\delta)$. Recall that $\mathcal{Y}(\cdot,\cdot,\cdot,\cdot)$ is introduced in Definition~\ref{def:set}.
  Let 
  \begin{align*}
  \mathcal{Y}^*(x) \triangleq \{y: d_x(y) \geq u(\delta)\}.
  \end{align*}
  Then $\mathcal{Y}(f^*_{\delta},x,\delta,k) \subseteq \mathcal{Y}^*(x)$ holds. Therefore, for any $k$, $x$ and $\mathcal{I}$: 
  \begin{align*}
    & \sup_{g\in\mathcal{G}} \mathbb{P}_{g,x,\mathcal{I}}\left(\avg({\bf Y}(k)) \in \mathcal{Y}(f^*_{\delta},x,\delta,k)\right) \\
    \leq& \sup_{g\in\mathcal{G}} \mathbb{P}_{g,x,\mathcal{I}}\left(\avg({\bf Y}(k)) \in \mathcal{Y}^*(x)\right) \\
    \leq&  \mathbb{P}_{x}\left(\avg({\bf Z}(k))_{\mathcal{M}\setminus \mathcal{I}} \in \mathcal{Y}^*(x)_{\mathcal{M}\setminus \mathcal{I}} \right).
  \end{align*}
 Then similar to~\eqref{eqn:cramer1}, by the Cram\'{e}r's Theorem~\cite{dembo2009large} and the fact that $\mathcal{Y}^*(x)_{\mathcal{M}\setminus \mathcal{I}}$ is closed, one obtains that
  \begin{align*}
    &\liminf_{k\to\infty} \frac{1}{k}	\log \mathbb{P}_{x}\left(\avg({\bf Z}(k))_{\mathcal{M}\setminus \mathcal{I}} \in \mathcal{Y}^*(x)_{\mathcal{M}\setminus \mathcal{I}} \right) \\
    \geq & \inf_{z\in\R^m, z_{\mathcal{M}\setminus \mathcal{I}}\in\mathcal{Y}^*(x)_{\mathcal{M}\setminus \mathcal{I}}} \, d_x(z,\mathcal{M}\setminus\mathcal{I}) \\
    = & \inf_{z\in\mathcal{Y}^*(x)} \, d_x(z,\mathcal{M}\setminus\mathcal{I}) \\
    \geq & \inf_{z\in\mathcal{Y}^*(x)} \, d_x(z) \\
    \geq& u(\delta).
  \end{align*}
  Since the above argument holds for any $x$ and $\mathcal{I}$, one concludes that $r(f^*_{\delta},\delta) \geq u(\delta)$. Furthermore, $r(f^*_{\delta},\delta)$ is upper bounded by $u(\delta)$ due to Lemma~\ref{lemma:upperbound}, the proof is thus complete.
\end{proof}

\section{Proof of Lemma~\ref{lemma:vioassumpt2}}
\label{appendix:vioassumpt2}
Using the same argument as in the proof of Lemma~\ref{lemma:upperbound}, one readily obtains the second bullet of Lemma~\ref{lemma:vioassumpt2} from the first one. Therefore, we focus on the first bullet in the sequel.

The proof is of constructive nature. Since ${\bm H}$ is not $2q$-observable, without loss of generality, we let $H_{\mathcal{I}^*}$ is not of full column rank with $\mathcal{I}^*=\{2q+1,\ldots,m\}$. For any $\delta$,  let $x_1,x_2\in\R^n$ be any two vectors such that 
\begin{align*}
	H_{\mathcal{I}^*} (x_2-x_1) = 0,\: \text{and}\: \|x_2-x_1\| > \delta.
\end{align*} 
Let $\mathcal{I}_1=\{1,\ldots,q\}$ and $\mathcal{I}_2=\{q+1,\ldots,2q\}$. We then construct $y^*$ as follows:
\begin{align*}
	y^*_{\mathcal{I}^*} &= H_{\mathcal{I}^*} x_1, \\
	y^*_{\mathcal{I}_1} &= H_{\mathcal{I}_1} x_1, \\
	y^*_{\mathcal{I}_2} &= H_{\mathcal{I}_2} x_2.
\end{align*}
Then it is easy to verify that $d_{x_1}(y^*)=d_{x_2}(y^*)=0$. The proof is thus complete.

\section{Proofs of Lemmas in Section~\ref{sec:numericalImple}}
\label{proof:lemmas}
\begin{proof}[Proof of Lemma~\ref{lemma:restrictdistance}]
  For any index set $\mathcal{I}$, there holds
  \begin{align*}
    d_x(y,\mathcal{I}) =& \frac{1}{2} (y_{\mathcal{I}} - {\bm H}_{\mathcal{I}} x)^\top {\bm W}_{\{\mathcal{I}\}}^{-1} (y_{\mathcal{I}} - {\bm H}_{\mathcal{I}} x) \\
    =& \frac{1}{2} (\sqrt{{\bm W}_{\{\mathcal{I}\}}^{-1}}y_{\mathcal{I}} - \sqrt{{\bm W}_{\{\mathcal{I}\}}^{-1}}{\bm H}_{\mathcal{I}} x)^\top \\
                         &\: (\sqrt{{\bm W}_{\{\mathcal{I}\}}^{-1}}y_{\mathcal{I}} - \sqrt{{\bm W}_{\{\mathcal{I}\}}^{-1}}{\bm H}_{\mathcal{I}} x),
  \end{align*}
  which holds since ${\bm W}_{\{\mathcal{I}\}}^{-1}$ is a diagonal matrix. For simplicity of notation, in the remainder of this proof, we let $y_w= \sqrt{{\bm W}_{\{\mathcal{I}\}}^{-1}}y_{\mathcal{I}}$ and $H_w = \sqrt{{\bm W}_{\{\mathcal{I}\}}^{-1}}{\bm H}_{\mathcal{I}}$. By orthogonally projecting $y_w$ onto the range of $H_w$ using $H_wH_w^+$, where $H_w^+$ is the pseudo-inverse of $H_w$, one obtains
    \begin{align*}
  d_x(y,\mathcal{I}) =& \frac{1}{2} (y_w-H_wH_w^+y_w + H_wH_w^+y_w - H_w x)^\top \\
  &\: (y_w-H_wH_w^+y_w + H_wH_w^+y_w - H_w x). \addtag \label{eqn:pseduo}
  \end{align*} 
  Notice that $(y_w-H_wH_w^+y_w)$ is orthogonal to $(H_wH_w^+y_w - H_w x)$. 
  Furthermore, $W_i>0$ for each $i$, and by Assumption~\ref{assumpt:2qobservable}, ${\bm H}_{\mathcal{I}}$ is of full column rank for any $\mathcal{I}$ with $|\mathcal{I}|\geq m-2q$, $H_w$ is thus full column rank and $H_w^+=(H_w^\top H_w)^{-1}H_w^\top$. One then obtains~\eqref{eqn:distrewrite} from~\eqref{eqn:pseduo}. The proof is thus complete.
\end{proof}

\begin{proof}[Proof of Lemma~\ref{lemma:cvxopt}]
  It  follows readily from~\cite[Lemma 2.8]{yildirim2006minimum} that a ball $\mathcal{B}_\upsilon(c) \subseteq \R^n$ covers a full dimensional ellipsoid $\{x:(x-x_0)^\top Q (x-x_0)\leq \delta^2 \}$, where $Q\in\R^{n\times n}$ is positive definite and $\delta>0$, if and only if there exists $\tau\geq 0$ such that
 \begin{align}   \label{eqn:sos}
 \tau\begin{bmatrix}
 Q & -Qx_0 & 0 \\
 * & x_0^\top Q x_0 - \delta^2 &0 \\
 0 & 0 & 0
 \end{bmatrix}
 \succcurlyeq
 \begin{bmatrix}
 {\bm I}_n & -c & 0 \\
 -c^\top & -\upsilon^2 &c^\top \\
 0 & c & -{\bm I}_n
 \end{bmatrix}.
 \end{align}
 Also, it is clear that  a ball $\mathcal{B}_\upsilon(c) \subseteq \R^n$ covers a point $x\in\R^n$ if and only if 
 \[(x-c)^\top(x-c)\leq \upsilon^2,\]
 which, by Schur complement, is equivalent to 
 \begin{align} \label{eqn:sos1}
 \begin{bmatrix}
 \upsilon^2 & (x-c)^\top  \\
 * & {\bm I}_n
 \end{bmatrix}
 \succcurlyeq 0.
 \end{align}
 Furthermore, for any  $\phi$ such that $\mathfrak{I}(\phi)$ is not empty, the set $\mathcal{X}(y,\phi)$ is a union of some full dimensional ellipsoids (when the set $\mathfrak{I}_+(\phi)$ is not empty) and some single points (when the set $\mathfrak{I}_0(\phi)$ is not empty). Therefore, one can conclude Lemma~\ref{lemma:cvxopt}.
\end{proof}

\begin{proof}[Proof of Lemma~\ref{lemma:continuous}]
  It holds that $\mathcal{X}(y,\phi_0) \subseteq \mathcal{X}(y,\phi_1)$ for any $y$ and $\phi_0\leq \phi_1$. Therefore, $\rad(\mathcal{X}(y,\phi))$ is monotonically increasing w.r.t. $\phi$.

  Let $\res_{[i]}$ be the $i$-th item of the set $\{\res(\mathcal{I}): \mathcal{I}\subseteq \mathcal{M}, |\mathcal{I}| = m-q\}$ sorted in an ascending order. Then by viewing $\mathcal{X}(y,\phi)$ as a union of ellipsoids as in~\eqref{eqn:unionellipsoid}, one obtains that for any $\phi_0 \in \bigcup_{i=1}^{
    \begin{psmallmatrix}
      m\\q
    \end{psmallmatrix}-1
  }\left(\res_{[i]}, \res_{[i+1]}\right) \bigcup \left(\res_{[\begin{psmallmatrix}
        m\\q
      \end{psmallmatrix}]}, \infty\right)$, where $\begin{psmallmatrix}
    m\\q
  \end{psmallmatrix}$ is the binomial coefficient, the following holds:
  \begin{align*}
    \lim_{\phi\to\phi_0} \mathcal{X}(y,\phi) = \mathcal{X}(y,\phi_0),
  \end{align*}  
  and for any $\phi_0 \in \bigcup_{i=1}^{
    \begin{psmallmatrix}
      m\\q
    \end{psmallmatrix}
  }\res_{[i]}$,
  \begin{align*}
    \lim_{\phi\to\phi_0^+} \mathcal{X}(y,\phi) = \mathcal{X}(y,\phi_0)
  \end{align*}
  holds, where $\phi\to\phi_0^+$ means that $|\phi-\phi_0| \to 0$ and $\phi-\phi_0 >0$. Notice also that $\rad(\mathcal{X}(y,\phi)) =0$ for all $\phi\leq \res_{[1]}$. Therefore, one can conclude the first two bullets of Lemma~\ref{lemma:continuous}.
  
  By the first two bullets, in order to obtain the third one, it suffices to show that when $\phi$ is in any of the $
  	\begin{psmallmatrix}
  m\\q
  \end{psmallmatrix}$ intervals $\bigcup_{i=1}^{
  	\begin{psmallmatrix}
  	m\\q
  	\end{psmallmatrix}-1
  }\left(\res_{[i]}, \res_{[i+1]}\right) \bigcup \left(\res_{[\begin{psmallmatrix}
  	m\\q
  	\end{psmallmatrix}]}, \infty\right
  )$, $\rad(\mathcal X(y,\phi))$ is strictly increasing w.r.t. $\phi$. Notice that when $\phi$ is in any of these intervals,  $\mathfrak{I}_{0}(\phi)$ is empty and $\mathfrak{I}_{+}(\phi)$ remains the same, and, therefore, the optimal solution $\psi^*$ (i.e., the square of $\rad(\mathcal X(y,\phi))$) to the optimization problem in Lemma~\ref{lemma:cvxopt} is strictly increasing w.r.t. $\phi$. The third bullet is thus concluded and, therefore, the proof is complete.
\end{proof}

\section{Proof of Lemma~\ref{theorem:upperbound}}
\label{appendix:upperbound}

Let $x^*,s^*$ be the optimal solution to the optimization problem~\eqref{eqn:optupper}. Further let $x\in\R^n$ be any vector and $\mathcal{I}_0, \mathcal{I}_1$ the two index sets such that $\mathcal{I}_0\cup\mathcal{I}_1=\supp(s^*)$, $|\mathcal{I}_0|\leq q$, and $|\mathcal{I}_1|\leq q$. We then construct the following three quantities $x_0,x_1\in\R^n$ and $y^*\in\R^m$:
\begin{align}
	x_0 &= x-\delta x^*, \:\:
	x_1 =  x+\delta x^*,\label{eqn:con1}\\
	y^*_{\mathcal{M}\setminus\supp(s^*)} &= ({\bm H} x)_{\mathcal{M}\setminus\supp(s^*)},\label{eqn:con2}\\
  y^*_{\mathcal{I}_0} &= ({\bm H} x_0)_{\mathcal{I}_0},\:\:  y^*_{\mathcal{I}_1} = ({\bm H} x_1)_{\mathcal{I}_1}.  \label{eqn:con3}
\end{align}
Then one verifies that 
\begin{align}
	\|x_0-x_1\|_2 &= 2\delta, \label{eqn:distance} \\
	d_{x_0}(y^*) &\leq \bar{u}(\delta),\label{eqn:distance1}\\
	d_{x_1}(y^*) &\leq \bar{u}(\delta). \label{eqn:distance2}
\end{align}
Notice that~\eqref{eqn:distance1} holds because by the definition of $d_{x_0}(y^*)$ (i.e., the optimal value of~\eqref{eqn:opt1}), we have 
\begin{align*}
	d_{x_0}(y^*) &\leq \frac{1}{2}\sum_{i=1}^m (y^*_i - H_ix_0 +a_i)^2/W_i,\\
	&=\bar{u}(\delta)
\end{align*} 
where $a_i=-H_i(x_1-x_0)$ for $i\in\mathcal{I}_1$ and $a_i=0$ for $i\in\mathcal{M}\setminus\mathcal{I}_1$. The equation~\eqref{eqn:distance2} can be obtained in the same manner. 

Therefore, $x_0,x_1\in\mathcal{X}(y^*,\bar{u}(\delta))$ by~\eqref{eqn:distance1}~and~\eqref{eqn:distance2}. Combining~\eqref{eqn:distance}, one then obtains that 
\begin{align} \label{eqn:radiusgreater}
	\rad(\mathcal{X}(y^*,\bar{u}(\delta))) \geq \delta.
\end{align}
Notice that since $\bar{u}(\delta)=\delta^2\bar{u}(1)$, we have for any $\epsilon,\delta>0$, 
\[\bar{u}(\delta)+\epsilon = \bar{u}(\delta'),\]
where $\delta'=\delta\sqrt{(\bar{u}(\delta)+\epsilon)/\bar{u}(\delta)} > \delta$. Then using the same construction technique as in~\eqref{eqn:con1}-\eqref{eqn:con3}, one concludes that, by~\eqref{eqn:radiusgreater}, for any $\epsilon>0$, there exists $y\in\R^m$ such that
\begin{align*}
	\rad(\mathcal{X}(y,\bar{u}(\delta) + \epsilon)) &= \rad(\mathcal{X}(y,\bar{u}(\delta'))) \\
	&\geq \delta' >\delta.
\end{align*}
Therefore, from the definition of $u(\delta)$, one obtains that $u(\delta)\leq \bar{u}(\delta)$ for any $\delta$. The proof is thus complete.

\section{Proof of Theorem~\ref{theorem:uniform}}
\label{appendix:uniform}
The proof is divided into two parts. 

\textit{Part \Rmnum{1}.} We show that for every $y\in\R^m$, 
\begin{align} \label{eqn:sufficiency}
	d_x(y)\geq \bar{u}(|x-\trm(y)|)
\end{align}
holds for every $x\in\R$,
where recall that $\bar{u}(\delta)$ is the upper bound in Lemma~\ref{theorem:upperbound}.

Since the sensors are homogeneous, then without loss of generality, we let $W_i=W/2$ for any $1\leq i\leq m$. Then one obtains that 
\[\bar{u}(\delta) = (m-2q)\delta^2/W.\]
One further obtains that for any $y\in\R^m$ and $x\in\R$,
\begin{align*}
	d_x(y) &\geq \sum_{i=q+1}^{m-q} (y_{[i]}-x)^2/W \\
	&\geq (m-2q)(\frac{1}{m-2q}\sum_{i=q+1}^{m-q} y_{[i]} -x)^2/W \\
	&=\bar{u}(|x-\trm(y)|).
\end{align*}
Therefore,~\eqref{eqn:sufficiency} holds.

\textit{Part \Rmnum{2}.}  Notice that, for any $x,\delta$ and $k$,  if $y\in\mathcal{Y}(f^{\trm},x,\delta,k)$, $|x-\trm(y)|\geq \delta$ holds, where $\mathcal{Y}(\cdot,\cdot,\cdot,\cdot)$ is introduced in Definition~\ref{def:set}. Then \eqref{eqn:sufficiency} yields that  $d_x(y) \geq \bar{u}(\delta)$ for every $y\in\mathcal{Y}(f^{\trm},x,\delta,k)$. 

Using the same argument as in the proof of Lemma~\ref{lemma:achieve}, one obtains that for any $\delta$, 
\[r(f^{\trm},\delta) \geq \bar{u}(\delta) \geq u(\delta).\]
Due to the optimality of $u(\delta)$, the above equation holds as equality. The proof is thus complete.

    %
    %

\bibliographystyle{IEEEtran}
\bibliography{ref_xiaoqiang}

\end{document}